\newcommand{\BPp}[1]{\Big{(}#1\Big{)}}
\newcommand{\numgate}{\# \textrm{gates}}
\newcommand{\numline}{\# \textrm{lines}}
\newcommand{\numlayer}{\# \textrm{layers}}
\newcommand{\numqb}{\# \textrm{qb's}}
\newcommand{\entrideri}{\frac{d}{d\theta}}
\newcommand{\VE}{\qvec{v}}
\newcommand{\Ancil}{ \{A_{j,\qvec{v}}\}}
\newcommand{\Anc}{\{A\}}
\newcommand{\VA}{\qvec{v}\cup \Anc}
\newcommand{\nc}{\newcommand}
\nc{\ketbra}[2]{|#1\rangle\!\langle#2|}
\nc{\braket}[2]{\langle#1|#2\rangle}
\nc{\proj}[1]{| #1\rangle\!\langle #1 |}
\newcommand{\ObsSemMacro}[3]{\sem{(#1,#3)\rightarrow #2}}
\newcommand{\NumNonAbort}[1]{|\# #1|}
\newcommand{\ResCount}[2]{\mathrm{OC}_{#1}(#2)}
\newtheorem{theorem}{Theorem} 
\newtheorem{lemma}[theorem]{Lemma}
\newtheorem{prop}[theorem]{Proposition}
\newtheorem{example}{Example}
\newtheorem{definition}[theorem]{Definition}
\newtheorem{defn}[theorem]{Definition}
\newcommand{\weirdS}{\mathcal{D(\mathcal{H}}_{\overline{v}})}
\newcommand{\udd}[1]{\underline{#1}}
\newcommand{\weirdSwhat}[1]{\mathcal{D}(\mathcal{H}_{#1})}
\newcommand{\QNN}{\texttt{QNN}}
\newcommand{\QAOA}{\texttt{QAOA}}
\newcommand{\VQE}{\texttt{VQE}}
\def\msquare{
\boldgreek{+}}
\newcommand{\Macroadditive}{additive }
\newcommand{\MacroAdditive}{Additive }
\numberwithin{equation}{section}
\numberwithin{theorem}{section}
\numberwithin{example}{section}
\newcommand{\eq}[1]{\hyperref[eqn::#1]{(\ref*{eqn::#1})}}
\renewcommand{\sec}[1]{\hyperref[sec:#1]{Section~\ref*{sec:#1}}}
\newcommand{\cor}[1]{\hyperref[cor:#1]{Corollary~\ref*{cor:#1}}}
\newcommand{\itm}[1]{\hyperref[itm:#1]{\ref*{itm:#1}}}
\newcommand{\fig}[1]{\hyperref[fig:#1]{Figure~\ref*{fig:#1}}}
\newcommand{\calH}{\mathcal{H}}
\newcommand{\nm}[1]{\lVert #1\rVert}
\newcommand{\sem}[1]{[\![ #1 ]\!]}
\newcommand{\bra}[1]{\langle #1 \vert}
\newcommand{\ket}[1]{\vert #1 \rangle}
\newcommand{\tr}[0]{\mathrm{tr}}
\newcommand{\cskip}[0]{{\mathbf{skip}}}
\newcommand{\cabort}[0]{{\mathbf{abort}}}
\newcommand{\ccase}[0]{{\mathbf{case}}}
\newcommand{\rwhile}{\mathbf{while}}
\newcommand{\rwhileT}{\mathbf{while^{(T)}}}
\newcommand{\rwhileTminus}{\mathbf{while^{(T-1)}}}
\newcommand{\qif}[1]{\mathbf{case}~#1~\mathbf{end}}
\newcommand{\qifStandardPara}{\qif{M[\qvec{q}]=\overline{m\to P_{m}(\vth)}}}
\newcommand{\uqif}[1]{\udd{\mathbf{case}}~#1~\mathbf{end}}
\newcommand{\uqifParaStandard}{\uqif{M[\qvec{q}]=\overline{m\to \udd{P_{m}(\vth)}}}}
\newcommand{\uqifParaStandardS}{\uqif{M[\qvec{q}]=\overline{m\to \udd{S_{m}(\vth)}}}}
\newcommand{\qwhileT}[2]{\mathbf{while}^{(T)}~#1~\mathbf{do}~#2~\mathbf{done}}
\newcommand{\qwhileO}[2]{\mathbf{while}^{(1)}~#1~\mathbf{do}~#2~\mathbf{done}}
\newcommand{\qwhileTgT}[2]{\mathbf{while}^{(T\geq 2)}~#1~\mathbf{do}~#2~\mathbf{done}}
\newcommand{\qwhileTParaStandard}{\mathbf{while}^{(T)}~M[\overline{q}]=1~\mathbf{do}~P_1(\vth)~\mathbf{done}}
\newcommand{\uqwhileT}[2]{\udd{\mathbf{while}^{(T)}}~#1~\mathbf{do}~#2~\mathbf{done}}
\newcommand{\uqwhileTParaStandard}{\udd{\mathbf{while}^{(T)}}~M[\overline{q}]=1~\mathbf{do}~\udd{P_1(\vth)}~\mathbf{done}}
\newcommand{\uqwhileTParaStandardS}{\udd{\mathbf{while}^{(T)}}~M[\overline{q}]=1~\mathbf{do}~\udd{S_1(\vth)}~\mathbf{done}}
\newcommand{\vth}{\boldgreek{\theta}}
\newcommand{\weirdPV}{\textbf{q-while}^{(T)}_{\overline{v}}}
\newcommand{\calU}{\mathcal{U}}
\newcommand{\calI}{\mathcal{I}}
\newcommand{\weMV}{\mathcal{O}_{\overline{v}}}
\newcommand{\augweirdPV}{\textbf{q-while}^{(T)}_{\overline{v}}(\vth)}
\newcommand{\detaugweirdCPV}{\textbf{add-q-while}^{(T)}_{\overline{v}}(\vth)}
\newcommand{\detaugweirdCPVA}{\textbf{add-q-while}^{(T)}_{\overline{v}\cup \Ancil} (\vth)}
\newcommand{\detaugweirdCPVa}{\textbf{add-q-while}^{(T)}_{\overline{v}\cup \Anc} (\vth)}
\newcommand{\augweirdCPV}{\textbf{q-while}^{(T)}_{\overline{v}}(\vth)}
\newcommand{\augweirdCPVA}{\textbf{q-while}^{(T)}_{\overline{v}\cup \Ancil} (\vth)}
\newcommand{\augweirdCPVa}{\textbf{q-while}^{(T)}_{\overline{v}\cup \Anc} (\vth)}
\newcommand{\peeeep}{\partial}
\newcommand{\parthetaj}[1]{\frac{\peeeep }{\peeeep \theta_j}#1}
\newcommand{\parthetajgenNotRelation}[1]{\frac{\partial}{\partial\theta_j}(#1)}
\newcommand{\parthetagenNotRelation}[1]{\frac{\partial}{\partial\theta}(#1)}
\newcommand{\parthetajgenNotRelationNopar}[0]{\frac{\partial}{\partial\theta_j}}
\newcommand{\parthetagenNotRelationNopar}[0]{\frac{\partial}{\partial\theta}}
\newcommand{\qvec}[1]{\overline{#1}}
\newcommand{\fparthetaj}[1]{\frac{d}{d \theta_j}#1}
\newcommand{\detqchs}{\Macroadditive parameterized quantum bounded  while-programs}
\newcommand{\detQCHs}{\MacroAdditive Parameterized Quantum Bounded While-Programs}
\newcommand{\compl}{\mathtt{Compile}}
\newcommand{\QCHs}{Parameterized  Quantum Bounded While-Programs}
\newcommand*{\boldgreek}[1]{%
  \textpdfrender{%
   TextRenderingMode=FillStroke,%
    LineWidth=.6pt,%
  }{#1}%
}
\newcommand{\dagg}{^\dagger}
\newcommand{\modify}[1]{{\color{blue}#1}}
\newcommand{\colorful}[2]{{\color{#1}#2}}
\newcommand{\RR}{\mathbb{R}}
\newcommand{\ZZ}{\mathbb{Z}}
\newcommand{\Facebook}[1]{\mathtt{FB}(#1)}
\newcommand{\FacebookBPp}[1]{\mathtt{FB}\BPp{#1}}
\newcommand{\lag}{\langle}
\newcommand{\rag}{\rangle}
\newcommand{\A}{ \N \bf{eg.} \rm}
\newcommand{\D}[0]{\mathcal{D}}
\newcommand{\E}[0]{\mathcal{E}}
\renewcommand{\H}[0]{\mathcal{H}}
\newcommand{\F}[0]{\mathcal{F}}
\newcommand{\StepsTo}[3]{\langle #1,~\rho \rangle \rightarrow \langle #2,~#3 \rangle}
\newif\ifsubmit\submitfalse %%%%% produces appendix.
\newif\iftechrep\techreptrue  %%%%%%%%%%% says "Appendix"
\newcommand{\appdName}{\textrm{Appendix}}         
\newcommand{\appdName}{\textrm{the full version~\cite{zhu20dqplextended}}} 
\newcommand{\khh}[1]{}
\newcommand{\szhu}[1]{}
\newcommand{\shh}[1]{}
\newcommand{\xw}[1]{}
\newcommand{\mwh}[1]{}
{}
\newcommand{\mszhu}[1]{}
\newcommand{\khh}[1]{\textcolor{blue}{\bf Kesha: #1}}
\newcommand{\szhu}[1]{\textcolor{cyan}{\bf Shaopeng: #1}}
\newcommand{\mszhu}[1]{\small{\szhu{#1}}}
\newcommand{\shh}[1]{\textcolor{magenta}{\bf Shih-Han: #1}}
\newcommand{\xw}[1]{\textcolor{purple}{\bf XW: #1}}
\newcommand{\mwh}[1]{\textcolor{green}{\bf Mike: #1}}
\def \quwhile {quantum \textbf{while}}
\def \A {\mathcal{A}}
\begin{document}

%%% The following is specific to PLDI '20 and the paper
%%% 'On the Principles of Differentiable Quantum Programming Languages'
%%% by Shaopeng Zhu, Shih-Han Hung, Shouvanik Chakrabarti, and Xiaodi Wu.
%%%
\setcopyright{acmlicensed}
\acmPrice{}
\acmDOI{10.1145/3385412.3386011}
\acmYear{2020}
\copyrightyear{2020}
\acmSubmissionID{pldi20main-p469-p}
\acmISBN{978-1-4503-7613-6/20/06}
\acmConference[PLDI '20]{Proceedings of the 41st ACM SIGPLAN International Conference on Programming Language Design and Implementation}{June 15--20, 2020}{London, UK}
\acmBooktitle{Proceedings of the 41st ACM SIGPLAN International Conference on Programming Language Design and Implementation (PLDI '20), June 15--20, 2020, London, UK}

%% Title information
%%%\title[Short Title]{Full Title} 
\title{On the Principles of Differentiable Quantum Programming Languages} 
\titlenote{This work was partially funded by the U.S. Department of Energy, Office of Science, Office of Advanced Scientific Computing Research, Quantum Testbed Pathfinder Program under Award Number DE-SC0019040, Quantum Algorithms Team program, the U.S. Air Force Office of Scientific Research MURI grant FA9550-16-1-0082, and the U.S. National Science Foundation grant CCF-1755800, CCF-1816695, and CCF-1942837(CAREER). }

\author{Shaopeng Zhu, \, Shih-Han Hung, \, Shouvanik Chakrabarti, \,Xiaodi Wu}
\affiliation{
 \institution{University of Maryland, College Park, USA}
 }

 \def\titlerunning{On the Principles of Differentiable Quantum Programming Languages}
 \def\authorrunning{S. Zhu, S. Hung, S. Chakrabarti \& X. Wu}
\renewcommand{\shortauthors}{S. Zhu, S. Hung, S. Chakrabarti \& X. Wu}

\begin{abstract}
Variational Quantum Circuits (VQCs), or the so-called quantum neural-networks, are predicted to be one of the most important near-term quantum applications, not only because of their similar promises as classical neural-networks, but also because of their feasibility on near-term noisy intermediate-size quantum (NISQ) machines.
The need for gradient information in the training procedure of VQC applications has stimulated the development of auto-differentiation techniques for quantum circuits.
We propose the first formalization of this technique, not only in the context of quantum circuits but also for imperative quantum programs (e.g., with controls), inspired by the success of differentiable programming languages in classical machine learning. 
In particular, we overcome a few unique difficulties caused by exotic quantum features (such as quantum no-cloning) and provide a rigorous formulation of differentiation applied to bounded-loop imperative quantum programs, its code-transformation rules, as well as a sound logic 
to reason about their correctness. 
Moreover, we have implemented our code transformation in OCaml and demonstrated the resource-efficiency of our scheme both analytically and empirically. 
We also conduct a case study of training a VQC instance with controls, which shows the advantage of our scheme over existing auto-differentiation for quantum circuits without controls.
\end{abstract}

\begin{CCSXML}
<ccs2012>
<concept>
<concept_id>10003752.10003753.10003758.10010626</concept_id>
<concept_desc>Theory of computation~Quantum information theory</concept_desc>
<concept_significance>500</concept_significance>
</concept>
<concept>
<concept_id>10003752.10010124.10010131.10010133</concept_id>
<concept_desc>Theory of computation~Denotational semantics</concept_desc>
<concept_significance>500</concept_significance>
</concept>
<concept>
<concept_id>10003752.10010124.10010131.10010134</concept_id>
<concept_desc>Theory of computation~Operational semantics</concept_desc>
<concept_significance>500</concept_significance>
</concept>
<concept>
<concept_id>10010147.10010257</concept_id>
<concept_desc>Computing methodologies~Machine learning</concept_desc>
<concept_significance>500</concept_significance>
</concept>
</ccs2012>
\end{CCSXML}

\ccsdesc[500]{Theory of computation~Quantum information theory}
\ccsdesc[500]{Theory of computation~Denotational semantics}
\ccsdesc[500]{Theory of computation~Operational semantics}
\ccsdesc[500]{Computing methodologies~Machine learning}

\keywords{quantum programming languages, differentiable programming languages, 
quantum machine learning}  

\maketitle

\section{Introduction} \label{sec::intro}

\vspace{1mm} \noindent \textbf{Background.} Recent years have witnessed the rapid development of quantum computing, 
with practical advances coming from both
research and industry. Quantum programming is one topic that has been actively investigated. Early work on language design~\citep{Om03,
  SZ00, Sabry-Haskel, Se04, AG05} has been followed up recently by several implementations of these languages,
including Quipper \citep{Green2013}, Scaffold \citep{Sca12}, 
LIQUi$\vert\rangle$ \citep{WS2014}, Q\#~\citep{Svore:2018}, and QWIRE \citep{PRZ2017}.
Extensions of program logics have also
been proposed for verification of quantum programs \citep{BJ04, CHADHA200619,
  Baltag2011, Feng:2007, Kaku09, Yin11, YYW17, hung19}. See also 
surveys \citep{Selinger04, Gay:2006, Ying16}.

With the availability of prototypes of quantum machines, especially the recent establishment of quantum supremacy~\cite{google-supremacy}, the research of quantum computing has entered a new stage where near-term Noisy Intermediate-Scale
Quantum (NISQ) computers~\cite{Preskill2018NISQ}, e.g., the 53-qubit quantum machines from Google~\cite{google-supremacy} and IBM~\cite{IBM-53}, become the important platform for demonstrating quantum applications. 
Variational quantum circuits (VQCs)~\cite{NC-VQE, EH18, QAOA}, or the so-called \emph{quantum neural networks}, are predicted to be one of the most important applications on NISQ machines. 
It is not only because VQCs bear a lot of similar promises like classical neural networks as well as potential quantum speed-ups from the perspective of machine learning (e.g., see the survey~\cite{biamonte2017quantum}), but also because VQC is, if not the only, one of the few candidates that can be implemented on NISQ machines. Because of this, a lot of study has already been devoted to the design, analysis, and small-scale implementation of VQCs (e.g., see the survey~\cite{2019arXiv190607682B}). 

Typical VQC applications replace classical neural networks, which are just parameterized classical circuits, by quantum circuits with \emph{classically parameterized unitary gates}. 
Namely, one will have a "quantum" mapping from input to output replacing classical mapping in machine learning applications. 
An important component of these applications is a training procedure which optimizes a \emph{loss function} 
that now depends on the \emph{read-outs} and the \emph{parameters} of VQCs. 

Gradient-based approaches are widely used in the training procedure. However, computing these gradients of loss functions from quantum circuits has a similar complexity of simulating quantum circuits, which is infeasible for classical computation.
Thus, the ability of evaluating these "quantum" gradients efficiently by  quantum computation is critical for the scalability of VQC applications.  

Fortunately, analytical formulas of gradients used in VQCs have been studied by \cite{GM17,PhysRevA.98.062324,EH18, MAKN18, SBVCK18}. In particular, \citet{SBVCK18} proposed the so-called \emph{phase-shift rule} that uses two quantum circuits to compute the partial derivative respective to one parameter for quantum circuits.  
One of the very successful tools for quantum machine learning, called PennyLane~\cite{bergholm2018pennylane}, implemented the phase-shift rule to achieve \emph{auto differentiation} (AD) for the read-outs of quantum circuits. It also integrated automatic differentiation from established machine learning libraries such as TensorFlow or PyTorch for any additional classical calculation in the training procedure.  
However, none of these studies was conducted from the perspective of programming languages and no rigorous foundation or principles have been formalized.  

\vspace{1mm} \noindent \textbf{Motivations.} 
An important motivation of this paper is to provide a \emph{rigorous formalization} of the auto-differentiation technique applied to quantum circuits. 
In particular, we will provide a formal formulation of quantum programs, their semantics, and  the meaning of differentiation of them. We will also study the code-transformation rules for auto-differentiation and prove their correctness. 

As we will highlight below, research on the formalization will encounter many new challenges that have not been considered or addressed by existing results~\cite{GM17,PhysRevA.98.062324,EH18, MAKN18, SBVCK18}. Consider one of the basic requirements, e.g., \emph{compositionality}. 
As we will show, differentiating the composition of quantum programs will necessarily involve running multiple quantum programs on copies of initial quantum states. 
How to represent the collection of quantum programs succinctly and also bound the number of required copies is a totally new question. Among our techniques to address this question, we also need to change the previously proposed construct, e.g., the phase-shift rule~\cite{SBVCK18}, to something different. 

Moreover, we want to go beyond the restriction of quantum circuits. 
Our inspiration comes from classical machine learning examples that demonstrate the advantage of neural-networks with program features (e.g. controls) over the plain ones (e.g., classical circuits), e.g.~\cite{deep-mind, Grefenstette:2015:LTU:2969442.2969444}, 
which is also the major motivation of promoting the 
the paradigm shift from deep learning toward \emph{differentiable programming}. 

Augmenting VQCs with controls, at least for simple ones, is not only feasible on NISQ machines, but also a logical step for the study of their applications in machine learning. 
Therefore, we are inspired to investigate the \textbf{\emph{principles of differentiable quantum imperative languages}} beyond circuits. Indeed, we conduct one such case study in Section~\ref{sec::case}.

\vspace{1mm} \noindent \textbf{Research challenges \& Solutions. } We will rely on a few notations that should be self-explanatory. Please refer to a detailed preliminary on quantum information in Section~\ref{sec:prelim}. 
Let us start with a simple classical program 
\begin{equation}
    \mathrm{MUL} \equiv v_3 =v_1 \times v_2, 
\end{equation}
where $v_3$ is the product of $v_1$ and $v_2$. Consider the differentiation with respect to $\theta$, we have
\begin{align}
\parthetagenNotRelation{\mathrm{MUL}}  \equiv & \quad v_3 = v_1 \times v_2;  \\
  & \quad \dot{v}_3 = \dot{v}_1 \times v_2 + v_1 \times \dot{v}_2, \label{eqn:product}
\end{align}
where $\mathrm{MUL}$ keeps track of variables $v_1, v_2, v_3$ and their derivatives $\dot{v}_1, \dot{v}_2, \dot{v}_3$ at the same time. One simple yet important observation is that 
classical variables $v_1, v_2, v_3$ are real-valued % numbers
and can be naturally differentiated. %while quantum variables are quantum states represented by matrices.

Given that quantum states are represented by matrices, what are the natural quantities to differentiate in the quantum setting? One natural choice from the principles of quantum mechanics is the (classical) read-outs of quantum systems through measurements, which we formulate as the \emph{observable semantics} of quantum programs. 
This natural choice also serves the purpose of gradient computation of loss functions in quantum machine learning, which are typically defined in terms of these read-outs. We directly model the parameterization of quantum programs after VQCs, i.e., each unitary gate becomes \emph{classically parameterized}. 
\footnote{The above modeling of quantum programs is very different from classical ones. It is unclear whether any reasonable analogue of classical chain-rule and forward/backward mode can exist within quantum programs.}

To model the meaning of one quantum program computing the derivative of another, we define the \emph{differential semantics} of programs. 
There is a subtle quantum-unique design choice. The observable semantics of any quantum program will depend on the observable and its input state. 
Thus, any program computing its derivative could potentially depend on these two extra factors. 
We find out this potential dependence is undesirable and propose the strongest possible definition: i.e., one derivative computing program should work for \emph{any pair} of observables and input states. 
We demonstrate that this strong requirement is not only achievable but also critical for the composition of auto differentiation. 

We are ready to describe the technical challenges for the compositionality. Consider the following quantum program: 
\begin{equation}
    \mathrm{QMUL} \equiv U_1(\theta); U_2 (\theta), 
\end{equation}
which performs $U_1(\theta)$ and $U_2(\theta)$ gates sequentially. Note that gate application is matrix multiplication in 
the quantum setting. 
Roughly speaking, if the product rule of differentiation (as exhibited in~\eqref{eqn:product}) remains in the quantum setting, at least symbolically, then one should expect $\parthetagenNotRelation{\mathrm{QMUL}}$ contains
\begin{equation} \label{eqn:qmul}
    \parthetagenNotRelation{U_1(\theta)}; U_2 (\theta) \text{ and } U_1(\theta); \parthetagenNotRelation{U_2 (\theta)}
\end{equation}
two different parts as sub-programs similarly in~\eqref{eqn:product}. 

However, we cannot run $\parthetagenNotRelation{U_1(\theta)}; U_2$ and $U_1(\theta); \parthetagenNotRelation{U_2 (\theta)}$ together due to the quantum \emph{no-cloning} theorem~\cite{no-cloning}.
This is simply because they share the same initial state and we cannot clone two copies of it. % and maintain the correct correlation among different parts. 
Note that this is not an issue classically as we can store all $v_i, \dot{v}_i$ at the same time as in~\eqref{eqn:product}.
As a result, quantum differentiation needs to run \emph{multiple} (sub-)programs on \emph{multiple} copies of the initial state. 

This change poses a unique challenge for differentiation of quantum composition: (1) we hope to have a simple scheme of code transformation, ideally close-to-classical, for intuition and easy implementation of the compiler, whereas it needs to express correctly the collection of quantum programs during code transformation; 
(2) for the purpose of efficiency, we also want to reasonably bound the number of required copies of the initial states, which roughly refers to the number of different quantum programs in the collection. 

We develop a few techniques to achieve both goals at the same time. 
First, we propose the so-called \emph{additive} quantum programs  
as a \emph{succinct} intermediate representation for the collection of programs during the code transformation. Now the entire differentiation procedure will be divided into two steps: 
(1) all code transformations happen on additive programs and are very similar to classical ones (see \fig{CT}) ; 
(2) the collection of programs can be recovered by a compilation procedure from any additive program. 
Additive quantum programs are equipped with a new \emph{sum} operation that models the multiple choices as exhibited in \eqref{eqn:qmul}, which resembles a similar idea in the differential lambda-calculus~\cite{ehrhard2003differential}. 

Second, we also design a new rule for  $\parthetagenNotRelation{U(\vth)}$ which is slightly different from \cite{GM17,PhysRevA.98.062324,EH18, MAKN18, SBVCK18}. 
The existing phase-shift rule makes use of two quantum circuits for one differentiation, which causes a lot of inconvenience in the formulation and potential trouble for efficiency.  
Instead, we use only one extra ancilla as the control qubit to create a superposition of two quantum circuits and effectively achieve the same differentiation with only one quantum circuit. 
We also conduct a careful resource analysis of our differentiation procedure and show the number of required copies of initial states is reasonable comparing to the classical setting. 
The correctness of the code transformation of composition critically relies on our design choice as well as the strong definition related to the differential semantics. 

With the previous setup, we can naturally build the differentiation for quantum controls (i.e., the condition statement). Note that a general solution for classical controls is unknown~\cite{BF94} due to the non-smoothness of the guard. 
Similar to the classical setting~\cite{GP18}, we only provide a solution to deal with bounded loops and leave it open for general ones.  % and leave it open unbounded loops. 
% not have a solution to deal with general unbounded loops. We instead provide a solution to bounded loops which can be deemed as a macro consisting of simple quantum condition statements. 

\vspace{1mm} \noindent \textbf{Contributions.} We formulate the parameterized quantum bounded while-programs with classically parameterized unitary gates modelled after VQCs~\cite{IBM-QE, NC-VQE, QAOA} and their realistic examples on ion-trap machines, e.g.~\cite{zhu2018training}, in Section~\ref{Section::qch}.
 
In Section~\ref{sec::detQCHs}, we illustrate our design of \emph{additive} quantum programs. Specifically, we add the syntax
$P_1 \msquare P_2$ to represent the either-or choice between $P_1$ and $P_2$ in~\eqref{eqn:qmul}.
We formulate its semantics and compilation rules that map additive programs into collections of normal ones for our purpose. 

In Section~\ref{sec::Obssemm}, we formulate the observable and the differential semantics of quantum programs and formally define the meaning of program $S'(\vth)$ computing the differential semantics of $S(\vth)$ in the strongest possible sense. 
%In particular, we require that $S'(\vth)$ should work for any pair of observable and input state, which is critical for the compositionality. 

In Section~\ref{sec::CTRules}, we show that such a strong requirement is indeed achievable by demonstrating the code-transformation rules for the differentiation procedure. Thanks to the use of additive quantum programs, the code transformation is much simplified and as intuitive as classical ones.  
We develop a logic with the judgement $S'(\vth)|S(\vth)$ stating that $S'(\vth)$ computes the differential semantics of $S(\vth)$. We prove it sound
 and use it to show the correctness of the code transformation.

In Section~\ref{sec::resource}, we conduct a resource analysis to further justify our design. We show that the \emph{occurrence count} of parameters capture the extra resource required in both the classical auto-differentiation and our scheme. Hence, our resource cost is reasonable compared with the classical setting. 

Finally, in Section~\ref{sec::case}, we demonstrate the implementation of our code transformation in OCaml and apply it to the training of one VQC instance with controls via classical simulation. 
Specifically, this instance shows an advantage of controls in machine learning tasks, which implies the advantage of our scheme over previous ones that cannot handle controls. We have also empirically verified the resource-efficiency of our scheme on representative VQC instances. 

\vspace{1mm} \noindent \textbf{Related classical work.} There is an extensive study of automatic differentiation (AD) or differentiable programming in the classical setting (e.g., see books \cite{Griewank:2000:EDP:335134,Corliss:2000:ADA:571034}). 
The most relevant to us are those studies from the programming language perspective. 
AD has traditionally been applied to imperative programs in both the forward mode, e.g. ~\cite{Wengert:1964:SAD:355586.364791,Kedem:1980:ADC:355887.355890}, and
the reverse mode, e.g.,~\cite{Speelpenning:1980:CFP:909337}. The famous 
\emph{backpropagation} algorithm~\cite{backprop} is also a special case of reserve-mode AD used to compute the gradient of a multi-layer perceptron. 
AD has also been recently applied to functional programs~\cite{Pearlmutter:2008,  Elliott:2009,Elliott:2018}. 
Motivated by the success of deep learning, there is significant recent interest to develop both the theory and the implementation of AD techniques. Please refer to the survey~\cite{Baydin:2017} and the keynote talk at POPL'18~\cite{GP18} and ~\cite{AP20} for more details.

\section{Quantum Preliminaries} \label{sec:prelim}
We present basic quantum preliminaries here (a summary of notation in Table~\ref{tab:notation}). Details 
are deferred to \appdName\iftechrep~\ref{sec::qprel}\else\fi. 

\begin{table}
  \caption{A brief summary of notation used in this paper}
  \label{tab:notation}
  \begin{tabular}{@{} lll @{}}\\
    \textbf{Spaces} & $\H$, $\A$ & ${L}(\H)$ (Linear operators)\\
    \textbf{States} & (pure states) & $\ket{\psi}, \ket{\phi}$; \\
    && $\ket{0}, \ket{1}, \ket{+},\ket{-}$ \\
  & (density) & $\rho, \sigma$;
  $\ket{\psi}\bra{\psi}$ 
  \\
    \textbf{Operations} & (unitaries) & $U, V,\boldgreek{\sigma}$;\\
    &&$H, X,
    Z,X\otimes X$ 
    \\
  & (superoperators) & $\E, \F$ (general);\\
  &&$\Phi$ (quantum channels)\\
    \textbf{Measurements} & $M$ & $\{M_m\}_m$;\\
    &&
    $\{\ket{0}\bra{0}, \ket{1}\bra{1}\}$ (example)
    \\
    \textbf{Observables} & $O$ & $\sum_m\lambda_m\ket{\phi_m}\bra{\phi_m}$;\\&&
    $\ket{0}\bra{0}-\ket{1}\bra{1}$ (example)\\
    \textbf{Programs}& 
    (no parameters)
    & $P,Q$ \\
    &(parameterized)& $P(\vth),S(\vth)$; \\&&
    $R{\boldgreek{'}}_{\boldgreek{\sigma}}(\theta)$(notable examples)\\
&(additive) & $\udd{P(\vth)},\udd{S(\vth)}
$ ;\\&&$\parthetagenNotRelation{S(\vth)}$ (example)\\
\textbf{Semantics}& (operational)& $\lag P,\rho\rag \to \lag Q,\rho'\rag$ (steps)\\
&(denotational)& $\sem{P}\rho = \rho'$\\
&(observable)& $\ObsSemMacro{O}{P(\vth)}{\rho}$ \\
%&&(example)\\
\textbf{Code 
Process}&(Transform)& $\parthetagenNotRelation{\udd{S(\vth)}}$ (example)\\
&(Compile)& $\compl{(\udd{S'(\vth)})}$ (example)\\
\textbf{Logic}&(Judgement)&$\udd{S'(\vth)}\vert \udd{S(\vth)}$ (example) \\
\textbf{Count}&(Non-Abort)& $\NumNonAbort{\parthetajgenNotRelation{P(\vth)}}$ (example)\\
&(Occurrence)& $ \ResCount{j}{P(\vth)}$ (example)
  \end{tabular}
\end{table}

\subsection{Math Preliminaries}
Let $n$ be a natural number. We refer to the complex vector space $\mathbb{C}^n$ 
as an \emph{$n$-dimensional Hilbert space $\H$}. 
We use 
$\ket{\psi}$ to denote a complex vector in $\mathbb{C}^n$. The Hermitian conjugate of $\ket{\psi}\in \mathbb{C}^n$ is denoted by $\bra{\psi}$. The \emph{inner product} of 
$\ket{\psi}$ and $\ket{\phi}$, defined as the product of $\bra{\psi}$ and $\ket{\phi}$, is denoted by $\langle\psi|\phi\rangle$.
The norm of a vector $\ket{\psi}$ is denoted by $\nm{\ket{\psi}}=\sqrt{\langle\psi|\psi\rangle}$.

We define \emph{operators} as linear maps between Hilbert spaces, which can be represented by matrices for finite dimensions. 
Let $A$ be an operator and its Hermitian conjugate $A^\dag$. 
$A$ is \emph{Hermitian} if $A=A^\dag$. The \emph{trace} of $A$ 
is the sum of the entries on the main diagonal, i.e., $\tr(A)=\sum_i A_{ii}$. 
$\bra{\psi}A\ket{\psi}$ denotes the inner product of 
$\ket{\psi}$ and $A\ket{\psi}$. Hermitian operator $A$ is \emph{positive semidefinite} 
if for all vectors $\ket{\psi}\in\H$,
$\bra{\psi}A\ket{\psi}\geq 0$.

\subsection{Quantum States and Operations}
The state space of a \emph{qubit} 
is a 2-dimensional Hilbert space. 
Two important orthonormal bases of a qubit system are: the \emph{computational} basis with $\ket{0}=(1,0)^\dag$ and $\ket{1}=(0,1)^\dag$; 
the $\pm$ basis, consisting of $\ket{+}=\frac{1}{\sqrt{2}}(\ket{0}+\ket{1})$ and $\ket{-}=\frac{1}{\sqrt{2}}(\ket{0}-\ket{1})$.

A \emph{pure} quantum state is 
a unit vector $\ket{\psi}$.
A \emph{mixed} state, which refers to an ensemble of pure states $\{(p_i,\ket{\psi_i})\}_i$ (each with probability $p_i$),
can be represented by a \emph{density operator} that is a trace-one positive semidefinite operator $\rho=\sum_i p_i\ket{\psi_i}\bra{\psi_i}$;
%i.e., $\ket{\psi_i}$ with probability $p_i$.
%\emph{Density operators}, representing the ensemble $\{(p_i,\ket{\psi_i})\}_i$ is a positive semidefinite operator $\rho=\sum_i p_i\ket{\psi_i}\bra{\psi_i}$
%.
%A positive semidefinite operator $\rho$ on $\H$ is said to be a 
$\rho$ is a \emph{partial} density operator if $\tr(\rho)\leq 1$.
The set of partial density operators on $\H$ is denoted by $\D(\H)$.

Operations on quantum systems can be characterized by unitary operators. Denoting the set of linear operators on $\H$ as $L(\H)$, an operator $U\in L(\H)$ is \emph{unitary} if 
$U^\dag U=UU^\dag=I_\H$.  A unitary \emph{evolves} a pure state $\ket{\psi}$ to $U\ket{\psi}$
, or a density operator $\rho$ to $U\rho U^\dag$. 
Common unitary operators include: 
the \emph{Hadamard} operator $H$, which transforms between the computational and the $\pm$ basis 
via $H\ket{0}=\ket{+}$ and $H\ket{1}=\ket{-}$;
the \emph{Pauli $X$} operator 
which performs a bit flip, i.e., $X\ket{0}=\ket{1}$ and  $X\ket{1}=\ket{0}$; 
\emph{Pauli $Z$} 
which performs a phase flip, i.e., $Z\ket{0}=\ket{0}$ and $Z\ket{1}=- \ket{1}$; 
%\emph{Pauli $Y$} mapping $\ket{0}$ to $i\ket{1}$ and $\ket{1}$ to $-i\ket{0}$; 
\emph{CNOT} gate 
mapping $\ket{00}\mapsto \ket{00},\ket{01}\mapsto \ket{01},\ket{10}\mapsto \ket{11},\ket{11}\mapsto \ket{10}$. 
More generally, evolution of a quantum system can be characterized by an \emph{admissible superoperator} $\E$, 
namely a \emph{completely-positive} and \emph{trace-non-increasing} %(See \appdName\iftechrep ~\ref{sec::qprel}\else\fi) 
linear map from $\D(\H)$ to $\D(\H')$. % for Hilbert spaces $\H, \H'$. 

For every superoperator $\E 
$, there exists a set of \emph{Kraus operators} 
$\{E_k\}_k$ such that $\E(\rho)=\sum_k E_k\rho E_k^\dag$ for any input $\rho\in\D(\H)$.
The \emph{Kraus form} of $\E$ is 
therefore $\E=\sum_k E_k\circ E_k^\dag$.
%An identity operation refers to the superoperator $\mathcal{I}_{\H} = I_{\H} \circ I_{\H}$.
The Schr\"odinger-Heisenberg \emph{dual} of a superoperator $\E=\sum_k E_k\circ E_k^\dag$, denoted by $\E^*$, is defined as follows: for every state $\rho\in\D(\H)$ and any operator $A$, $\tr(A\E(\rho))=\tr(\E^*(A)\rho)$. The Kraus form of $\E^*$ is $\sum_k E_k^\dag\circ E_k$. 

\vspace{-1mm}
\subsection{Quantum Measurements} 
Quantum \emph{measurements} extracts classical information out of  quantum systems.
A quantum measurement on a system over Hilbert space $\H$ can be
described by a set of linear operators 
$\{M_m\}_m$ with $\sum_m M_m^\dag M_m=I_\H$ (identity matrix on $\H$). 
If we perform a measurement $\{M_m\}_m$ on a state $\rho$, the outcome $m$ is observed with probability $p_m=\tr(M_m\rho M_m^\dag)$ for each $m$, and the post-measurement state collapses to $M_m\rho M_m^\dag/p_m$.

\vspace{-1mm}
\section{\QCHs}\label{Section::qch} 
We adopt the \emph{bounded-loop} variant of the \quwhile-language developed by \citet{Ying16}, 
and augment it by parameterizing the unitaries, as this provides sufficient expressibility for parameterized quantum operations: indeed, ${\cabort}$, ${\cskip}$ and initialization behave 
independently of parameters, while ``parameterized measurements'' can be implemented with a regular measurement followed by a parameterized unitary.

From here onward, 
$\qvec{v}$ is a finite set of variables, and $\vth$ a length-$k$  vector of real-valued parameters. 

\subsection{Syntax}
 Define $\mathit{Var}$ as the set of quantum variables. 
We use the symbol $q$ as a metavariable ranging over quantum variables and define a \emph{quantum register} $\overline{q}$ to be a finite set of distinct variables.
For each $q\in\mathit{Var}$, its state space is denoted by $\mathcal{H}_q$.
The quantum register $\overline{q}$ is associated with the Hilbert space $\H_{\overline{q}}=\bigotimes_{q\in\overline{q}}\H_q$.
\footnote{If $\mathrm{type}(q) =$ \textbf{Bool} then $\H_q= \mathrm{span}\{\ket{0}, \ket{1}\}$.
If $\mathrm{type}(q) =$ \textbf{Bounded Int} then $\H_q$ is with basis $\{\ket{n} : n \in [-N,N]\}$ $(N\in\ZZ^+)$ for some finite $N$. We require the Hilbert space to be finite dimensional for implementation.} 
A \emph{$T$-bounded, $k$-parameterized quantum \textbf{while}-program} is generated by the following syntax: 
\vspace{-1mm}
\begin{align}
 {P}({\vth}) \enskip ::= & \quad {\cabort}[\qvec{q}]\enskip| \enskip {\cskip}[\qvec{q}] 
                \enskip | \enskip {q:=\ket{{0}}} 
                \enskip | \enskip {\qvec{q}:=U(\vth)[\qvec{q}] }
                \enskip | \nonumber \\
                & \enskip {P_1}(\vth);{P_2}(\vth) \enskip | \nonumber  \quad \qif{M[\qvec{q}]=\overline{m\to {P_{m}}({\vth})}}
                \enskip |
                 \nonumber \\
              &\enskip \qwhileT{M[\qvec{q}]=1}{{P_1}({\vth})},\nonumber \enskip 
\end{align}
\textrm{ where }
\begin{eqnarray}
  && \qwhileO{M[\qvec{q}]=1}{P_1(\vth)}\nonumber\\ &\equiv&
    \mathbf{case}\ M[\overline{q}]=
       \{ 0\to   \cskip, \enskip
        1\to   P_1(\vth); \cabort\},  \label{eqn:qwhileTStandarddefn}
 \end{eqnarray}
\begin{eqnarray}
   &&  \qwhileTgT{M[\qvec{q}]=1}{P_1(\vth)}\nonumber\\ &\equiv&
     \mathbf{case}\ M[\overline{q}]=
       \{ 0\to   \cskip, \enskip
        1\to   P_1(\vth); {\mathbf{while}^{(T-1)}}\}\nonumber
\end{eqnarray}
\emph{Unparameterized programs} can be obtained by fixing $\vth^*\in\RR^k$ in some $P(\vth)$. We denote the set of variables accessible to 
$P(\vth)$ as qVar$(P(\vth))$; 
the collection of all ``$T$-bounded 
while-programs $P(\vth)$ s.t. qVar$(P(\vth))=\qvec{v}$'' as $\augweirdCPV$, and similarly, the unparameterized one 
as $\weirdPV$.

Now let us formally define \emph{parameteriztion} of unitaries: let $\vth:=(\theta_1,\cdots,\theta_k)$ $(k\geq 1)$.
 A \emph { $k$-parameterized unitary} $U(\vth)$ is a function $\RR^k\to L(\H)$ s.t. (1) given any $\vth^*\in \RR^k$, 
    $U(\vth^*)$ is an unitary on $\calH$ 
    , and
 (2)  the parameterized-matrix representation of $U(\vth) $ is entry-wise smooth. 
    
A important family is the 
\emph{single-qubit rotations} about the Pauli axis $X,Y,Z$ with angle $\theta$ (matrix exponential here): 
\vspace{-1mm}
\begin{equation}\label{eg::SpecialRota}
\left.R_{\boldgreek{\sigma}}(\theta):= \exp \left(\frac{-i\theta}{2}\boldgreek{\sigma}\right), \boldgreek{\sigma}\in \{X,Y,Z\}\right. .
\end{equation}
One can also extend Pauli rotations to multiple qubits. For example, consider \emph{two-qubit coupling} gates $\{R_{\sigma\otimes \sigma}:=\exp(\frac{-i\theta}{2}\sigma\otimes \sigma)\}_{\sigma\in \{X,Y,Z\}}$, which generate entanglement between two qubits. Combined with single-qubit rotations, they form a \emph{universal gate set} for quantum computation. 
Another important feature of these gates is that they can already be reliably implemented in such as ion-trap quantum computers~\cite{zhu2018training}.  

As a result, we will work mostly with these gates in the rest of this paper. However, 
note that one can easily add and study other parameterized gates in our framework as well. 

The language constructed 
above 
is similar to their classical counterparts.
(0) $\cabort$ terminates the program, outputting $\mathbf{0}\in\weirdSwhat{\qvec{q}}$. (1) $\cskip$ does nothing to states in $\weirdSwhat{\qvec{q}}$.
(2) $q:=\ket{0}$ sets quantum variable $q$ to the basis state $\ket{0}$. The underlying quantum procedure is to apply super-operators $\E_{q \rightarrow 0}^{\mathrm{bool}}(\cdot)$ (or $\E_{q \rightarrow 0}^{\mathrm{int}}(\cdot)$)\footnote{$\E_{q \rightarrow 0}^{\mathrm{bool}}(\rho) = \ket{0}_q\bra{0}\rho\ket{0}_q\bra{0} + \ket{0}_q\bra{1}\rho\ket{1}_q\bra{0}$ and  $\E_{q \rightarrow 0}^{\mathrm{B-int}}(\rho)=\sum_{n=-N}^{N}\ket{0}_q\bra{n}\rho\ket{n}_q\bra{0}$ ($N\in\ZZ^+$).} to $q$ and identity operations to the rest of variables.  The correlation between $q$ and the rest of quantum variables could be potentially disturbed. 
(3) for any $\vth^*\in \RR^k$, $\overline{q}:=U(\vth^*)[\overline{q}]$ applies the unitary $U(\vth^*)$ to the qubits in $\overline{q}$.
(4) Sequencing has the same behavior as its classical counterpart.
(5) for $\vth^*\in \RR^k$, $\qif{M[\overline{q}]=\overline{m \to P_m(\vth^*)}}$ performs the measurement $M = \{M_m\}$ on the qubits in $\overline{q}$, and executes program $P_m(\vth^*)$ if the outcome of the measurement is $m$.
The bar over $\overline{m \to P_m}$ indicates that there may be one or more repetitions of this expression.
(6) 
$\qwhileT{M[\overline{q}]=1}{P_1(\vth^*)}$ performs the measurement $M = \{M_0, M_1\}$ on $\overline{q}$, and terminates if the outcome corresponds to $M_0$, or executes $P_1(\vth^*)$ then 
reiterates ($T\geq 2$) / aborts ($T=1$) otherwise. 
The program iterates at most $T$ times. 

We highlight two differences between quantum and classical while languages:
(1) Qubits may only be initialized to the 
state $\ket{0}$. There is no quantum analogue for initialization to any expression (i.e. $x:=e$) 
due to the no-cloning theorem of quantum states.
Any state $\ket{\psi} \in \H_q$, however, can be constructed by applying some unitary $U$ to $\ket{0}$.
(2) Evaluating the guard of a case statement or loop, which performs a measurement, potentially disturbs the state of the system. 

\vspace{-2mm}
\subsection{Operational and Denotational Semantics}
 We present the \emph{operational semantics} of parameterized programs in Figure \ref{fig:opsem:pardet}. Transition rules are represented as 
 $\langle P,~\rho \rangle$ $\to $ $\langle P',~\rho' \rangle$, where $\langle P,~\rho \rangle$ and $\langle P',~\rho' \rangle$ are quantum \emph{configurations}.\footnote{Recall that, fixing arbitrary $\vth^*\in\RR^k$, 
 both semantics reduce to those of unparameterized programs, so for compactness we write $P$ for $P(\vth^*)$, etc. } 
In configurations, $P$ (or $P'$) could be a quantum program or the
empty program 
$\downarrow$, and $\rho$ and $\rho'$ are partial density
operators representing the current state. 
Intuitively, in one step, we can evaluate program $P$ on input state $\rho$ to program $P'$ (or 
$\downarrow$) and output state $\rho'$.
In order to present the rules in a non-probabilistic manner, the probabilities associated with each transition are encoded in the output partial density operator.
%\footnote{If we had instead considered a probabilistic transition system, then
%one could write $\StepsToP{p_m}{\qif{M[\overline{q}]=\overline{m\to P_m}}}{\rho}{P_m}{\rho_m}$
%where $p_m = \tr(M_m\rho M_m^\dagger)$ and $\rho_m = M_m\rho M_m^\dagger / p_m$.} 
For each index $m$ of branches in a loop/control statement, the superoperator $\E_m$ is defined by $\E_m(\rho)=M_m\rho M\dagg_m$, 
yielding the post-measurement state.

%In the \emph{Initialization} rule, the superoperators  $\E_{q \rightarrow 0}^{\mathrm{bool}}(\rho)$ and $\E_{q \rightarrow 0}^{\mathrm{int}}(\rho)$, which initialize the variable $q$ in $\rho$ to $\ket{0}\bra{0}$, are defined by 
%$\E_{q \rightarrow 0}^{\mathrm{bool}}(\rho) = \ket{0}_q\bra{0}\rho\ket{0}_q\bra{0} + \ket{0}_q\bra{1}\rho\ket{1}_q\bra{0}$ and  $\E_{q \rightarrow 0}^{\mathrm{B-int}}(\rho)=\sum_{n=-N}^{N}\ket{0}_q\bra{n}\rho\ket{n}_q\bra{0}$ ($N\in\ZZ^+$). 
%Here, $\ket{\psi}_q\bra{\phi}$ denotes the outer product of states $\ket{\psi}$ and $\ket{\phi}$ associated with variable $q$; that is, $\ket{\psi}$ and $\ket{\phi}$ are in $\H_q$ and $\ket{\psi}_q\bra{\phi}$ is a matrix over $\H_q$. 
%It is 
%conventional in the quantum information literature that when operations or measurements only apply to part of the quantum system (e.g., a subset of quantum variables of the program), one should assume that an identity 
%is applied to the rest of quantum variables.  
%For example, applying $\ket{\psi}_q\bra{\phi}$ to $\rho$ means applying $\ket{\psi}_q\bra{\phi}\otimes I_{\H_{\bar{q}}}$ to $\rho$, where $\bar{q}$ denotes the set of all variables except $q$. 

We present the \emph{denotational semantics} of parameterized programs in \ref{fig:desem:pardet}, 
defining $\sem{P}$ as a superoperator 
on
$\rho \in \H_{\VE}
$~\cite{Ying16}. 
For more details 
we refer the reader to \citet{Yin11,Ying16}. 
   
We have the following connection between the denotational semantics and operational for parameterized programs: in short, the meaning of running program $P(\vth^*)$ on
input state $\rho$ and any $\vth^*\in\RR^k$ is the sum of all possible output states with multiplicity, weighted
by their probabilities.
\begin{prop}[\cite{Ying16}]\label{prop::detparadenosemwelldefn}$\forall P(\vth)\in\augweirdCPV$, and any specific $\vth^*\in \RR^k$, $\rho\in\weirdS$,
\begin{equation} \label{eqn:dsemantics_while}
    \sem{P(\vth^*)}(\rho)
    = 
    \sum \{|{\rho'}:(P(\vth^*),\rho)\to^* (\downarrow,\rho')|\}.
\end{equation}
 Here $\rightarrow^*$ is the reflexive, transitive closure of $\rightarrow$ and $\{ \vert \cdot \vert \}$ denotes a \textbf{multi-set}.
\end{prop}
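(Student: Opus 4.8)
The plan is to proceed by structural induction on the syntax of $P(\vth)$, noting first that fixing $\vth^*\in\RR^k$ collapses every parameterized construct into an ordinary unparameterized program, so it suffices to establish~\eqref{eqn:dsemantics_while} for the unparameterized while-language. Throughout, I read the right-hand side as a multiset-valued accumulation of the partial density operators reached along every terminating operational path, and at each construct I verify that the transition rules of Figure~\ref{fig:opsem:pardet} reproduce exactly the superoperator assigned by the denotational rules of Figure~\ref{fig:desem:pardet}.

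For the base cases both sides are read off directly. For $\cabort$ the single transition sends $(\cabort,\rho)$ to $(\downarrow,\mathbf{0})$, matching $\sem{\cabort}(\rho)=\mathbf{0}$; for $\cskip$ the unique path yields $\rho$; for the initialization $q:=\ket{0}$ and the unitary application $\overline{q}:=U(\vth^*)[\overline{q}]$ the single operational step emits precisely $\E_{q\to0}^{\mathrm{bool}}(\rho)$ (resp.\ $\E_{q\to0}^{\mathrm{int}}(\rho)$) and $U(\vth^*)\rho U(\vth^*)\dagg$, which are the assigned superoperators. In each base case the terminating multiset is a singleton, so the stated sum is immediate.

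The inductive steps require an auxiliary \emph{sequencing lemma}: every terminating trace $(P_1;P_2,\rho)\to^*(\downarrow,\rho')$ factors uniquely as $(P_1,\rho)\to^*(\downarrow,\sigma)$ followed by $(P_2,\sigma)\to^*(\downarrow,\rho')$, so the terminating multiset of $P_1;P_2$ is obtained by running $P_2$ on each element of the terminating multiset of $P_1$ and taking the multiset union. Combined with the induction hypotheses for $P_1,P_2$ and the linearity of $\sem{P_2}$, this gives $\sem{P_1;P_2}(\rho)=\sem{P_2}(\sem{P_1}(\rho))$, matching the denotational composition rule. For the case statement $\qifStandard$, the measurement rule branches the configuration into the $(P_m,\E_m(\rho))$ with $\E_m(\rho)=M_m\rho M_m\dagg$; applying the induction hypothesis to each branch and summing over $m$ reproduces $\sum_m\sem{P_m}(M_m\rho M_m\dagg)$, the denotational meaning of the case statement. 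The bounded loop is handled by a nested induction on the loop bound $T$ via the syntactic unfoldings~\eqref{eqn:qwhileTStandarddefn}: the base $T=1$ is a case statement whose branches are $\cskip$ and $P_1(\vth^*);\cabort$, while for $T\ge 2$ the loop unfolds to a case statement one of whose branches is $P_1(\vth^*);\mathbf{while}^{(T-1)}$, to which both the outer structural hypothesis and the inner hypothesis on $T-1$ apply.

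I expect the main obstacle to be the bookkeeping in the sequencing lemma and the measurement branch, where one must track that the probabilities are faithfully encoded in the (unnormalized) partial density operators rather than carried separately: the multiset union on the operational side must be shown to agree with ordinary operator addition on the denotational side, and one must confirm that aborting or non-terminating paths contribute $\mathbf{0}$ and hence do not disturb the sum. Once the sequencing lemma and the correct accounting of measurement-induced branching are established, the remaining algebra is routine and the induction closes.
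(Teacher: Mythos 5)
Your sketch is correct and is essentially the standard argument: the paper itself does not prove this proposition but imports it from \cite{Ying16}, where it is established by exactly the structural induction you describe, with the factorization (sequencing) lemma for composition, multiset union over measurement branches, and linearity of the superoperators reconciling multiset accumulation with operator addition. Your adaptation of the loop case to a nested induction on the bound $T$ via the macro unfolding \eqref{eqn:qwhileTStandarddefn} is the right way to specialize Ying's unbounded-loop treatment (which requires least fixed points) to the bounded setting used here, and the finiteness of all traces under bounded loops disposes of the convergence issues you flag.
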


\begin{figure}
  \small
  \begin{subfigure}[b]{0.5\textwidth}
    \begin{align*}
    \text{(Abort)}&\enskip \infer[]{\StepsTo{\cabort[\qvec{q}]}{\downarrow}{\mathbf{0}}}{} \\
      \text{(Skip)}&\enskip\infer[]{\StepsTo{\cskip[\qvec{q}]}{\downarrow}{\rho}}{} \\
      \text{(Init)}&\enskip\infer[]{\StepsTo{q:=\ket{0}}{\downarrow 
      }{\rho_0^q}}{}\\
     & \text{where}~\rho_0^q = \begin{cases}
                   \E_{q \rightarrow 0}^{\mathrm{bool}}(\rho)  & \text{if $type(q)$ = \textbf{Bool}}\\
                   \E_{q \rightarrow 0}^{\mathrm{B-int}}(\rho) & \text{if $type(q)$ = \textbf{Bdd Int}}
              \end{cases}  \\
      \text{(Unitary)}&\enskip\infer[]{\StepsTo{\overline{q}:=U(\vth^*)[\overline{q}]}{\downarrow 
      }{U(\vth^*)\rho U^\dagger(\vth^*)}}{}\\
      \text{(Sequence)}&\enskip \infer[]{\StepsTo{P_1(\vth^*);P_2(\vth^*)}{P_1'(\vth^*);P_2(\vth^*)}{\rho'}}{\StepsTo{P_1(\vth^*)}{P_1'(\vth^*)}{\rho'}}\\
      \text{(Case $m$)}&\enskip \infer[]{\lag \qif{M[\overline{q}]=\overline{m\to P_m(\vth^*)}},\rho \rag \to }{}\\
      & \lag {P_m(\vth^*),}{
      \E_m(\rho)}\rag \text{
      , $\forall $ outcome $m$ of 
      $M = \{ M_m \}$}\\
      \text{(While$^{(T)}$ 0)}&\enskip \infer[]{
      \lag \qwhileT{M[\overline{q}]=1}{P_1(\vth^*)},\ \rho \rag\to 
      }{} \\
      & \lag {
      \downarrow 
      ,\ }{
      \E_0(\rho)
      }\rag \\
      \text{(While$^{(T)}$ 1)}&\enskip \infer[]{
      \lag \qwhileT{M[\overline{q}]=1
      }{P_1(\vth^*)},\rho \rag \to
      }{}\\
      &\lag {P_1(\vth^*);  
      \rwhile^{(T-1)}, }{
      \E_1(\rho)}\rag 
    \end{align*}
    \caption{}
    \label{fig:opsem:pardet}
  \end{subfigure}

  \begin{subfigure}[b]{0.5\textwidth}
    \begin{align*}
      \begin{array}{rcl}
        \sem{\cabort[\qvec{q}]}\rho
         &= & \enskip \mathbf{0} \\
        \sem{\cskip[\qvec{q}]}\rho &=& \enskip \rho\\
        \sem{q:=\ket{0}}\rho  &= &\E_{q \rightarrow 0}^{\mathrm{bool}}(\rho) \text{ or } \E_{q \rightarrow 0}^{\mathrm{B-int}}(\rho)\\
        \sem{\overline{q}:= U(\vth^*)[\overline{q}]}\rho  &=&  \enskip U(\vth^*)\rho U^\dag(\vth^*) \\
        \sem{P_1(\vth^*);P_2(\vth^*)}\rho &= &\sem{P_2(\vth^*)}(\sem{P_1(\vth^*)}\rho) \\
%        & \\
        \sem{\qif{M[\overline{q}]=\overline{m\to P_m(\vth^*)}}}\rho &=& \sum_m \sem{P_m(\vth^*)}
        \E_m(\rho) \\
        \sem{\qwhileT{M[\overline{q}]=1}{P_1(\vth^*)}}\rho  & =&   \sum_{n=0}^{T-1}{\mathcal{E}_0}\circ \\ & & (\sem{P_1(\vth^*)}\circ {\mathcal{E}_1})^{n} (\rho)
     \end{array}
    \end{align*}
    \caption{}
    \label{fig:desem:pardet}
  \end{subfigure}

  \caption{
  Parameterized $T$-bounded \quwhile ~ programs: (a) operational semantics
  (b) denotational semantics. 
%  We instantiate $\vth^*\in\RR^k$ since choice of parameters affects choices made at control gates. ``\textbf{Bdd}'' stands for \textbf{Bounded}.
  }\label{fig::semPardet}
\end{figure}

We close the section with a 
notion arising from the following observation: some programs, while syntactically not ``$\cabort[\qvec{q}]$'', semantically aborts. Simple examples include $U(\vth);$ $\cabort$ or a case sentence that has $\cabort$ on each branch. 
These programs essentially don’t contribute to the finite computation output, as 
semantically aborted programs always result in \textbf{zero} output state $\mathbf{0}$. 

We formalize this concept (essential-abortion for unparameterized programs may be analogously defined) so that the compilation of our programs could be optimized: 
\begin{defn}[``Essentially Abort'']\label{defn::EssAbort} Let $P(\vth)\in \augweirdPV$. $P(\vth)$ ``\emph{essentially aborts}'' if one of the following holds:
\begin{enumerate}
    \item $P(\vth)\equiv \cabort[\qvec{q}]$;
    \item $P(\vth)\equiv P_1(\vth);P_2(\vth)$, and either $P_1(\vth)$ or $P_2(\vth)$ essentially aborts;
    \item $P\equiv\qifStandardPara$, and each $P_m(\vth)$ essentially aborts.%;
\end{enumerate}
 \end{defn}

\section{\detQCHs}\label{sec::detQCHs}
We introduce a variant of \emph{\Macroadditive}quantum programs as a succinct way to describe 
the collection of programs that are necessary to compute the derivatives. 
To that end, we introduce our design of the syntax and the semantics of 
\Macroadditive quantum programs as well as a compilation method that 
turns any \Macroadditive quantum program into a collection of normal programs for the actual computation of derivatives. 

 \subsection{Syntax}
We adopt the convention to use underlines to indicate \Macroadditive programs, such as $\udd{P(\vth)}$, to distinguish from normal program $P(\vth)$. The syntax of $\udd{P(\vth)}$ is given by 
\begin{align*}
 \udd{P(\vth)} \enskip ::= & \quad \udd{\cabort}[\qvec{q}]\enskip| \enskip\udd{\cskip}[\qvec{q}]
                \enskip | \udd{\enskip q:=\ket{{0}} }
                \enskip | \enskip \udd{\qvec{q}:=U(\vth)[\qvec{q}]} 
                \enskip |\\
                &\enskip \udd{P_1(\vth)};\udd{P_2(\vth)} \enskip | \enskip  \uqif{M[\qvec{q}]=\overline{m\to \udd{P_{m}(\vth)}}}
                \enskip |\\
                &\enskip \uqwhileT{M[\qvec{q}]=1}{\udd{P_1}(\vth)}\enskip |\enskip  \udd{P_1(\vth)}\ \msquare\ \udd{P_2(\vth)},
\end{align*}
where the only new syntax $\msquare$ is the \emph{\Macroadditive} choice. 
Intuitively, $\udd{P_1(\vth)}\ \msquare\ \udd{P_2(\vth)}$ allows the program to either execute $\udd{P_1(\vth)}$ or $\udd{P_2(\vth)}$ nondeterminisitcally. 
%This intuition will be implemented by the definition of its operational and denotational semantics. 
The denotational semantics will include all possible execution traces. We assume $\msquare$ has  lower precedence order than composition, and is left associative.\footnote{E.g., $\udd{X}\ \msquare\ \udd{Y};\udd{Z} = \udd{X}\ \msquare\ (\udd{Y};\udd{Z})$, $X\msquare Y\msquare Z:= (X\msquare Y)\msquare Z$.} If $
   \udd{P(\vth)}= \udd{P_1(\vth)}\ \msquare\ \udd{P_2(\vth)}
   $, then  {qVar}$(\udd{P(\vth)}) \equiv \textrm{qVar}(\udd{P_1(\vth)})\cup\textrm{qVar}(\udd{P_2(\vth)})$. 
 Denote the collection of all non-deterministic $\udd{P(\vth)}$ s.t. qVar$(\udd{P(\vth)})=\qvec{v}$ as $\detaugweirdCPV $.

 \subsection{Operational and Denotational Semantics}\label{subsec::OperationalDenotationalNonDet}
 We exhibit 
 operational semantics in Figure \ref{fig::semParNondet}  
 and define a similar denotational semantics for any $\udd{P(\vth^*)}\in$ 
 $ \detaugweirdCPV$.

\begin{figure}[t]
 \small
    \begin{align*}
      \text{\colorful{blue}{(Sum Components)}} &\enskip\infer[]{\StepsTo{\udd{P_1(\vth^*)}\ \msquare\ \udd{P_2(\vth^*)}}{\udd{P_{1}(\vth^*)}
      }{\rho},}{}\\
      &\enskip \StepsTo{ \udd{P_1(\vth^*)}\ \msquare\ \udd{P_2(\vth^*)}}{ \udd{P_{2}(\vth^*)}}{\rho}
    \end{align*}
  \caption{
 \detqchs: operational semantics.
   We fix $\vth^*\in\RR^k$ and inherit all the other rules from parameterized programs in Fig. \ref{fig:opsem:pardet}.
  }
  \label{fig::semParNondet}
\end{figure}

\begin{defn}[Denotational Semantics]\label{defn::DenoteSemNonDet} Fix $\vth^*$. $\forall \rho\in \weirdS$, %\in\RR^k$, 
 \begin{equation} \label{eqn:sum}
     \sem{\udd{P(\vth^*)}}(\rho) \equiv \{|\rho': \lag\udd{P(\vth^*)},\rho\rag\to^*\lag \downarrow, \rho'\rag|\}. 
 \end{equation}
\end{defn}
Note that there is no sum in (\ref{eqn:sum}) compared with (\ref{eqn:dsemantics_while}). This is because we want to capture the behavior of $\msquare$ by storing all possible execution traces in a multi-set. This resembles the idea of the sum operator in the differential lambda-calculus~\cite{ehrhard2003differential}.
% now explain the name of $\msquare$. Intuitively, the (S-C) rule in \fig{:semParNondet} provides different choices of paths which will be eventually summed up in (\ref{eqn:sum}). 
%Interestingly, some slight change of our (S-C) rule, e.g., in the study of non-deterministic quantum programs~\cite{ying2018reasoning}, can also be used to describe the interlacing behavior between different components in parallel quantum programs. 

\subsection{Compilation Rules}

We exhibit the compilation rules in Figure \ref{fig::compileReal} as a way to transform an \Macroadditive program $\udd{P(\vth)}$ into a multiset of normal programs. 
The compiled set of programs will be later used in the actual implementation of the differentiation procedure. 
%We will show, in Section~\ref{sec::resource}, that our compilation procedure is also efficient in the sense that the total number of non-essentially-aborting programs in $\compl{(\udd{P_m(\vth)})}$ is reasonably bounded, in particular for the ``fill and break'' algorithm ($\Facebook{\bullet}$) as the compilation of the case statement. 
Our compilation rule is also well-defined 
as it is compatible with the denotational semantics and operational semantics of $\udd{P(\vth)}$ in the following sense: 

 \begin{prop}\label{prop::WellDefined} Denoting with $\coprod$ the union of multisets, then for any $
 {\rho}\in\weirdS$, 
 \begin{align} \label{eqn::compilation_wd}
 \{|\rho': \rho'\neq \mathbf{0}, \rho'\in \sem{\udd{P(\vth^*)}
 }
 {\rho}|\} = &\nonumber \\ \coprod_{Q(\vth)\in\compl{(\udd{P(\vth)})}
 }\{|\rho'\neq \mathbf{0}:\lag{Q(\vth^*)},
 {\rho}\rag\to^* \lag \downarrow, \rho'\rag |\}. &
 \end{align} 
 \end{prop}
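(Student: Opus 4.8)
The plan is to prove~\eqref{eqn::compilation_wd} by structural induction on the \Macroadditive program $\udd{P(\vth)}$, showing that the claimed multiset identity is preserved by every syntactic constructor, with the $\msquare$-case carrying the conceptual content and the sequencing case the main technical work. It is worth emphasizing up front that both sides of~\eqref{eqn::compilation_wd} are already expressed as multisets of \emph{non-zero} operational outputs: the left-hand side enumerates terminating traces of $\udd{P(\vth^*)}$ under the \Macroadditive operational semantics (the \textbf{(Sum Components)} rule of Figure~\ref{fig::semParNondet} together with all inherited rules of Figure~\ref{fig:opsem:pardet}), whereas the right-hand side enumerates, for each compiled normal program $Q(\vth)\in\compl{(\udd{P(\vth)})}$, the terminating traces of $Q(\vth^*)$ under the ordinary semantics. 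I read the compilation rules of Figure~\ref{fig::compileReal} as the matching inductive homomorphism: they send each atom to the singleton obtained by erasing underlines, send $\udd{P_1(\vth)}\ \msquare\ \udd{P_2(\vth)}$ to the multiset union $\compl{(\udd{P_1(\vth)})}\coprod\compl{(\udd{P_2(\vth)})}$, send a sequence or a $\ccase$ to the multiplicity-respecting product of the compilations of its immediate subprograms, and send a bounded loop to the compilation of its unfolding.

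The base cases $\udd{\cabort}[\qvec{q}]$, $\udd{\cskip}[\qvec{q}]$, $\udd{q:=\ket{0}}$, and $\udd{\qvec{q}:=U(\vth)[\qvec{q}]}$ contain no $\msquare$, so $\compl$ returns a single normal program whose operational rule is inherited verbatim; both sides therefore collapse to the same one-element multiset once a possible $\mathbf{0}$ is discarded. The decisive new case is $\udd{P(\vth)}=\udd{P_1(\vth)}\ \msquare\ \udd{P_2(\vth)}$. Here the \textbf{(Sum Components)} rule shows that any terminating trace out of $\lag\udd{P_1(\vth^*)}\ \msquare\ \udd{P_2(\vth^*)},\rho\rag$ must take a single nondeterministic first step into $\lag\udd{P_1(\vth^*)},\rho\rag$ or $\lag\udd{P_2(\vth^*)},\rho\rag$ without altering the state, so Definition~\ref{defn::DenoteSemNonDet} yields $\sem{\udd{P_1(\vth^*)}\ \msquare\ \udd{P_2(\vth^*)}}(\rho)=\sem{\udd{P_1(\vth^*)}}(\rho)\coprod\sem{\udd{P_2(\vth^*)}}(\rho)$. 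Because $\compl$ likewise splits as a multiset union on $\msquare$, the two induction hypotheses add termwise to recover~\eqref{eqn::compilation_wd}.

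The main obstacle is the sequencing case $\udd{P(\vth)}=\udd{P_1(\vth)};\udd{P_2(\vth)}$, where the disjoint union on the right must become a \emph{product} of compilations while the multiplicities are tracked exactly. I would first prove a compositionality lemma for the \Macroadditive semantics mirroring the denotational clause for normal sequencing, suppressing $\vth^*$ for readability:
\begin{equation*}
\begin{aligned}
&\{|\rho'\neq\mathbf{0}:\rho'\in\sem{\udd{P_1};\udd{P_2}}(\rho)|\}\\
&\quad=\coprod_{\sigma\neq\mathbf{0},\ \sigma\in\sem{\udd{P_1}}(\rho)}\{|\rho'\neq\mathbf{0}:\rho'\in\sem{\udd{P_2}}(\sigma)|\}.
\end{aligned}
\end{equation*}
This follows from the inherited \textbf{(Sequence)} rule by splitting each terminating \Macroadditive trace at the instant $\udd{P_1(\vth^*)}$ reaches $\downarrow$; the restriction to $\sigma\neq\mathbf{0}$ is legitimate because linearity forces a zero intermediate state to produce only zero final outputs, which are filtered away. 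Applying the induction hypothesis to $\udd{P_1(\vth)}$ over the index of intermediate states and to $\udd{P_2(\vth)}$ over each surviving $\sigma$, and then re-splicing the normal traces $\lag Q_1(\vth^*),\rho\rag\to^*\lag\downarrow,\sigma\rag$ and $\lag Q_2(\vth^*),\sigma\rag\to^*\lag\downarrow,\rho'\rag$ into a single trace of $Q_1(\vth^*);Q_2(\vth^*)$ through the same \textbf{(Sequence)} rule (invoking Proposition~\ref{prop::detparadenosemwelldefn} to move between operational multisets and denotational values), produces exactly the multiset $\{Q_1;Q_2:Q_1\in\compl{(\udd{P_1(\vth)})},\ Q_2\in\compl{(\udd{P_2(\vth)})}\}=\compl{(\udd{P_1(\vth)};\udd{P_2(\vth)})}$ indexing the right-hand side. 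The delicate point is checking that the nested multiset unions recombine into this product with the \emph{correct multiplicities} and not merely the correct support.

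The remaining constructors reduce to what precedes. For the $\ccase$ statement the inherited \textbf{(Case $m$)} rule gives $\sem{\uqifParaStarrStandard}(\rho)=\coprod_m\sem{\udd{P_m(\vth^*)}}(\E_m(\rho))$, so the sequencing argument is repeated branchwise while $\compl$ forms the product over the branches. For the bounded loop $\uqwhileT{M[\qvec{q}]=1}{\udd{P_1}(\vth)}$ I would unfold it into nested $\ccase$ and sequence statements as in~\eqref{eqn:qwhileTStandarddefn} and induct on the bound $T$, invoking the already-settled $\ccase$ and sequencing cases at each level; finiteness of $T$ guarantees termination of this auxiliary induction and keeps every multiset finite. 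Assembling all cases closes the structural induction and establishes~\eqref{eqn::compilation_wd}.
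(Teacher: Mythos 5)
Your overall strategy---structural induction on $\udd{P(\vth)}$, with the $\msquare$-case splitting into a multiset union, the sequencing case handled by splitting each terminating trace at the intermediate state and re-splicing, and the bounded loop unfolded into $\ccase$ and sequencing---is exactly the route the paper takes, and those cases are argued correctly (including the attention to multiplicities in the sequencing case, which the paper also treats carefully via its $\subseteq$/$\supseteq$ argument).

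However, there is a genuine error in your treatment of the $\ccase$ constructor. You assert, both in your opening description of $\compl$ and again in the $\ccase$ paragraph, that compilation sends a case statement to ``the multiplicity-respecting product of the compilations of its immediate subprograms'' / ``forms the product over the branches.'' That is not what Figure~\ref{fig::compileReal} does, and a product would make the identity~\eqref{eqn::compilation_wd} \emph{false}. The operational semantics of a case statement branches into \emph{all} outcomes $m$ simultaneously, so the left-hand side for branch $m$ is $\coprod$ over the compiled programs of $\udd{P_m(\vth)}$ alone; if instead you compiled to the product $\{|\qif{M[\qvec{q}]=\overline{m\to Q_m}}: Q_m\in\compl{(\udd{P_m(\vth)})}|\}$, then a branch whose compilation has fewer elements would have each of its outputs repeated once for every choice made in the \emph{other} branches. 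Concretely, if $\compl{(\udd{P_0})}=\{|Q_{0,0},Q_{0,1}|\}$ and $\compl{(\udd{P_1})}=\{|Q_{1,0}|\}$, the product yields two case programs both containing $Q_{1,0}$ on branch $1$, so the outputs of $Q_{1,0}$ on $\E_1(\rho)$ appear twice on the right-hand side but only once on the left. The paper's ``Fill and Break'' procedure ($\Facebook{\bullet}$, Figure~\ref{RTAT}) is designed precisely to avoid this: it pads the shorter branch-compilations with $\cabort[\VE]$ up to the common length $\max_m|C_m|$ and returns only that many case programs, so each non-aborting $Q_{m,j}$ occurs in exactly one of them and the padding contributes only $\mathbf{0}$, which is filtered out of both sides. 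Your $\ccase$ case needs to be redone against this actual construction; once that is fixed the rest of the induction (including the While$^{(T)}$ case, which reduces to $\ccase$ and sequencing) goes through as you describe.
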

 \begin{proof}
Structural Induction. See 
\appdName\iftechrep~\ref{proof::WellDefined} \else\ \fi for details. 
 \end{proof}

Note that \eqref{eqn::compilation_wd} removes $\mathbf{0}$ from the multi-set as we are only interested in non-trivial final states.
Moreover, in $\compl{(\udd{P(\vth)})}$, some programs may essentially abort (Definition \ref{defn::EssAbort}). For implementation, we are interested in the number of $Q(\vth)\in \compl{(\udd{P(\vth)})}$ that do not essentially abort: 
\begin{defn}\label{defn::NNAP} The \emph{number of non-aborting programs} of $\udd{P(\vth)}$, denoted as $\NumNonAbort{\udd{P(\vth)}}$, is defined as 
\begin{eqnarray*}
     \NumNonAbort{\udd{P(\vth)}}=|\compl{(\udd{P(\vth)})}\setminus \{|Q(\vth)\in \compl{(\udd{P(\vth)})}:&\\ Q(\vth)\textrm{ essentially aborts.}|\}|&
 \end{eqnarray*}
where $|C|$ is  the cardinality of a multiset $C$ and 
$C_0\smallsetminus C_1$ denotes the multiset difference of $C_0$ and $C_1$.
\end{defn}
 
We remark that $\NumNonAbort{\udd{P(\vth)}}$ could be exponentially large for general $\udd{P(\vth)}$, e.g., $\udd{P(\vth)} \equiv (Q_1+R_1); ... ; (Q_n + R_n)$. 
However, as we show in Section~\ref{sec::resource}, for instances of additive programs from differentiation, this number is well bounded. (i.e., instances with exponential blow-up are \emph{irrelevant} in our context.)

 \begin{figure}
 \small
 \begin{subfigure}[t]{0.5\textwidth}
    \begin{equation*}
    \begin{array}{rl}
      \textrm{(Atomic)} & 
      \compl{(P(\vth))}\equiv \{|P(\vth)|\},
      \\
      &\textrm{if } \udd{P(\vth)}\equiv 
      {\udd{\cabort[\VE]}}\ |\ {\udd{\cskip[\VE]}}\ |\ {\udd{q:=\ket{0}}}\ 
      \\
      &|\udd{\VE:= U(\vth)[\VE]}.\\
    { \textrm{(Sequence})}&
      \compl{(\udd{P_1(\vth);P_2(\vth)})}\equiv\\
      &
     \left\{\begin{array}{l}
       \{|\cabort|\},\textrm{ if }\compl{(\udd{P_1
       (\vth)})}=\{|\cabort|\};
       \\
       \{|\cabort|\},\textrm{ if }\compl{(\udd{P_2
       (\vth)})}=\{|\cabort|\};\\
      \{|Q_1(\vth);Q_2(\vth): 
       Q_{b}(\vth)\in\compl{(\udd{P_b(\vth)})}
      |\}, \\
      \qquad\qquad\qquad\qquad\qquad\qquad\quad  \textrm{otherwise.} 
     \end{array}\right.
      \\
      \textrm{(Case }m) & 
      \compl{(\udd{\ccase})}\equiv \Facebook{\udd{\ccase}},\textrm{ described in Fig.}\ref{RTAT}.
      \\
      \textrm{(While}^{(T)}) & 
      \compl{(\udd{\rwhileT})}: \textrm{ 
      use (Case }m)\textrm{ and (Sequence)}
      .\\
     {\textrm{(Sum 
     )}} & 
      \compl{(\udd{P_1(\vth)\ \msquare\ P_2(\vth)})}\equiv\\
      &
     \left\{\begin{array}{l}
       \compl{(\udd{P_1(\vth)})}
       {\coprod}\compl{(\udd{P_2(\vth)})},     \textrm{ if }\forall b\in \{1, 
       \\
       \qquad\qquad\qquad  2\}, \compl{(\udd{P_b(\vth)})}\neq \{|\cabort|\};\\
       \compl{(\udd{P_1(\vth)})},\textrm{ if }\compl{(\udd{P_2(\vth)})}=\{|\cabort|\},\\
      \qquad\qquad\qquad\quad \compl{(\udd{P_1(\vth)})}\neq \{|\cabort|\} ;\\
       \compl{(\udd{P_2(\vth)})},\textrm{ if }\compl{(\udd{P_1(\vth)})}=\{|\cabort|\},\\
        \qquad\qquad\qquad\quad\compl{(\udd{P_2(\vth)})}\neq \{|\cabort|\};\\
       \{|\cabort|\},   \textrm{otherwise} 
     \end{array} \right.
    \end{array}
    \end{equation*}
    \caption{}\label{CompilationNewCase}
    \end{subfigure}
    
    \begin{subfigure}[t]{0.5\textwidth}
    \begin{enumerate}
        \item $\forall m\in [0,w]$, let $C_m$ denote the sub-multiset of $\compl{(\udd{P_m(\vth)})}$ composed of programs that do \emph{not} essentially abort; 
    without loss of generality, assume $|C_0|\geq |C_1|\geq\cdots\geq |C_w|$. 
        \item 
        If all $C_m$'s are empty, return $\Facebook{\ccase}\equiv \{|\cabort[\VE]|\}$; 
        else, 
        pad each $C_m$ to size $|C_0|$ by adding  
        ``$\cabort[\VE]$''. 
        \item $\forall m\in [0,w]$, index programs in $C_m$ as $\{|Q_{m,0}(\vth),\cdots$, $Q_{m,|C_0|-1}(\vth)|\}$. Return $\Facebook{\ccase}$ $\equiv$ $\{|$ $\qif{M[\qvec{q}]=\overline{m\to {Q_{m,j^*}}}}$ $|\}_{j^*}$ with ${0\leq j^*\leq |C_0|-1}$.
    \end{enumerate}
     \caption{}\label{RTAT}
    \end{subfigure}
  \caption{
  nondeterministic ~ programs: 
  (a) compilation rules. (b) ``Fill and Break'' (``$\Facebook{\bullet}$'') 
  procedure for computing $\compl{(\udd{\ccase})}$.  
  $\udd{\ccase}$ stands for $\uqif{M[\overline{q}]=\overline{m\to \udd{P_m(\vth)}}}$; $\udd{\rwhileT}$ stands for $\uqwhileTParaStandard$. 
  Here $\coprod$ denotes union of multisets. One may observe from a routine structural induction and the definition of ``essentially abort'' that:  
  for all $\udd{P(\vth)}$, either $\compl{(\udd{P(\vth)})} = \{|\cabort|\}$, or  $
  \compl{(\udd{P(\vth)})}$ does not contain essentially abort programs.
  }\label{fig::compileReal}
\end{figure}
 
\begin{example}[Generic-Case]\label{eg::CaseP1BoxP2P3}
 Consider the following simple program with the case statement
  \begin{equation*}
        {\left. \begin{array}{rl} \udd{P(\vth)}\equiv \udd{\mathbf{case}}\ M[\overline{q}]=
        0\to  & \enskip \udd{P_1(\vth)}\ \msquare \udd{P_2(\vth)}, \\
        1\to  &\enskip \udd{P_3(\vth)} 
     \end{array}  \right. }
    \end{equation*}
    where $P_1(\vth),P_2(\vth),P_3(\vth)\in\augweirdCPV$, none of them essentially aborts, and each of $P_1(\vth),P_2(\vth),P_3(\vth)$ contains no control gates. 
    Then for any $\rho\in \weirdS$, fixing $\vth^*$ we have 
    \begin{equation*}
        \begin{array}{rl}
        \lag P(\vth^*),\rho\rag \stackrel{\textrm{(Case }m)}{\to}     &\lag \udd{P_1(\vth^*)}\ \msquare \udd{P_2(\vth^*)},M_0\rho M\dagg_0\rag  \\ \stackrel{\textrm{(Sum})}{\to}
        &\lag \udd{P_1(\vth^*)},M_0\rho M\dagg_0\rag \\ {\to^*} & \lag\downarrow, \sem{P_1(\vth^*)}(M_0\rho M\dagg_0) \rag;\\
         \lag P(\vth^*),\rho\rag \stackrel{\textrm{(Case }m)}{\to}&\lag \udd{P_1(\vth^*)}\ \msquare \udd{P_2(\vth^*)},M_0\rho M\dagg_0\rag\\ \stackrel{\textrm{(Sum})}{\to}  &\lag \udd{P_
         {2}(\vth^*)},M_0\rho M\dagg_0\rag {\to^*} \lag\downarrow, \sem{P_{
         {2}}(\vth^*)}(M_0\rho M\dagg_0) \rag;\\
         \lag P(\vth^*),\rho\rag \stackrel{\textrm{(Case }m)}{\to}&\lag 
         {\udd{P_3(\vth^*)}},M_{1}
         \rho M\dagg_{1}
         \rag \\
         {\rightarrow^*}& \lag\downarrow, 
         {\sem{P_{{3}}(\vth^*)}}(M_{1}
         \rho M\dagg_{1}
         ) \rag
        \end{array}
    \end{equation*}
    Hence by Definition \ref{defn::DenoteSemNonDet}.
    \begin{align*}
    \sem{\udd{P(\vth^*)}}\rho = \{| \sem{P_1(\vth^*)}(M_0\rho M\dagg_0) , \sem{P_2(\vth^*)}(M_0\rho M\dagg_0), &\\ \sem{P_3(\vth^*)}(M_1\rho M\dagg_1) |\}&
    \end{align*}
We verify 
computation results from the compilation rules are consistent with this. Writing ``compilation rule'' as ``CP'' for short, one observes
$
\compl{(\udd{P_1(\vth)}\msquare\udd{P_2(\vth)})}\stackrel{\textrm{CP,Sum}}{=}\{|{P_1(\vth)},{P_2(\vth)}|\},
$ while $\compl{(\udd{P_3(\vth))}}=\{|P_3(\vth)|\}$ since we assumed non-
essentially-abortness. 
Apply our ``fill and break'' procedure to obtain $C_0=\{|P_1(\vth),P_2(\vth)|\},$ $C_1 = \{|P_3(\vth),\cabort[\VE]|\}$.
 
 $\begin{array}{rl}
  \compl{(\udd{P(\vth)})}  =&\Big{\{}|{\left. \begin{array}{rl}  \udd{\mathbf{case}}\ M[\overline{q}]=
        0\to  & \enskip {P_1(\vth)}, \\
        1\to  &\enskip {P_3(\vth)}\boldgreek{,} 
     \end{array}  \right. } \\
  &{\left. \begin{array}{rl}  \udd{\mathbf{case}}\ M[\overline{q}]=
        0\to  & \enskip {P_2(\vth)}, \\
        1\to  &\enskip {\cabort[\VE]} 
     \end{array}  \right. }|\Big{\}}\label{eqn::U1U2U3EgLaterUse}
\end{array}$

\noindent It's easy to check that evolving pursuant to the normal programs operational semantics (Fig \ref{fig::semPardet}) agrees with $\sem{\udd{P(\vth^*)}}\rho$.
%\[
%\coprod_{Q(\vth)\in\compl{(\udd{P(\vth)})}
% }\{|\rho':\lag{Q(\vth^*)},
% {\rho}\rag\to^* \lag \downarrow, \rho'\rag |\}\]
%is  $\{|\sem{P_1(\vth^*)}(M_0\rho M\dagg_0) , \sem{P_2(\vth^*)}(M_0\rho M\dagg_0), \sem{P_3(\vth^*)}(M_1\rho M\dagg_1),\mathbf{0}|\}$.
%\begin{eqnarray*}
% &&\coprod_{Q(\vth)\in\compl{(\udd{P(\vth)})}
% }\{|\rho':\lag{Q(\vth^*)},
% {\rho}\rag\to^* \lag \downarrow, \rho'\rag |\}\nonumber\\
% &=&\{|\sem{P_1(\vth^*)}(M_0\rho M\dagg_0) , \sem{P_2(\vth^*)}(M_0\rho M\dagg_0), \nonumber\\&& \sem{P_3(\vth^*)}(M_1\rho M\dagg_1),\mathbf{0}|\}.
% \end{eqnarray*}
\end{example}

\section{Observable and Differential Semantics}
\label{sec::Obssemm} 
To capture physically observable quantities from quantum systems, physicists propose the notation of $\emph{observable}$ which is a Hermitian matrix over the same space. 
Any observable $O$ is a combination of information about quantum measurements and classical values for each measurement outcome. 
To see why, let us take its spectral decomposition of $O=\sum_m \lambda_m \ket{\psi_m}\bra{\psi_m}$. 
Then $\{\ket{\psi_m}\bra{\psi_m}\}_m$ form a projective measurement. 
We can design an experiment to perform this projective measurement and output $\lambda_m$ when the outcome is $m$. 
The \emph{expectation} of the output is exactly given by 
\begin{equation}
    \tr(O\rho)=\sum_m \lambda_m\tr(M_m\rho M\dagg_m).
\end{equation}
%Comparing to the random outcome of any quantum measurement, 
The expectation $\tr(O\rho)$ represents meaningful classical information of quantum systems, which is also used in the loss functions in quantum machine learning applications.  
%In applications of quantum machine learning, it is also the quantity used in the loss function. 
Thus, given any observable $O$, we will define the \emph{observable} semantics of quantum programs as both the mathematical object to take derivatives 
from the original programs and the read-out of the programs that compute these derivatives. 

%Moreover, one should note the extra cost to approximate $\tr(O\rho)$ to high precision. 
%For example, 
One can repeat the $\{\ket{\psi_m}\bra{\psi_m}\}_m$ measurement and use the statistical information to recover $\tr(O\rho)$. 
The number of iterations depends on the additive precision $\delta$ and the norm of $O$. 
To simplify our presentation, also to make a precise resource count as detailed in Section~\ref{sec::resource}, we assume that\footnote{$\sqsubseteq$ is defined by $A\sqsubseteq B\iff B-A$ positive semidefinite. } %See \appdName\iftechrep~\ref{proof::WellDefined}\else\ \fi.} 
\begin{equation}
    -I_{\H}\sqsubseteq O\sqsubseteq I_{\H}.
\end{equation}
Note that the observable $O$ is different from quantum predicate $P$ ($ 0 \sqsubseteq P \sqsubseteq I$), which is defined~\cite{DP2006} as the quantum analogue of continuous logic with true values in $[0,1]$. 
By statistically concentration bounds (e.g. the Chernoff bound), to approximate $\tr(O\rho)$ with additive error $\delta$, one needs to repeat $O(1/\delta^2)$ times with $O(1/\delta^2)$ copies of initial states. 

\subsection{Observable Semantics}\label{subsec::obssemInput}
We define the \emph{observable semantics} of both normal (denoted by $P(\vth),P'(\vth)$) and \Macroadditive (denoted by $\udd{S(\vth)},\udd{S'(\vth)}$) parameterized programs as follows. 

\begin{defn}[Observable Semantics]\label{defn::ObsSemInput}
$\forall P(\vth)$ $\in$ 
$\augweirdCPV$, any observable $O\in\weMV$ and input state $\rho\in\weirdS$, the \emph{observable semantics} of $P$, denoted $\ObsSemMacro{O}{P(\vth)}{\rho}$, is  
\begin{equation}\label{eqn::ObsSemInput}
\ObsSemMacro{O}{P(\vth)}{\rho} (\vth^*) \equiv \tr(O \sem{P(\vth^*)}\rho), \forall \vth^* \in \RR^k. 
\end{equation}
Namely, $\ObsSemMacro{O}{P(\vth)}{\rho}$ is a function from $\RR^k$ to $\RR$ whose value per point is given by \eq{ObsSemInput}. 

Similarly, for any 
$\udd{S(\vth)}\in $
$\detaugweirdCPV$ with $\texttt{Compile}$ ${( 
\udd{S(\vth)})}$ $= $ $\{|{P_i}(\vth)|\}_{i=1}^t$ where 
$P_i(\vth)\in$ $\augweirdCPV$,  
its \emph{observable semantics} is given by, $\forall \vth^* \in \RR^k$,  
    \begin{equation}\label{eqn::InputSpaceSemNonDet}
    \ObsSemMacro{O}{\udd{S(\vth)}}{\rho} (\vth^*) \equiv \sum_{i\in [1,t]} \ObsSemMacro{O}{P_i(\vth)}{\rho}(\vth^*).  
    \end{equation}
\end{defn}

To compute gradients of quantum observables for each parameter, one needs an ancilla variable as hinted by results in quantum information theory about gradient calculations for simple unitaries (e.g., ~\citet{SBVCK18,bergholm2018pennylane}).  
To that end, we can easily extend quantum programs with ancilla variables. 
For each $j\in [1, k]$, the \emph{$j$-th ancilla of $\augweirdPV$}  is a quantum variable denoted by $ A_{j,\qvec{v}}$ 
disjoint from  $\qvec{v}$. We write $A$ instead of $A_{j,\VE}$ when $j,\VE$ are clear from context. Ancilla $A$ could consist of any number of qubits while we will mostly use \emph{one-qubit} $A$ in this paper.  
 
 We will only consider programs augmented with \emph{one ancilla variable} $A_j$ at any time. (So let us fix $j$ for the following discussion). 
We will then consider programs that operate on the larger space $\weirdSwhat{\VA}$ and an additional observable $A$ to define the \emph{observable semantics with ancilla}. 

\begin{definition}[Observable Semantics with Ancilla]\label{defn::ObsSemAncilla}
Given any $P'(\vth)\in\augweirdCPVA$, any observable $O\in\weMV$, input state $\rho\in\weirdS$, and moreover the observable $O_A$ on ancilla $A$, the \emph{observable semantics} with ancilla of $P$, overloading the notation $\ObsSemMacro{O}{P'(\vth)}{\rho}$, is  
\begin{eqnarray}\label{eqn::ObsSemInput2}
\ObsSemMacro{(O, O_A)}{P'(\vth)}{\rho} (\vth^*) \equiv  \nonumber \\
\tr\Big{(}\big{(} \modify{O_A}\otimes O\big{)}\sem{P'(\vth^*)}((
                    \modify{\ket{\qvec{0}}_{\Anc}\bra{\qvec{0}}})\otimes \rho)\Big{)}
, \forall \vth^* \in \RR^k. &\label{eqn::obs33}
\end{eqnarray}
Again, $\ObsSemMacro{(O, O_A)}{P(\vth)}{\rho}$ is a function from $\RR^k$ to $\RR$ whose value per point is given by \eq{obs33}. 

Similarly, for  
$\udd{S'(\vth)}\in $  $ \detaugweirdCPVA$ 
s.t.  $
\texttt{Compile}$ ${(\udd{S'(\vth)})}=\{|{P'_i}(\vth)|\}_{i=1}^t$ where $P'_i(\vth)\in \augweirdCPVA$,  
its \emph{observable semantics} is: 
$\forall \vth^* \in \RR^k$ ,
%    \begin{eqnarray}
%    \ObsSemMacro{(O, O_A)}{\udd{S'(\vth)}}{\rho} (\vth^*) & \equiv \nonumber \\\sum_{i\in [1,t]}\ObsSemMacro{(O, O_A)}{P'_i(\vth)}{\rho}(\vth^*) &\label{eqn::InputSpaceSemNonDet2}
%    \end{eqnarray}
   \begin{equation*}
    \ObsSemMacro{(O, O_A)}{\udd{S'(\vth)}}{\rho} (\vth^*) \equiv \sum_{i\in [1,t]}\ObsSemMacro{(O, O_A)}{P'_i(\vth)}{\rho}(\vth^*). 
    \end{equation*}
\end{definition}

The only difference from the normal observable semantics lies in \eq{obs33}, where we initialize the ancilla with $\ket{0}$, which is a 
natural choice and evaluate the observable $O_A \otimes O$. 
As we will see in the technique, the independence between $O_A$ and $O$ in the form of $O_A \otimes O$ will help us obtain the strongest guarantee of our differentiation procedure. 

\subsection{Differential Semantics}
Given the definition of observable semantics, its differential semantics can be naturally defined by  

\begin{defn}[Differential Semantics]
\label{defn::ComputeDeriSem} 
Given \Macroadditive program $\udd{S(\vth)}\in\detaugweirdCPV$, its $j$-th \emph{differential semantics} is defined by 
\begin{equation} \label{eqn::diff_semantics}
    \parthetajgenNotRelation{\ObsSemMacro{O}
    {{\udd{S(\vth)}}}{\rho}},
\end{equation}
which is again a function from $\RR^k$ to $\RR$. Moreover, for any $\udd{S'(\vth)}\in\detaugweirdCPVa$ with ancilla $A$, we say that ``\emph{$\udd{S'(\vth)}$ computes the $j$-th differential semantics of $\udd{S(\vth)}$}'' if and only if there exists an observable $O_A$ on ancilla $A$ for $\udd{S'(\vth)}$ such that $ \forall O\in\weMV,\rho\in\weirdS,$
\begin{equation} \label{eqn::s_diff_s}
   \ObsSemMacro{(O, O_A)}
    {{\udd{S'(\vth)}}}{\rho} =
    \parthetajgenNotRelation{\ObsSemMacro{O}
    {{\udd{S(\vth)}}}{\rho}}.
\end{equation}
\end{defn}

We remark that \eq{diff_semantics} is well defined because $\ObsSemMacro{O}{{\udd{S(\vth)}}}{\rho}$ is a function from $\RR^k$ to $\RR$. 
It is also a smooth function because we assume that parameterized unitaries are entry-wise smooth, and the observable semantics is obtained by multiplication and addition of such entries. 
Note also that there is one specific choice of $O_A$ in our current design. We leave it as a parameter to allow flexibility for future designs. 

We also remark that the order of quantifiers in \eq{s_diff_s} 
is the strongest that one can hope for. 
This is because the observable semantics of $\udd{S(\vth)}$ will depend on $O$ and $\rho$ in general. Thus, the program to compute its differential semantics could also depend on $O$ and $\rho$ in general. 
However, in our definition, $\udd{S'(\vth)}$ is a single fixed program that works for any $O$ and $\rho$ regardless of the seemingly complicated relationship. This definition is consistent with the classical case where a single program can compute the derivatives for any input. We can achieve the same definition in the quantum setting and it is critical in the proof of Theorem~\ref{thm::sound} (item (5)). 

\section{Code Transformations and the Differentiation 
Logic}\label{sec::CTRules}
We describe the code transformation rules of the differentiation operator $\parthetagenNotRelation{\cdot}$ in Section \ref{sec::CTT}. We also define a logic and prove its soundness for reasoning about the correctness of these code transformations, with the following judgement 
\begin{equation}
    \udd{S'(\vth)}\vert\udd{S(\vth)}, 
\end{equation}
which states that $\udd{S'(\vth)}\textrm{ computes the differential semantics of }$ $\udd{S(\vth)}$ in the sense of Definition~\ref{defn::ComputeDeriSem}. 
We fix $\theta=\theta_j$ and hence $A$ stands for $A_{j,\VE}$ and $\parthetagenNotRelationNopar$ for $\parthetajgenNotRelationNopar$ through this section.\footnote{If $A$ already exists, i.e., $\udd{S(\vth)} \in \detaugweirdCPVa$, we treat $\overline{v}_{\mathrm{new}}$ as $\overline{v}_{\mathrm{old}} \cup A_{\mathrm{old}}$ and add $A_{\mathrm{new}}$.
Any observable $O$ on $\overline{v}_{\mathrm{old}}$ becomes $
O_{A_{\mathrm{old}}}\otimes O$ on $\overline{v}_{\mathrm{new}}$.
Both $A_{\mathrm{old}}$ and $A_{\mathrm{new}}$ are initialized to $\ket{0}$ in observable semantics.}

\subsection{Code Transformations}\label{sec::CTT} 
We first define some gates associated with the single-qubit rotation and the two-qubit coupling gates, which will appear in the code transformation rules. Let $A$ be a single qubit. 

\begin{defn}\label{defn::specialUnitaryPrimes}
 \begin{enumerate}
     \item %Consider $ {q_1}:=R_\sigma(\theta)\ket{q_1}$,  
     Consider unitary $R_\sigma(\theta)$ where $\sigma\in \{X,Y,Z\}$. We define unitary $C\_R_{\sigma}(\theta)$ as 
     \begin{equation}
      C\_R_{\sigma}(\theta) \equiv \ket{0}_A\bra{0}\otimes R_\sigma(\theta) + \ket{1}_A\bra{1}\otimes R_\sigma(\theta\modify{+\pi}). 
        \label{eqn::controlRot}
     \end{equation}
     We also define a new gadget program $R{\boldgreek{'}}_\sigma(\theta)$ as 
     \begin{eqnarray}
     R{\boldgreek{'}}_\sigma(\theta)[A,q_1] &\equiv& A:= H[A]; A,q_1:=C\_R_\sigma(\theta)[A,q_1]; \nonumber \\ 
          & &   A:= H[A].\label{eqn::ConjugateControlRot}
     \end{eqnarray}
     \item Substituting $\sigma\otimes \sigma$ for $\sigma$ and $q_1,q_2$ for $q_1$ in Eqns (\ref{eqn::controlRot},\ref{eqn::ConjugateControlRot}), one defines $C\_R_{\sigma\otimes \sigma}(\theta), R\boldgreek{'}_{\sigma\otimes \sigma}(\theta)$.
 \end{enumerate}
\end{defn}

For $1$-qubit rotation $R_\sigma(\theta)$, the ``controlled-rotation'' gate $C\_R_\sigma(\theta)$ 
maps $\ket{0,q_1}\mapsto \ket{0}\otimes R_\sigma(\theta)$ $\ket{q_1}$, and  $\ket{1,q_1}\mapsto \ket{1}\otimes R_\sigma(\theta
{+\pi})$  $\ket{q_1}$; $R\boldgreek{'}_\sigma(\theta)$ conjugates $C\_R_\sigma(\theta)$ with Hadamard. Similarly for corresponding two-qubit coupling gates. 

We exhibit our code transformation rules in Figure \ref{fig:CT}. For Unitary rules we only include $1$-qubit rotations and two-qubit coupling gates, since they form a universal gate set and are easy to implement on quantum machines. 
It is also possible to include more unitary rules (e.g., by following the calculations in \cite{SBVCK18}), which we will leave as future directions. 

\begin{figure}[t]
    \begin{equation*}
    \begin{array}{lll}
      \textrm{(Trivial)} & 
      \parthetagenNotRelation{\udd{\cabort[\VE]}}, \parthetagenNotRelation{\udd{\cskip[\VE]}}, \parthetagenNotRelation{\udd{q:=\ket{0}}} \enskip \equiv &
      \\
      &\qquad\qquad\qquad\qquad\qquad\quad   \udd{\cabort[\VA]}.& \\
      \textrm{(Trivial-U)}&\parthetagenNotRelation{\udd{\VE:=U(\vth)[\VE]}} \equiv  \udd{\cabort[\VA]}, \textrm{if }\theta_j\notin \vth.\\
%      &\qquad\qquad\qquad\qquad\qquad\qquad\qquad \textrm{if }\theta_j\notin \vth.&\\
      \textrm{(1-qb)} & 
      \parthetagenNotRelation{\udd{q_1:=R_\sigma(\theta)[q_1]}} \equiv  \udd{A,q_1:=R\boldgreek{'}_\sigma(\theta)[A,q_1]}.
      \\
      \textrm{(2-qb)} & 
      \parthetagenNotRelation{\udd{q_1,q_2:=R_{\sigma\otimes \sigma}(\theta)[q_1,q_2]}} \enskip \equiv & \\
      & \enskip \udd{A,q_1,q_2:=R\boldgreek{'}_{\sigma\otimes \sigma}(\theta)[A,q_1,q_2]}.&
      \\
     \textrm{(Sequence)} & 
     \parthetagenNotRelation{\udd{S_{{1}}(\vth);S_{{2}}(\vth)}} \enskip \equiv \enskip 
      ({\udd{S_1(\vth)};\parthetagenNotRelation{\udd{S_2(\vth)}}})\ \msquare&\\
      &\qquad\qquad\qquad\qquad\quad ({\parthetagenNotRelation{\udd{S_1(\vth)}};\udd{S_2(\vth)}}).&\\
      \textrm{(Case)}&  \parthetagenNotRelation{\uqifParaStandardS} \enskip\equiv&\\
      &\qquad 
      \udd{\ccase}\ M[
         \qvec{q}] =  \overline{\ m \to \parthetagenNotRelation{\udd{S_m(\vth)}}}~\mathbf{end}.
      &\\
      \textrm{(while}^{(T)})& \textrm{Use (Case) and (Sequence).}& \\
      \textrm{(S-C)} & \parthetagenNotRelation{\udd{S_1(\vth)}\ \msquare\ \udd{S_2(\vth)}} \equiv \parthetagenNotRelation{\udd{S_1(\vth)}}\ \msquare  \parthetagenNotRelation{\udd{S_2(\vth)}}.
    \end{array}
    \end{equation*}
  \caption{Code Transformation Rules. For (1-qb Rotation) and (2-qb Coupling), $(\sigma\in\{X,Y,Z\})$; $R'_{\sigma}(\theta),R'_{\sigma\otimes\sigma}(\theta)$ are as in Definition \ref{defn::specialUnitaryPrimes}. $\theta_j\notin\vth$ means ``the unitary $U(\vth)$ trivially uses $\theta_j$'': for example in $P(\vth)\equiv R_{X}(\theta_1);R_{Z}(\theta_2)$, $\vth=(\theta_1,\theta_2)$ and $R_{X}(\theta_1)$ trivially uses $\theta_2$. 
  }
  \label{fig:CT}
\end{figure}

\subsection{The differentiation logic and its soundness}\label{subsec::AutoDiffLogic}

\begin{figure}
  \begin{gather*}
      \begin{tabular}{rl}
    \text{(Abort)} \enskip \infer[]{
    \parthetagenNotRelation{\udd{\cabort[\VE]}}|\udd{\cabort[\VE]}}{} & \text{(Skip)}   \enskip \infer[]{
    \parthetagenNotRelation{\udd{\cskip[\VE]}}|\udd{\cskip[\VE]}}{}
  \end{tabular}
  \end{gather*}
  \begin{gather*}
  \begin{tabular}{ll}
   \text{(Initialization)}
    & \infer[]{\parthetagenNotRelation{\udd{q:=\ket{0}}}|\udd{(q:=\ket{0})}}{} \\
      \text{(Trivial-Unitary)}
      & \infer[]
      {\parthetagenNotRelation{\udd{\overline{q}=U(\vth)[\VE]}}|\udd{\overline{q}=U(\vth)[\VE]} }{\theta_j\notin\vth}\\
   \text{(Rot-Couple)}& \infer[]
      {\parthetagenNotRelation{\udd{\overline{q}=R_{\boldgreek{\sigma}}(\vth)[\VE]}}|\udd{(\overline{q}=
      R_{\boldgreek{\sigma}}(\vth)[\VE])} }{}\\
  \text{(Sequence)}&\infer[]
      {\parthetagenNotRelation{\udd{S_0(\vth);S_1(\vth)}}|(\udd{S_0(\vth);S_1(\vth)})}
      {\parthetagenNotRelation{\udd{S_0(\vth)}}|\udd{S_0(\vth)} 
       \qquad \parthetagenNotRelation{\udd{S_1(\vth)}}|\udd{S_1(\vth)}}\\
    \text{(Case)} &\infer[]
      {\parthetagenNotRelation{\qif{M[\overline{q}]=\overline{m\to {\udd{S_m(\vth)}}}} }|
                  }
      {\forall m, \parthetagenNotRelation{\udd{S_m(\vth)}}|\udd{S_m(\vth)}} \\
      { } & $\qif{M[\overline{q}]=\overline{m\to \udd{S_m(\vth)}}}$
      \\
    \text{(While{$^{(T)}$})} &\infer[]
     {
       \parthetagenNotRelation{{\uqwhileTParaStandardS}}|}
       {\parthetagenNotRelation{\udd{S_1(\vth)}}\Big{|}\udd{S_1(\vth)}}\\
       &${\uqwhileTParaStandardS}$\\
       \text{(Sum Component)}&\infer[]
     {\parthetagenNotRelation{\udd{S_0(\vth)}\  \msquare\  \udd{S_1(\vth)}}|(\udd{S_0(\vth)}\ \msquare\ \udd{S_1(\vth))}
                 }
     {\parthetagenNotRelation{\udd{S_0(\vth)}}|\udd{S_0(\vth)} 
           \qquad \parthetagenNotRelation{\udd{S_1(\vth)}}|\udd{S_1(\vth)}}
  \end{tabular}
 \end{gather*}
  \caption{The differentiation logic. Wherever applicable, $q\in \VE,\qvec{q}\subseteq\qvec{v}, \udd{S_i(\vth)}\in \detaugweirdCPV$. In (Rot-Couple), $
  {\boldgreek{\sigma}\in\{X,Y,Z,X\otimes X,Y\otimes Y, Z\otimes Z\}}$. 
  }
  \label{fig:reliability-sem}
\end{figure}

We develop the differentiation logic given in \fig{reliability-sem} to reason about the correctness of code transformations. It suffices to show that our logic is sound. 
For ease of notation, in future analysis we write $\parthetagenNotRelation{P(\vth)}$ in place of $\parthetagenNotRelation{\udd{P(\vth)}}$ when $P(\vth)\in\augweirdCPV$.

\begin{theorem}[Soundness]\label{thm::sound} 
Let $\udd{S(\vth)}\in\detaugweirdCPV$, $\udd{S'(\vth)}\in\detaugweirdCPVa$. Then, $\udd{S'(\vth)}\vert\udd{S(\vth)}$ implies that $\udd{S'(\vth)}$ computes the differential semantics of $\udd{S(\vth)}$.
\end{theorem}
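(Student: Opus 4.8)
The plan is to induct on the derivation of the judgment $\udd{S'(\vth)}\vert\udd{S(\vth)}$ built from the rules of \fig{reliability-sem}, fixing once and for all the ancilla observable $O_A=Z$ on the single ancilla $A=A_{j,\VE}$; the claim is then that with this $O_A$, the identity \eq{s_diff_s} of Definition~\ref{defn::ComputeDeriSem} holds for every $O\in\weMV$ and every $\rho\in\weirdS$. Two reductions, both consequences of the definitions and Proposition~\ref{prop::WellDefined}, would be used throughout. First, the observable semantics of an \Macroadditive program is the sum of the observable semantics of the normal programs in its compilation, so every rule whose right-hand side introduces a $\msquare$ (or a case/while) simply splits the left-hand side into a sum of simpler observable semantics. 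Second, any sub-program inherited unchanged from $\udd{S(\vth)}$ acts only on $\qvec{v}$ and therefore leaves $A$ in its initial state $\ket{0}$, so the $\ket{0}_A\bra{0}$ initialization and the final $Z$-measurement commute past it. The strong, universally quantified form of Definition~\ref{defn::ComputeDeriSem} ("for all $O$ and all $\rho$") is what makes the induction hypotheses composable, and is used crucially in the sequence case below.

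For the base cases (Abort), (Skip), (Initialization) and (Trivial-Unitary), the code transformation of \fig{CT} returns $\udd{\cabort[\VA]}$, whose observable semantics is identically $0$ since its output state is $\mathbf{0}$; on the other side, the denotational semantics of each of these programs is independent of $\theta_j$ (trivially, or because $\theta_j\notin\vth$), so its differential semantics vanishes, and both sides agree for the arbitrary choice $O_A=Z$. The computational heart of the proof is the (Rot-Couple) base case. Writing $U=R_\sigma(\theta)$, I would first record the elementary identity $R_\sigma(\theta+\pi)=-i\sigma\,R_\sigma(\theta)=2\,\frac{d}{d\theta}R_\sigma(\theta)$, so that the controlled gate of Definition~\ref{defn::specialUnitaryPrimes} places $U$ and $V:=2\frac{d}{d\theta}U$ on its two ancilla branches. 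Expanding the Hadamard-conjugated gadget $R{\boldgreek{'}}_\sigma(\theta)$ on $\ket{0}_A\bra{0}\otimes\rho$ and reading off the two diagonal ancilla blocks, the $Z$-measurement extracts exactly the cross terms:
\begin{equation*}
\tr\big((Z\otimes O)\,\sem{R{\boldgreek{'}}_\sigma(\theta)}(\ket{0}_A\bra{0}\otimes\rho)\big)=\tfrac{1}{2}\big(\tr(O\,U\rho V^\dagger)+\tr(O\,V\rho U^\dagger)\big),
\end{equation*}
and substituting $V=2\frac{d}{d\theta}U$ turns the right-hand side into $\tr(O\,(\frac{d}{d\theta}U)\rho U^\dagger)+\tr(O\,U\rho(\frac{d}{d\theta}U)^\dagger)=\frac{d}{d\theta}\tr(O\,U\rho U^\dagger)$, which is precisely the differential semantics of $\udd{q_1:=R_\sigma(\theta)[q_1]}$. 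The two-qubit coupling case is identical after substituting $\sigma\otimes\sigma$ for $\sigma$.

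The inductive cases follow the remaining rules. (Sum Component) is immediate from linearity of the derivative together with the additive observable semantics. For (Case), since the measurement superoperators $\E_m$ are independent of $\theta$, both the transformed program and the target differential semantics decompose as $\sum_m(\cdots)$ over branches, and the induction hypothesis applied to each $\udd{S_m(\vth)}$ at the frozen input $\E_m(\rho)$ closes the case; (While) then follows by unfolding the $T$-bounded loop into nested case and sequence statements, as \fig{CT} prescribes, and inducting on $T$, each unfolded occurrence of the body invoking the body's induction hypothesis together with the already-established soundness of the case and sequence constructions. The sequence case is the crux. Here the transformation yields $(\udd{S_0(\vth)};\parthetagenNotRelation{\udd{S_1(\vth)}})\ \msquare\ (\parthetagenNotRelation{\udd{S_0(\vth)}};\udd{S_1(\vth)})$, whose observable semantics is the sum of its two components. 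For the first component I would use that $S_0$ leaves $A$ untouched to rewrite it as $\ObsSemMacro{(O,Z)}{\parthetagenNotRelation{\udd{S_1(\vth)}}}{\sem{S_0(\vth^*)}\rho}(\vth^*)$ and apply the induction hypothesis for $S_1$ with the frozen state $\sem{S_0(\vth^*)}\rho$, obtaining $\tr(O\,(\frac{\partial}{\partial\theta}\sem{S_1(\vth)})\sem{S_0(\vth)}\rho)$ at $\vth^*$. For the second component I would push $S_1$ (which also does not touch $A$) onto the observable via its dual and apply the induction hypothesis for $S_0$ with the frozen observable $\sem{S_1(\vth^*)}^*(O)$, obtaining $\tr(O\,\sem{S_1(\vth)}(\frac{\partial}{\partial\theta}\sem{S_0(\vth)})\rho)$. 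Summing the two reproduces the product rule $\frac{d}{d\theta}\tr(O\,\sem{S_1(\vth)}\sem{S_0(\vth)}\rho)$, i.e. the differential semantics of $\udd{S_0(\vth);S_1(\vth)}$.

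The main obstacle is exactly this sequence step: it only goes through because Definition~\ref{defn::ComputeDeriSem} quantifies over all observables and all input states, which lets me invoke each induction hypothesis with a $\theta$-frozen observable (for $S_0$) or a $\theta$-frozen state (for $S_1$) so that differentiation lands on only one factor at a time; a weaker, $O$- or $\rho$-specific definition would not support this substitution, which is the sense in which the strong definition is essential. A secondary point requiring care is the ancilla bookkeeping: I must check that each normal program in $\compl{(\udd{S'(\vth)})}$ differentiates a single gate occurrence and hence uses the one ancilla $A$ exactly once, so that the uniform choice $O_A=Z$ is legitimate in every execution trace; the footnote convention promoting a pre-existing ancilla $A_{\mathrm{old}}$ into the system space handles the iterated differentiation case.
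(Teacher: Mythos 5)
Your proposal is correct and follows essentially the same route as the paper's proof: structural induction on the derivation with the fixed choice $O_A = Z_A$, the identity $\frac{d}{d\theta}R_{\boldgreek{\sigma}}(\theta)=\tfrac{1}{2}R_{\boldgreek{\sigma}}(\theta+\pi)$ driving the Rot-Couple case via the Hadamard-conjugated controlled gate, and the sequence case resolved by applying the induction hypotheses to a $\theta$-frozen state for $S_1$ and a $\theta$-frozen dual-transported observable for $S_0$. The only cosmetic difference is that the paper routes the sequence and case arguments through explicit compilation multisets and Kraus decompositions (its Lemmas on compilation and the Schr\"odinger--Heisenberg dual), whereas you argue directly at the superoperator level; the content is the same.
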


Let us highlight the ideas behind the proof of the soundness and all detailed proofs are deferred to 
\appdName\iftechrep~\ref{app::CTRules}\else\fi. First remember that $\theta=\theta_j$ and for all the proofs we can choose $Z_A=\proj 0 -\proj 1$ as the observable on the one-qubit ancilla $A$. Thus, we will omit $Z_A$ and \emph{overload} the notation,  $\forall P'(\vth)\in\augweirdCPVa$: 
\begin{equation}\label{eqn::OverLoadingRefByAppendix}
\ObsSemMacro{O}{P'(\vth)}{\rho} \text{ means } \ObsSemMacro{(O, Z_A)}{P'(\vth)}{\rho},
\end{equation}
to simplify the presentation. We make similar overloading convention for $\udd{S'(\vth)}\in \detaugweirdCPVa$. Let us go through these logic rules one by one.

\begin{enumerate}[leftmargin=2mm]
    \item \emph{Abort, Skip, Initialization, Trivial-Unitary} rules work because these statements do not depend on $\theta$. 
    \item Since While{$^{(T)}$} can be deemed as a macro of other statements, the correctness of \emph{While{$^{(T)}$}} rule follows by unfolding while$^{(T)}$ and applying other rules. 
    \item The \emph{Sum Component} rule is due to the property of observable semantics ($\sem{\cdot}$) and \Macroadditive operator ($\msquare$):
    \begin{equation}
        \parthetagenNotRelation{\sem{P_1 \msquare P_2}} = \sem{\parthetagenNotRelation{P_1}} + \sem{\parthetagenNotRelation{P_2}}, 
    \end{equation}
    which follows from our definition design. 
    \item Our \emph{Rot-Couple} rule is different from the phase-shift rule in~\cite{SBVCK18} by using only one circuit in derivative computing. However, the proof of the \emph{Rot-Couple} rule is largely inspired by the one of the phase-shift rule.
    \item The proof of the \emph{Sequence} rule relies very non-trivially on our design of the observable semantics with ancilla (Definition~\ref{defn::ObsSemAncilla}) and the strong requirement of computing differential semantics in Definition~\ref{defn::ComputeDeriSem}. Firstly, note that
    \begin{eqnarray*}
         \ObsSemMacro{O}{\parthetagenNotRelation{\udd{S_0(\vth);S_1(\vth)}}}{\rho} & =  \ObsSemMacro{O}{\parthetagenNotRelation{\udd{S_0(\vth)}};\udd{S_1(\vth)}}{\rho}\\
       & +   \ObsSemMacro{O}{\udd{S_0(\vth)};\parthetagenNotRelation{\udd{S_1(\vth)}}}{\rho}.
    \end{eqnarray*}
    We use the induction hypothesis to reason about each term above. Consider the case  $\udd{S_0(\vth)}=S_0(\vth)$ and $\udd{S_1(\vth)}=S_1(\vth)$. Note that $S_0(\vth),$ $ S_1(\vth)$ $\in$ 
    $\augweirdCPV$ and  $\parthetagenNotRelation{S_0(\vth)}, $ $ \parthetagenNotRelation{S_1(\vth)}$  $ \in$ $\detaugweirdCPVa$. First, we show
    \begin{equation}
      \ObsSemMacro{O}{S_0(\vth);\parthetagenNotRelation{S_1(\vth)}}{\rho}  =
        \ObsSemMacro{O}{\parthetagenNotRelation{S_1(\vth)}}{\sem{S_0(\vth)}(\rho)} \label{eqn::seq_1}.
    \end{equation}
    This is because $\parthetagenNotRelation{S_1(\vth)}$ computes the derivative for \emph{any} input state and observable. We simply choose the input state $\sem{S_0(\vth)}(\rho)$ and observable $O$.  Secondly, we show 
    \begin{equation}
     \ObsSemMacro{O}{\parthetagenNotRelation{S_0(\vth)};S_1(\vth)}{\rho} =\ObsSemMacro{\sem{S_1(\vth)}^*(O)}{\parthetagenNotRelation{S_0(\vth)}}{\rho}. \label{eqn::seq_2}
    \end{equation}
For \eqref{eqn::seq_2}, we don’t change the state $\rho$ but change the observable $O$ by applying the \emph{dual} super-operator $\sem{S_1(\vth)}^*$.
Since $ \parthetagenNotRelation{S_0(\vth)}$ computes the derivative for any input state and any observable, we choose the input state $\rho$ and observable $\sem{S_1(\vth)}^*(O)$.
The dual super-operator $\sem{S_1(\vth)}^*$ has the property that $\tr(O \sem{S_1(\vth)}(\rho)) = \tr (\sem{S_1(\vth)}^*(O) \rho)$, which corresponds to the Schrodinger picture (evolving states) and Heisenberg picture (evolving observables) respectively in quantum mechanics. 
    \item The proof of the \emph{Case} rule basically follows from the linearity of the observable semantics and the smooth semantics of Case. It is interesting to compare with the classical case~\cite{BF94} where the non-smoothness of the guard causes an issue for auto differentiation. 
   
\end{enumerate}

\begin{example}[Simple-Case]\label{ex:ToBeInSection6} 
Consider the following simple instantiating of Example~\ref{eg::CaseP1BoxP2P3}
  \begin{equation*}
        {\left. \begin{array}{rl} {P(\theta)} \equiv {\mathbf{case}}\ M[{q_1}]=
        0\to  & \enskip {R_X(\theta)}[q_1]; {R_Y(\theta)}[q_1], \\
        1\to  &\enskip R_Z(\theta)[q_1]
     \end{array}  \right. }
    \end{equation*}
Let us apply code transformation and compilation. Let CT, CP to denote ``code transformation'' and ``compilation'', and  ``Seq'' and ``Rot'' denote Sequence and Rotation rules resp. 
\begin{eqnarray*}
   \parthetagenNotRelation{\udd{P(\theta)}}\stackrel{\textrm{CT,}\ccase}{=}& {\left. \begin{array}{rl}  {\mathbf{case}}\ M[{q_1}]=
        0\to  & \enskip \parthetagenNotRelation{{R_X(\theta)}[q_1]; \\
        & \qquad {R_Y(\theta)}[q_1]}, \\
        1\to  &\enskip \parthetagenNotRelation{R_Z(\theta)[q_1]}
     \end{array}  \right. }
\end{eqnarray*}     
\begin{eqnarray*} 
     \stackrel{\textrm{CT,Seq+Rot}}{=}& {\left. \begin{array}{rl}  {\mathbf{case}}\ M[{q_1}]=
        0\to  & \enskip ({{R\boldgreek{'}_X(\theta)}[A,q_1]};\\
        &\qquad R_Y(\theta)[q_1])\msquare\\
        & \qquad (R_X(\theta)[q_1];\\
        &\qquad {{R\boldgreek{'}_Y(\theta)}[A,q_1]}), \nonumber\\
        1\to  &\enskip {R\boldgreek{'}_Z(\theta)[A,q_1]}
     \end{array}  \right. }
     \\
        \stackrel{\compl{(\bullet)}}{\longmapsto}& \Big{\{}|{\left. \begin{array}{rl}  {\mathbf{case}}\ M[{q_1}]=
        0\to  & \enskip {{R\boldgreek{'}_X(\theta)}[A,q_1]};\\
        & R_Y(\theta)[q_1], \nonumber\\
        1\to  &\enskip {R\boldgreek{'}_Z(\theta)[A,q_1]}\boldgreek{,}
     \end{array}  \right. }\\
        &{\left. \begin{array}{rl}  {\mathbf{case}}\ M[{q_1}]=
        0\to  & \enskip  R_X(\theta)[q_1];\\
        & {{R\boldgreek{'}_Y(\theta)}[A,q_1]}, \nonumber\\
        1\to  &\cabort.
     \end{array}  \right. }|\Big{\}}
  \end{eqnarray*}
\end{example}

\section{Execution and Resource Analysis}  \label{sec::resource}

In this section we illustrate the execution of the entire differentiation procedure and analyze its resource cost.  Consider any program $P(\vth)\in\augweirdCPV$ and the parameter $\theta$.

\vspace{1mm} \noindent \textbf{Execution.} The first step in differentiation is to apply the code transformation rules (in Section~\ref{sec::CTRules}) to $P(\vth)$ and obtain an  \Macroadditive program $\parthetagenNotRelation{P(\vth)}$. Then one needs to compile $\parthetagenNotRelation{P(\vth)}$ into a multiset $\{|P'_i(\vth)|\}_{i=1}^m$ of normal non-aborting programs $P'_i(\vth)$. The total count of these programs is given by $m=\NumNonAbort{\parthetagenNotRelation{P(\vth)}}$. Note that the above procedure could be done at the compilation time. 

Given any pair of $O$ and $\rho$, the real execution to compute the derivative of $\ObsSemMacro{O}{P(\vth)}{\rho}$ is to approximate the observable semantics $\ObsSemMacro{O}{\parthetagenNotRelation{P(\vth)}}{\rho}$.  By Definition~\ref{defn::ObsSemAncilla}, we need to approximate 
\begin{equation} \label{eqn::cost}
    \sum_{i=1}^m \tr\Big{(}\big{(}
                    Z_A\otimes O\big{)}\sem{P'_i(\vth)}((\ket{\qvec{0}}_{\Anc}\bra{\qvec{0}})\otimes \rho)\Big{)},
\end{equation}
where each term is the observable $Z_A \otimes O$ on the output state of $P_i'(\theta)$ given input state $\rho$ and the ancilla qubit $\ket{0}$. 

To approximate the sum in \eqref{eqn::cost} to precision $\delta$ , one could first treat the sum divided by $m$ as the observable applied on the program that starts with a uniformly random choice of $i$ from $1, \cdots, m$ and then execute $P'_i(\vth)$. By Chernoff bound, one only needs to repeat this procedure $O(m^2/\delta^2)$ times.

\vspace{1mm} \noindent \textbf{Resource count.} 
We are only interested in non-trivial (extra) resource that is something that you wouldn't need if you only run the original program. 
Ancilla qubits count as the non-trivial resource. However, for our scheme, the number of required ancillae is 1 qubit per each parameter. %when applying  $\parthetagenNotRelation{\cdot}$ once. 

The more non-trivial resource is the number of the copies of input state (each copy of the input state is to be prepared from scratch), which is directly related to the number of repetitions in the procedure, which again connects to $m=\NumNonAbort{\parthetagenNotRelation{P(\vth)}}$.
We argue that our code transformation is efficient so that $m$ is reasonably bounded. To that end, we show the relation between $m$ and
a natural quantity defined on the original program $P(\vth)$ (i.e., before applying any $\parthetagenNotRelation{\cdot}$ operator) called the \emph{occurrence count} of the parameter $\theta$. 

\begin{defn}
The ``Occurrence Count for $\theta_j$'' in ${P(\vth)}$, 
denoted $\ResCount{j}{P(\vth)}$, is defined as follows: 
\begin{enumerate}
    \item If $P(\vth)\equiv\cabort[\VE]|\cskip[\VE] | q:=\ket{0}$ ($q\in\VE$), then $\ResCount{j}{P(\vth)}=0$;
    \item  
    $P(\vth)\equiv U(\vth)$: if $U(\vth)$
    trivially uses $\theta_j$, then $\ResCount{j}{P(\vth)}=0$;  
    otherwise $\ResCount{j}{P(\vth)}=1$.
    \item If $P(\vth)\equiv U(\vth)={P_1(\vth);{P_2(\vth)}})$ then $\ResCount{j}{P(\vth)}=$ $\ResCount{j}{P_1(\vth)}$ $+\ResCount{j}{P_2(\vth)}$. 
    \item If ${P(\vth)}\equiv\qifStandardPara$ then $\ResCount{j}{{P(\vth)}} = 
    \max_m \ResCount{j}{{P_m(\vth)}}$.
    \item If ${P(\vth)}\equiv\qwhileTParaStandard$ then $\ResCount{j}{P(\vth)} $ $=T\cdot\ResCount{j}{P_1(\vth)} $. 
\end{enumerate}
\end{defn}

Intuition of the ``Occurrence Count'' definition is clear: it basically counts the number of non-trivial occurrences of $\theta_j$ in the program, treating $\ccase$ as if it is deterministic. 
To see why this is a reasonable quantity, consider the auto-differentiation in the classical case. For any non-trivial variable $v$ (i.e., $v$ has some dependence on the parameter $\theta$), we will compute both $v$ and $\parthetagenNotRelation{v}$ and store them both as variables in the new program. 
Thus, the classical auto-differentiation essentially needs the number of non-trivial occurrences more space and related resources. 
As we argued in the introduction, we cannot directly mimic the classical case due to the no-cloning theorem. 
The extra space requirement in the classical setting turns into the requirement on the extra copies of the input state in the quantum setting. 
Indeed, we can bound $m$ by the occurrence count. 

\begin{prop}\label{prop::NOjisSize}
$\NumNonAbort{\parthetajgenNotRelation{P(\vth)}} \leq \ResCount{j}{P(\vth)}$. 
\end{prop}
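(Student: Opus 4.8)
The plan is to prove the bound by structural induction on the normal program $P(\vth)$, reducing everything to three compositional identities for $\NumNonAbort{\cdot}$ that read off the compilation rules in Figure~\ref{fig::compileReal}. The key enabling fact is the dichotomy noted in that figure's caption: for every additive program, either its $\compl{(\cdot)}$ equals $\{|\cabort|\}$ or $\compl{(\cdot)}$ contains no essentially-aborting program. This gives the clean statement that $\NumNonAbort{\udd{Q(\vth)}}=0$ exactly when $\compl{(\udd{Q(\vth)})}=\{|\cabort|\}$, and otherwise $\NumNonAbort{\udd{Q(\vth)}}=|\compl{(\udd{Q(\vth)})}|$. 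A first observation I would record is that any normal program compiles to a singleton multiset, so $\NumNonAbort{Q(\vth)}\le 1$ for every $Q(\vth)\in\augweirdCPV$.

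First I would establish the three identities, for additive $\udd{X},\udd{Y}$ and case branches $\{\udd{P_m}\}$:
\begin{gather*}
\NumNonAbort{(\udd{X};\udd{Y})}=\NumNonAbort{\udd{X}}\cdot\NumNonAbort{\udd{Y}},\qquad
\NumNonAbort{(\udd{X}\ \msquare\ \udd{Y})}=\NumNonAbort{\udd{X}}+\NumNonAbort{\udd{Y}},\\
\NumNonAbort{\uqif{M[\qvec{q}]=\overline{m\to\udd{P_m}}}}=\max_m \NumNonAbort{\udd{P_m}}.
\end{gather*}
The first two follow from the (Sequence) and (Sum) compilation clauses once the dichotomy collapses the abort sub-cases (and using that a sequence essentially aborts iff one of its factors does); the third is exactly the content of the ``fill and break'' procedure (Figure~\ref{RTAT}): after padding, precisely $|C_0|=\max_m\NumNonAbort{\udd{P_m}}$ case statements are produced, and each is non-aborting because the branch attaining the maximum supplies a non-aborting program at every index $j^*$.

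With these in hand the induction is short. For the base cases, (Trivial) and (Trivial-U) send $P(\vth)$ to $\udd{\cabort[\VA]}$, so $\NumNonAbort{\parthetagenNotRelation{P(\vth)}}=0=\ResCount{j}{P(\vth)}$; a rotation/coupling genuinely using $\theta_j$ transforms to the single gadget program $R{\boldgreek{'}}_{\sigma}$, giving $\NumNonAbort{\parthetagenNotRelation{P(\vth)}}=1=\ResCount{j}{P(\vth)}$. For (Sequence) $P=P_1;P_2$ the rule gives $\parthetagenNotRelation{P}=(P_1;\parthetagenNotRelation{P_2})\ \msquare\ (\parthetagenNotRelation{P_1};P_2)$, and the identities with $\NumNonAbort{P_i}\le 1$ yield
\[
\NumNonAbort{\parthetagenNotRelation{P}}=\NumNonAbort{P_1}\,\NumNonAbort{\parthetagenNotRelation{P_2}}+\NumNonAbort{\parthetagenNotRelation{P_1}}\,\NumNonAbort{P_2}\le \NumNonAbort{\parthetagenNotRelation{P_2}}+\NumNonAbort{\parthetagenNotRelation{P_1}},
\]
which is $\le\ResCount{j}{P_1}+\ResCount{j}{P_2}=\ResCount{j}{P}$ by the induction hypothesis and the (Sequence) clause of the occurrence count. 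For (Case), the $\max$ identity and the hypothesis give $\NumNonAbort{\parthetagenNotRelation{P}}=\max_m\NumNonAbort{\parthetagenNotRelation{P_m}}\le\max_m\ResCount{j}{P_m}=\ResCount{j}{P}$.

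The remaining and most delicate case is $\qwhileTParaStandard$. Since the bounded loop is a macro, I would unfold one layer, $W_T\equiv\mathbf{case}\ M[\qvec{q}]=\{ 0\to\cskip,\ 1\to P_1;W_{T-1}\}$, and run an inner induction on $T$ with hypothesis $\NumNonAbort{\parthetagenNotRelation{W_T}}\le T\cdot\ResCount{j}{P_1}=\ResCount{j}{W_T}$. Differentiating through (Case) and (Sequence) and using that $W_{T-1}$ is a normal program (so $\NumNonAbort{W_{T-1}}\le 1$) gives $\NumNonAbort{\parthetagenNotRelation{W_T}}=\NumNonAbort{\parthetagenNotRelation{P_1;W_{T-1}}}\le\NumNonAbort{\parthetagenNotRelation{W_{T-1}}}+\NumNonAbort{\parthetagenNotRelation{P_1}}$; the base case $T=1$ has body $P_1;\cabort$, both summands of whose differentiation essentially abort, so $\NumNonAbort{\parthetagenNotRelation{W_1}}=0\le\ResCount{j}{P_1}$, and the inductive step adds at most $\ResCount{j}{P_1}$ per layer by the outer hypothesis on $P_1$. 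The hard part will be the abort bookkeeping: the dichotomy from Figure~\ref{fig::compileReal} is what makes the three identities hold exactly (no error terms), and the observation that the final loop iteration contributes nothing because its body is sequenced with $\cabort$ is precisely what keeps the while bound tight at $T\cdot\ResCount{j}{P_1}$.
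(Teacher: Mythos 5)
Your proof is correct and follows essentially the same route as the paper's: a structural induction over the syntax in which the sequence case reduces to additivity of $\NumNonAbort{\cdot}$ over $\msquare$ together with $\NumNonAbort{Q(\vth)}\leq 1$ for normal programs, the case statement reduces to the $\max$ coming from the ``fill and break'' procedure, and the bounded loop is handled by an inner induction on $T$ after unfolding it into case and sequence. The only difference is that you isolate and justify the three compositional identities explicitly (via the abort dichotomy), which the paper leaves implicit in its step $(*)$.
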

\begin{proof} Structural induction. For details, see 
\appdName\iftechrep~\ref{appd::NOjisSize}\else\fi.
\end{proof}

\section{Implementation and Case Study} \label{sec::case}
We have built a compiler (written in OCaml) that implements our code transformation and compilation rules\footnote{Codes are availabe at \url{https://github.com/LibertasSpZ/adcompile}.}.
We use it to train one VQC instance with controls and empirically verify its resource-efficiency on representative VQC instances. 
%All codes and test programs are available in the supplementary materials. 
Complete details can be found in \appdName\iftechrep~\ref{appd::evalPen}\else\fi.
Experiments are performed on a MacBook Pro with a Dual-Core Intel Core i5 Processor clocked at 2.7 GHz, and 8GB of RAM. 

\subsection{Training VQC instances with controls}\label{subsec::P1P2}
Consider a simple classification problem over 4-bit inputs $z = z_1z_2z_3z_4 \in \{0,1\}^4$ with true label given by $f(z) = \neg(z_1 \oplus z_4)$. 
We construct two $4$-qubit VQCs $P_1$ (\textbf{no control}) and $P_2$ (\textbf{with control}) that consists of a single-qubit Pauli X,Y and Z rotation gate on each qubit and compare their performance. 

For parameters $\Gamma = \{\gamma_1,\dots,\gamma_{12}\}$ define the program
\begin{equation*}
  {\left. \begin{array}{rl} {Q(\Gamma)}\equiv 
    &R_X(\gamma_1)[q_1]; R_X(\gamma_2)[q_2]; R_X(\gamma_3)[q_3]; R_X(\gamma_4)[q_4]; \\
  & R_Y(\gamma_5)[q_1]; R_Y(\gamma_6)[q_2]; R_Y(\gamma_7)[q_3]; R_Y(\gamma_8)[q_4]; \\
  & R_Z(\gamma_9)[q_1]; R_Z(\gamma_{10})[q_2]; R_Z(\gamma_{11})[q_3]; R_Z(\gamma_{12})[q_4],
\end{array}  \right. }
\end{equation*}
where $q_1, q_2, q_3, q_4$ refer to 4 qubit registers. Given parameters $\Theta = \{\theta_1,\dots,\theta_{12}\},\Phi = \{\phi_1,\dots,\phi_{12}\}$, define
\begin{align}
 P_1(\Theta,\Phi)  \equiv  &Q(\Theta); Q(\Phi).
\end{align}
Similarly, for parameters $\Theta = \{\theta_1,\dots,\theta_{12}\}$,$\Phi = \{\phi_1,\dots,\phi_{12}\}$, $\Psi = \{\psi_1,\dots,\psi_{12}\}$, define 
\begin{equation}
  {\left. \begin{array}{rl} P_2(\Theta,\Phi,\Psi) \equiv  Q(\Theta);
    {\mathbf{case}}\ M[{q_1}]=
  0\to  & \enskip Q(\Phi) \\
  1\to  &\enskip Q(\Psi).
\end{array}  \right. }
\end{equation}
Note that $P_1$ and $P_2$ execute the same number of gates for each run.
To use $P_i$ to perform the classification or in the training, we first initialize $q_1,q_2,q_3,q_4$ to the classical feature vector $z = z_1z_2z_3z_4$ and then execute $P_i$. The predicted label $y$ is given by measuring the $4^{\mathrm{th}}$ qubit $q_4$ in the 0/1 basis. 

We conduct a supervised learning by minimizing a \emph{loss} function. A natural choice is the average negative log-likelihood which is commonly used in machine learning to evaluate classifiers 
that assign a certain probability to each label since quantum outcomes are probabilistic. However, this loss function is not currently supported by Pennylane. 
Denote the output of the classifier with input $z$ and parameters $\vth$ by $l_{\vth}(z)$. 
To enable a direct comparison, we will treat $l_{\vth}(z)$ as the average value of the labels from probabilistic quantum outcomes, and use the squared loss function as follows: 
\begin{align}
  \mathrm{loss} &= \sum_{z \in \{0,1\}^4} 0.5*(l_{\vth}(z) - f(z))^2.
\end{align}
Note that $\mathrm{loss}$ is a function of $\vth=(\Theta,\Phi)$ (or $\Theta,\Phi.\Psi$). More importantly, for each $z$, $l_{\vth}(z)$ can be represented by the observable semantics of $P_1$(or $P_2$) with observable $\proj 1$. 
Thus, the gradient of $\mathrm{loss}$ can be obtained by using the collection of $\frac{\partial}{\partial\alpha}(P_1)$ for $\alpha \in \Theta,\Phi$ (or $\frac{\partial}{\partial\alpha}(P_2)$ for $\alpha \in \Theta,\Phi, \Psi$). We classically simulate the training procedure with gradient descent. For the training of $P_1$, we use Pennylane for a direct comparison (see \fig{control-separation}). 
After 1000 epochs with some hyperparameters, the loss for $P_1$ (\textbf{no control}) attains a minimum of $0.5$ in less than $100$ epochs and subsequently plateaus. The loss for $P_2$ (\textbf{with control}) continues to decrease and attains a minimum of $0.016$. It demonstrates the advantage of both controls in quantum machine learning and our scheme to handle controls, whereas previous schemes (such as Pennylane due to its quantum-node design~\cite{bergholm2018pennylane}) fail to do so.  

\vspace{-2mm}
\begin{figure}[htbp]
    \centering
    \includegraphics[scale=0.4]{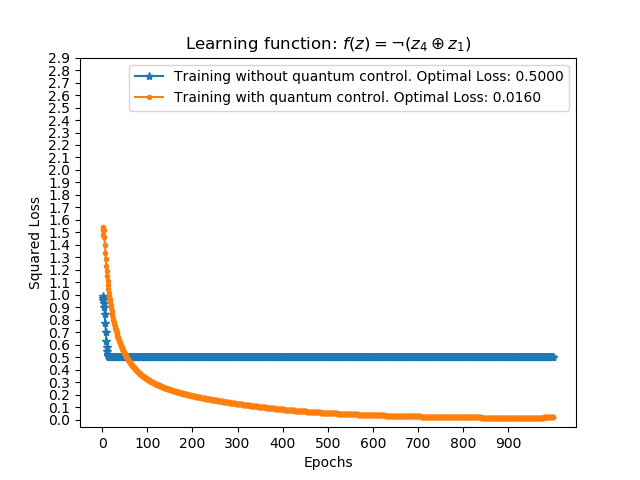}
    \caption{Training $P_1$ and $P_2$ to classify inputs according to the labelling function $f(z) = \neg(z_1 \oplus z_4)$.}
    \label{fig:control-separation}
\end{figure}

\subsection{Benchmark testing on representative VQCs}
We also test our compiler on important VQC candidates such as quantum neural-networks (QNN) for solving machine learning tasks~\cite{EH18}, quantum approximate optimization algorithms (QAOA) for solving combinatorial optimization~\cite{QAOA}, and variational quantum eigensolver (VQE) for approximating ground state energies in quantum chemsitry~\cite{NC-VQE}, all of which are promising candidates for actual implementation on near-term quantum machines. 
These VQCs typically consists of alternating \emph{layers} of single-qubit gates and two-qubit coupling gates, such as the 1-qubit, 2-qubit Pauli rotation gates considered in our paper, to represent the alternation between local interaction and neighboring interaction in real quantum physics systems. 

We enrich these examples, by adding simple controls (the \emph{if/condition} statement) or 2-bounded loops (the bounded-\emph{while} statement) and increasing the number of qubits to 18$\sim$40, to make them sufficiently sophisticated but yet realistic for near-term quantum applications. For example, we use $\QNN_{\mathbf{M,i}}$ to denote an enriched QNN VQC instance of \textbf{m}edium size and with \textbf{i}f controls.  The size of $\QNN_{M,i}$ can also be directly illustrated by the number of qubits ($\numqb$), the gate count ($\numgate$), the number of alternating layers ($\numlayer$), and the number of lines to code such instances ($\numline$). 
Similarly for $\QNN_{\mathbf{L,w}}$ except that it is an instance of \textbf{l}arge size and with \textbf{w}hile controls. 
\begin{table}[b]
    \centering
\begin{tabular}{ |p{1.2cm}|
p{0.81cm}|p{0.91cm}|p{0.84cm}|p{0.87cm}|p{0.87cm}|p{0.87cm}|  }
 \hline
 $P(\vth)$     &
 $\ResCount{}{\cdot}$ &
 $\NumNonAbort{\parthetagenNotRelation{\cdot}}$
 &   $\numgate$ 
 & 
 $\numline$ &
 $\numlayer$
 &$\numqb$ \\
\hline

$\QNN_{M,i}$& 24&	24	&165	&189	&	3	&18\\
\hline

$\QNN_{M,w}$ &56&	24&	231	&121	&	5&	18\\
\hline

$\QNN_{L,i}$&48 &	48&	363	&414	&	6&	36\\
\hline

$\QNN_{L,w}$& 504& 	48	& 2079& 	244	& 	33	&36\\
\hline

$\VQE_{M,i}$ & 15 & 	15	& 224	& 241 &		3	& 12\\
\hline 

$\VQE_{M,w}$ & 35 & 	15 &	224 &	112	&	5	& 12\\
\hline 

$\VQE_{L,i}$ & 40	 & 40	& 576&	628	&	5	&40\\
\hline 

$\VQE_{L,w}$ & 248	& 40 & 1984	& 368	&	17	& 40\\
\hline

$\QAOA_{M,i}$ & 18& 	18& 	120&	142	&	3&	18\\
\hline 

$\QAOA_{M,w}$ & 42	& 18& 	168	& 94	&	5	& 18\\
\hline 

$\QAOA_{L,i}$ & 36	& 36 &	264	 & 315	& 	6 &	36\\
\hline 

$\QAOA_{L,w}$&  378	 & 36	& 1512	& 190	&	33 &	36\\
\hline 

\end{tabular}
 \caption{Output on selective examples. $\{M,L\}$ stands for ``medium, large''; $\{i,w\}$ stands for including ``if, while''.} 

    \label{tab:my_little_tabel}
\end{table}

A selective output performance of our compiler is in Table~\ref{tab:my_little_tabel}, with details in \appdName\iftechrep~\ref{appd::evalPen}\else\fi.
It is easy to see that our scheme is also empirically resource-efficient as $\NumNonAbort{\parthetagenNotRelation{\cdot}}$ is always reasonably bounded.

\bibliography{references.bib}

%%% -*-BibTeX-*-
%%% Do NOT edit. File created by BibTeX with style
%%% ACM-Reference-Format-Journals [18-Jan-2012].

\begin{thebibliography}{58}

%%% ====================================================================
%%% NOTE TO THE USER: you can override these defaults by providing
%%% customized versions of any of these macros before the \bibliography
%%% command.  Each of them MUST provide its own final punctuation,
%%% except for \shownote{}, \showDOI{}, and \showURL{}.  The latter two
%%% do not use final punctuation, in order to avoid confusing it with
%%% the Web address.
%%%
%%% To suppress output of a particular field, define its macro to expand
%%% to an empty string, or better, \unskip, like this:
%%%
%%% \newcommand{\showDOI}[1]{\unskip}   % LaTeX syntax
%%%
%%% \def \showDOI #1{\unskip}           % plain TeX syntax
%%%
%%% ====================================================================

\ifx \showCODEN    \undefined \def \showCODEN     #1{\unskip}     \fi
\ifx \showDOI      \undefined \def \showDOI       #1{#1}\fi
\ifx \showISBNx    \undefined \def \showISBNx     #1{\unskip}     \fi
\ifx \showISBNxiii \undefined \def \showISBNxiii  #1{\unskip}     \fi
\ifx \showISSN     \undefined \def \showISSN      #1{\unskip}     \fi
\ifx \showLCCN     \undefined \def \showLCCN      #1{\unskip}     \fi
\ifx \shownote     \undefined \def \shownote      #1{#1}          \fi
\ifx \showarticletitle \undefined \def \showarticletitle #1{#1}   \fi
\ifx \showURL      \undefined \def \showURL       {\relax}        \fi
% The following commands are used for tagged output and should be
% invisible to TeX
\providecommand\bibfield[2]{#2}
\providecommand\bibinfo[2]{#2}
\providecommand\natexlab[1]{#1}
\providecommand\showeprint[2][]{arXiv:#2}

\bibitem[\protect\citeauthoryear{Abadi and Plotkin}{Abadi and Plotkin}{2019}]%
        {AP20}
\bibfield{author}{\bibinfo{person}{Mart\'{\i}n Abadi} {and}
  \bibinfo{person}{Gordon~D. Plotkin}.} \bibinfo{year}{2019}\natexlab{}.
\newblock \showarticletitle{A Simple Differentiable Programming Language}.
\newblock \bibinfo{journal}{\emph{Proc. ACM Program. Lang.}}
  \bibinfo{volume}{4}, \bibinfo{number}{POPL}, Article \bibinfo{articleno}{38}
  (\bibinfo{date}{Dec.} \bibinfo{year}{2019}), \bibinfo{numpages}{28}~pages.
\newblock
\urldef\tempurl%
\url{https://doi.org/10.1145/3371106}
\showDOI{\tempurl}


\bibitem[\protect\citeauthoryear{Abhari, Faruque, Dousti, Svec, Catu,
  Chakrabati, Chiang, Vanderwilt, Black, Chong, Martonosi, Suchara, Brown,
  Pedram, and Brun}{Abhari et~al\mbox{.}}{2012}]%
        {Sca12}
\bibfield{author}{\bibinfo{person}{Ali~Javadi Abhari}, \bibinfo{person}{Arvin
  Faruque}, \bibinfo{person}{Mohammad~Javad Dousti}, \bibinfo{person}{Lukas
  Svec}, \bibinfo{person}{Oana Catu}, \bibinfo{person}{Amlan Chakrabati},
  \bibinfo{person}{Chen-Fu Chiang}, \bibinfo{person}{Seth Vanderwilt},
  \bibinfo{person}{John Black}, \bibinfo{person}{Fred Chong},
  \bibinfo{person}{Margaret Martonosi}, \bibinfo{person}{Martin Suchara},
  \bibinfo{person}{Ken Brown}, \bibinfo{person}{Massoud Pedram}, {and}
  \bibinfo{person}{Todd Brun}.} \bibinfo{year}{2012}\natexlab{}.
\newblock \bibinfo{booktitle}{\emph{Scaffold: Quantum Programming Language}}.
\newblock \bibinfo{type}{{T}echnical {R}eport} TR-934-12.
  \bibinfo{institution}{Princeton University}.
\newblock


\bibitem[\protect\citeauthoryear{Arute, Arya, Babbush, Bacon, Bardin, Barends,
  Biswas, Boixo, Brandao, Buell, Burkett, Chen, Chen, Chiaro, Collins,
  Courtney, Dunsworth, Farhi, Foxen, Fowler, Gidney, Giustina, Graff, Guerin,
  Habegger, Harrigan, Hartmann, Ho, Hoffmann, Huang, Humble, Isakov, Jeffrey,
  Jiang, Kafri, Kechedzhi, Kelly, Klimov, Knysh, Korotkov, Kostritsa, Landhuis,
  Lindmark, Lucero, Lyakh, Mandr{\`a}, McClean, McEwen, Megrant, Mi,
  Michielsen, Mohseni, Mutus, Naaman, Neeley, Neill, Niu, Ostby, Petukhov,
  Platt, Quintana, Rieffel, Roushan, Rubin, Sank, Satzinger, Smelyanskiy, Sung,
  Trevithick, Vainsencher, Villalonga, White, Yao, Yeh, Zalcman, Neven, and
  Martinis}{Arute et~al\mbox{.}}{2019}]%
        {google-supremacy}
\bibfield{author}{\bibinfo{person}{Frank Arute}, \bibinfo{person}{Kunal Arya},
  \bibinfo{person}{Ryan Babbush}, \bibinfo{person}{Dave Bacon},
  \bibinfo{person}{Joseph~C. Bardin}, \bibinfo{person}{Rami Barends},
  \bibinfo{person}{Rupak Biswas}, \bibinfo{person}{Sergio Boixo},
  \bibinfo{person}{Fernando G. S.~L. Brandao}, \bibinfo{person}{David~A.
  Buell}, \bibinfo{person}{Brian Burkett}, \bibinfo{person}{Yu Chen},
  \bibinfo{person}{Zijun Chen}, \bibinfo{person}{Ben Chiaro},
  \bibinfo{person}{Roberto Collins}, \bibinfo{person}{William Courtney},
  \bibinfo{person}{Andrew Dunsworth}, \bibinfo{person}{Edward Farhi},
  \bibinfo{person}{Brooks Foxen}, \bibinfo{person}{Austin Fowler},
  \bibinfo{person}{Craig Gidney}, \bibinfo{person}{Marissa Giustina},
  \bibinfo{person}{Rob Graff}, \bibinfo{person}{Keith Guerin},
  \bibinfo{person}{Steve Habegger}, \bibinfo{person}{Matthew~P. Harrigan},
  \bibinfo{person}{Michael~J. Hartmann}, \bibinfo{person}{Alan Ho},
  \bibinfo{person}{Markus Hoffmann}, \bibinfo{person}{Trent Huang},
  \bibinfo{person}{Travis~S. Humble}, \bibinfo{person}{Sergei~V. Isakov},
  \bibinfo{person}{Evan Jeffrey}, \bibinfo{person}{Zhang Jiang},
  \bibinfo{person}{Dvir Kafri}, \bibinfo{person}{Kostyantyn Kechedzhi},
  \bibinfo{person}{Julian Kelly}, \bibinfo{person}{Paul~V. Klimov},
  \bibinfo{person}{Sergey Knysh}, \bibinfo{person}{Alexander Korotkov},
  \bibinfo{person}{Fedor Kostritsa}, \bibinfo{person}{David Landhuis},
  \bibinfo{person}{Mike Lindmark}, \bibinfo{person}{Erik Lucero},
  \bibinfo{person}{Dmitry Lyakh}, \bibinfo{person}{Salvatore Mandr{\`a}},
  \bibinfo{person}{Jarrod~R. McClean}, \bibinfo{person}{Matthew McEwen},
  \bibinfo{person}{Anthony Megrant}, \bibinfo{person}{Xiao Mi},
  \bibinfo{person}{Kristel Michielsen}, \bibinfo{person}{Masoud Mohseni},
  \bibinfo{person}{Josh Mutus}, \bibinfo{person}{Ofer Naaman},
  \bibinfo{person}{Matthew Neeley}, \bibinfo{person}{Charles Neill},
  \bibinfo{person}{Murphy~Yuezhen Niu}, \bibinfo{person}{Eric Ostby},
  \bibinfo{person}{Andre Petukhov}, \bibinfo{person}{John~C. Platt},
  \bibinfo{person}{Chris Quintana}, \bibinfo{person}{Eleanor~G. Rieffel},
  \bibinfo{person}{Pedram Roushan}, \bibinfo{person}{Nicholas~C. Rubin},
  \bibinfo{person}{Daniel Sank}, \bibinfo{person}{Kevin~J. Satzinger},
  \bibinfo{person}{Vadim Smelyanskiy}, \bibinfo{person}{Kevin~J. Sung},
  \bibinfo{person}{Matthew~D. Trevithick}, \bibinfo{person}{Amit Vainsencher},
  \bibinfo{person}{Benjamin Villalonga}, \bibinfo{person}{Theodore White},
  \bibinfo{person}{Z.~Jamie Yao}, \bibinfo{person}{Ping Yeh},
  \bibinfo{person}{Adam Zalcman}, \bibinfo{person}{Hartmut Neven}, {and}
  \bibinfo{person}{John~M. Martinis}.} \bibinfo{year}{2019}\natexlab{}.
\newblock \showarticletitle{Quantum supremacy using a programmable
  superconducting processor}.
\newblock \bibinfo{journal}{\emph{Nature}} \bibinfo{volume}{574},
  \bibinfo{number}{7779} (\bibinfo{year}{2019}), \bibinfo{pages}{505--510}.
\newblock


\bibitem[\protect\citeauthoryear{Baltag and Smets}{Baltag and Smets}{2011}]%
        {Baltag2011}
\bibfield{author}{\bibinfo{person}{Alexandru Baltag} {and}
  \bibinfo{person}{Sonja Smets}.} \bibinfo{year}{2011}\natexlab{}.
\newblock \showarticletitle{Quantum Logic as a Dynamic Logic}.
\newblock \bibinfo{journal}{\emph{Synthese}} \bibinfo{volume}{179},
  \bibinfo{number}{2} (\bibinfo{year}{2011}).
\newblock


\bibitem[\protect\citeauthoryear{Baydin, Pearlmutter, Radul, and
  Siskind}{Baydin et~al\mbox{.}}{2017}]%
        {Baydin:2017}
\bibfield{author}{\bibinfo{person}{At\i l\i m~G\"{u}nes Baydin},
  \bibinfo{person}{Barak~A. Pearlmutter}, \bibinfo{person}{Alexey~Andreyevich
  Radul}, {and} \bibinfo{person}{Jeffrey~Mark Siskind}.}
  \bibinfo{year}{2017}\natexlab{}.
\newblock \showarticletitle{Automatic Differentiation in Machine Learning: A
  Survey}.
\newblock \bibinfo{journal}{\emph{J. Mach. Learn. Res.}} \bibinfo{volume}{18},
  \bibinfo{number}{1} (\bibinfo{date}{Jan.} \bibinfo{year}{2017}),
  \bibinfo{pages}{5595--5637}.
\newblock
\showISSN{1532-4435}
\urldef\tempurl%
\url{http://dl.acm.org/citation.cfm?id=3122009.3242010}
\showURL{%
\tempurl}


\bibitem[\protect\citeauthoryear{Beck and Fischer}{Beck and Fischer}{1994}]%
        {BF94}
\bibfield{author}{\bibinfo{person}{Thomas Beck} {and} \bibinfo{person}{Herbert
  Fischer}.} \bibinfo{year}{1994}\natexlab{}.
\newblock \showarticletitle{The if-problem in automatic differentiation}.
\newblock \bibinfo{journal}{\emph{J. Comput. Appl. Math.}}
  \bibinfo{volume}{50}, \bibinfo{number}{1-3} (\bibinfo{year}{1994}),
  \bibinfo{pages}{119--131}.
\newblock


\bibitem[\protect\citeauthoryear{{Benedetti}, {Lloyd}, and {Sack}}{{Benedetti}
  et~al\mbox{.}}{2019}]%
        {2019arXiv190607682B}
\bibfield{author}{\bibinfo{person}{Marcello {Benedetti}},
  \bibinfo{person}{Erika {Lloyd}}, {and} \bibinfo{person}{Stefan {Sack}}.}
  \bibinfo{year}{2019}\natexlab{}.
\newblock \showarticletitle{{Parameterized quantum circuits as machine learning
  models}}.
\newblock \bibinfo{journal}{\emph{arXiv e-prints}} (\bibinfo{date}{Jun}
  \bibinfo{year}{2019}).
\newblock
\showeprint[arxiv]{1906.07682}


\bibitem[\protect\citeauthoryear{Bergholm, Izaac, Schuld, Gogolin, and
  Killoran}{Bergholm et~al\mbox{.}}{2018}]%
        {bergholm2018pennylane}
\bibfield{author}{\bibinfo{person}{Ville Bergholm}, \bibinfo{person}{Josh
  Izaac}, \bibinfo{person}{Maria Schuld}, \bibinfo{person}{Christian Gogolin},
  {and} \bibinfo{person}{Nathan Killoran}.} \bibinfo{year}{2018}\natexlab{}.
\newblock \showarticletitle{PennyLane: Automatic differentiation of hybrid
  quantum-classical computations}.
\newblock \bibinfo{journal}{\emph{arXiv:1811.04968}} (\bibinfo{year}{2018}).
\newblock


\bibitem[\protect\citeauthoryear{Biamonte, Wittek, Pancotti, Rebentrost, Wiebe,
  and Lloyd}{Biamonte et~al\mbox{.}}{2017}]%
        {biamonte2017quantum}
\bibfield{author}{\bibinfo{person}{Jacob Biamonte}, \bibinfo{person}{Peter
  Wittek}, \bibinfo{person}{Nicola Pancotti}, \bibinfo{person}{Patrick
  Rebentrost}, \bibinfo{person}{Nathan Wiebe}, {and} \bibinfo{person}{Seth
  Lloyd}.} \bibinfo{year}{2017}\natexlab{}.
\newblock \showarticletitle{Quantum machine learning}.
\newblock \bibinfo{journal}{\emph{Nature}} \bibinfo{volume}{549},
  \bibinfo{number}{7671} (\bibinfo{year}{2017}), \bibinfo{pages}{195}.
\newblock


\bibitem[\protect\citeauthoryear{Brunet and Jorrand}{Brunet and
  Jorrand}{2004}]%
        {BJ04}
\bibfield{author}{\bibinfo{person}{Olivier Brunet} {and}
  \bibinfo{person}{Philippe Jorrand}.} \bibinfo{year}{2004}\natexlab{}.
\newblock \showarticletitle{Dynamic Quantum Logic for Quantum Programs}.
\newblock \bibinfo{journal}{\emph{International Journal of Quantum
  Information}} \bibinfo{volume}{2}, \bibinfo{number}{1}
  (\bibinfo{year}{2004}).
\newblock


\bibitem[\protect\citeauthoryear{Chadha, Mateus, and Sernadas}{Chadha
  et~al\mbox{.}}{2006}]%
        {CHADHA200619}
\bibfield{author}{\bibinfo{person}{Rohit Chadha}, \bibinfo{person}{Paulo
  Mateus}, {and} \bibinfo{person}{Am{\'i}lcar Sernadas}.}
  \bibinfo{year}{2006}\natexlab{}.
\newblock \showarticletitle{Reasoning About Imperative Quantum Programs}.
\newblock \bibinfo{journal}{\emph{Electronic Notes in Theoretical Computer
  Science}}  \bibinfo{volume}{158} (\bibinfo{year}{2006}).
\newblock


\bibitem[\protect\citeauthoryear{Corliss, Faure, Griewank, Hasco\"{e}t, and
  Naumann}{Corliss et~al\mbox{.}}{2002}]%
        {Corliss:2000:ADA:571034}
\bibfield{editor}{\bibinfo{person}{George Corliss},
  \bibinfo{person}{Christ\`{e}le Faure}, \bibinfo{person}{Andreas Griewank},
  \bibinfo{person}{Lauren Hasco\"{e}t}, {and} \bibinfo{person}{Uwe Naumann}}
  (Eds.). \bibinfo{year}{2002}\natexlab{}.
\newblock \bibinfo{booktitle}{\emph{Automatic Differentiation of Algorithms:
  From Simulation to Optimization}}.
\newblock \bibinfo{publisher}{Springer-Verlag New York, Inc.},
  \bibinfo{address}{New York, NY, USA}.
\newblock
\showISBNx{0-387-95305-1}


\bibitem[\protect\citeauthoryear{D'Hondt and Panangaden}{D'Hondt and
  Panangaden}{2006}]%
        {DP2006}
\bibfield{author}{\bibinfo{person}{Ellie D'Hondt} {and}
  \bibinfo{person}{Prakash Panangaden}.} \bibinfo{year}{2006}\natexlab{}.
\newblock \showarticletitle{Quantum Weakest Preconditions}.
\newblock \bibinfo{journal}{\emph{Mathematical Structures in Computer Science}}
  \bibinfo{volume}{16}, \bibinfo{number}{3} (\bibinfo{year}{2006}).
\newblock


\bibitem[\protect\citeauthoryear{Ehrhard and Regnier}{Ehrhard and
  Regnier}{2003}]%
        {ehrhard2003differential}
\bibfield{author}{\bibinfo{person}{Thomas Ehrhard} {and}
  \bibinfo{person}{Laurent Regnier}.} \bibinfo{year}{2003}\natexlab{}.
\newblock \showarticletitle{The differential lambda-calculus}.
\newblock \bibinfo{journal}{\emph{Theoretical Computer Science}}
  \bibinfo{volume}{309}, \bibinfo{number}{1-3} (\bibinfo{year}{2003}),
  \bibinfo{pages}{1--41}.
\newblock


\bibitem[\protect\citeauthoryear{Elliott}{Elliott}{2009}]%
        {Elliott:2009}
\bibfield{author}{\bibinfo{person}{Conal~M. Elliott}.}
  \bibinfo{year}{2009}\natexlab{}.
\newblock \showarticletitle{Beautiful Differentiation}. In
  \bibinfo{booktitle}{\emph{Proceedings of the 14th ACM SIGPLAN International
  Conference on Functional Programming}} (Edinburgh, Scotland)
  \emph{(\bibinfo{series}{ICFP '09})}. \bibinfo{publisher}{ACM},
  \bibinfo{address}{New York, NY, USA}, \bibinfo{pages}{191--202}.
\newblock
\showISBNx{978-1-60558-332-7}
\urldef\tempurl%
\url{https://doi.org/10.1145/1596550.1596579}
\showDOI{\tempurl}


\bibitem[\protect\citeauthoryear{Elliott}{Elliott}{2018}]%
        {Elliott:2018}
\bibfield{author}{\bibinfo{person}{Conal~M. Elliott}.}
  \bibinfo{year}{2018}\natexlab{}.
\newblock \showarticletitle{The Simple Essence of Automatic Differentiation}.
\newblock \bibinfo{journal}{\emph{Proc. ACM Program. Lang.}}
  \bibinfo{volume}{2}, \bibinfo{number}{ICFP}, Article \bibinfo{articleno}{70}
  (\bibinfo{date}{July} \bibinfo{year}{2018}), \bibinfo{numpages}{29}~pages.
\newblock
\showISSN{2475-1421}
\urldef\tempurl%
\url{https://doi.org/10.1145/3236765}
\showDOI{\tempurl}


\bibitem[\protect\citeauthoryear{Farhi, Goldstone, and Gutmann}{Farhi
  et~al\mbox{.}}{2014}]%
        {QAOA}
\bibfield{author}{\bibinfo{person}{Edward Farhi}, \bibinfo{person}{Jeffrey
  Goldstone}, {and} \bibinfo{person}{Sam Gutmann}.}
  \bibinfo{year}{2014}\natexlab{}.
\newblock \bibinfo{title}{A Quantum Approximate Optimization Algorithm}.
  (\bibinfo{year}{2014}).
\newblock
\showeprint{1411.4028}


\bibitem[\protect\citeauthoryear{{Farhi} and {Neven}}{{Farhi} and
  {Neven}}{2018}]%
        {EH18}
\bibfield{author}{\bibinfo{person}{Edward {Farhi}} {and}
  \bibinfo{person}{Hartmut {Neven}}.} \bibinfo{year}{2018}\natexlab{}.
\newblock \showarticletitle{{Classification with Quantum Neural Networks on
  Near Term Processors}}.
\newblock  (\bibinfo{year}{2018}).
\newblock
\showeprint[arxiv]{1802.06002}


\bibitem[\protect\citeauthoryear{Feng, Duan, Ji, and Ying}{Feng
  et~al\mbox{.}}{2007}]%
        {Feng:2007}
\bibfield{author}{\bibinfo{person}{Yuan Feng}, \bibinfo{person}{Runyao Duan},
  \bibinfo{person}{Zhengfeng Ji}, {and} \bibinfo{person}{Mingsheng Ying}.}
  \bibinfo{year}{2007}\natexlab{}.
\newblock \showarticletitle{Proof Rules for the Correctness of Quantum
  Programs}.
\newblock \bibinfo{journal}{\emph{Theoretical Computer Science}}
  \bibinfo{volume}{386}, \bibinfo{number}{1-2} (\bibinfo{year}{2007}).
\newblock


\bibitem[\protect\citeauthoryear{Gay}{Gay}{2006}]%
        {Gay:2006}
\bibfield{author}{\bibinfo{person}{Simon~J. Gay}.}
  \bibinfo{year}{2006}\natexlab{}.
\newblock \showarticletitle{Quantum Programming Languages: Survey and
  Bibliography}.
\newblock \bibinfo{journal}{\emph{Mathematical Structures in Computer Science}}
  \bibinfo{volume}{16}, \bibinfo{number}{4} (\bibinfo{year}{2006}).
\newblock


\bibitem[\protect\citeauthoryear{{Giacomo Guerreschi} and
  {Smelyanskiy}}{{Giacomo Guerreschi} and {Smelyanskiy}}{2017}]%
        {GM17}
\bibfield{author}{\bibinfo{person}{Gian {Giacomo Guerreschi}} {and}
  \bibinfo{person}{Mikhail {Smelyanskiy}}.} \bibinfo{year}{2017}\natexlab{}.
\newblock \showarticletitle{{Practical optimization for hybrid
  quantum-classical algorithms}}.
\newblock  (\bibinfo{year}{2017}).
\newblock
\showeprint[arxiv]{1701.01450}


\bibitem[\protect\citeauthoryear{Giles}{Giles}{2019}]%
        {IBM-53}
\bibfield{author}{\bibinfo{person}{Martin Giles}.}
  \bibinfo{year}{2019}\natexlab{}.
\newblock \bibinfo{booktitle}{\emph{IBM’s new 53-qubit quantum computer is
  the most powerful machine you can use}}.
\newblock
\urldef\tempurl%
\url{https://www.technologyreview.com/f/614346/ibms-new-53-qubit-quantum-computer-is-the-most-powerful-machine-you-can-use/}
\showURL{%
\tempurl}


\bibitem[\protect\citeauthoryear{Gokhale}{Gokhale}{2018}]%
        {PG18}
\bibfield{author}{\bibinfo{person}{Pranav Gokhale}.}
  \bibinfo{year}{2018}\natexlab{}.
\newblock \showarticletitle{Variational Quantum Eigensolver Demo}.
\newblock \bibinfo{journal}{\emph{ISCA 2018}} (\bibinfo{year}{2018}).
\newblock


\bibitem[\protect\citeauthoryear{Grattage}{Grattage}{2005}]%
        {AG05}
\bibfield{author}{\bibinfo{person}{Jonathan Grattage}.}
  \bibinfo{year}{2005}\natexlab{}.
\newblock \showarticletitle{A Functional Quantum Programming Language}. In
  \bibinfo{booktitle}{\emph{Proceedings of the 20th Annual IEEE Symposium on
  Logic in Computer Science}} \emph{(\bibinfo{series}{LICS ?05})}.
  \bibinfo{publisher}{IEEE Computer Society}, \bibinfo{address}{USA},
  \bibinfo{pages}{249?258}.
\newblock
\showISBNx{0769522661}
\urldef\tempurl%
\url{https://doi.org/10.1109/LICS.2005.1}
\showDOI{\tempurl}


\bibitem[\protect\citeauthoryear{Graves, Wayne, Reynolds, Harley, Danihelka,
  Grabska-Barwi{\'n}ska, Colmenarejo, Grefenstette, Ramalho, Agapiou, Badia,
  Hermann, Zwols, Ostrovski, Cain, King, Summerfield, Blunsom, Kavukcuoglu, and
  Hassabis}{Graves et~al\mbox{.}}{2016}]%
        {deep-mind}
\bibfield{author}{\bibinfo{person}{Alex Graves}, \bibinfo{person}{Greg Wayne},
  \bibinfo{person}{Malcolm Reynolds}, \bibinfo{person}{Tim Harley},
  \bibinfo{person}{Ivo Danihelka}, \bibinfo{person}{Agnieszka
  Grabska-Barwi{\'n}ska}, \bibinfo{person}{Sergio~G{\'o}mez Colmenarejo},
  \bibinfo{person}{Edward Grefenstette}, \bibinfo{person}{Tiago Ramalho},
  \bibinfo{person}{John Agapiou}, \bibinfo{person}{Adri{\`a}Puigdom{\`e}nech
  Badia}, \bibinfo{person}{Karl~Moritz Hermann}, \bibinfo{person}{Yori Zwols},
  \bibinfo{person}{Georg Ostrovski}, \bibinfo{person}{Adam Cain},
  \bibinfo{person}{Helen King}, \bibinfo{person}{Christopher Summerfield},
  \bibinfo{person}{Phil Blunsom}, \bibinfo{person}{Koray Kavukcuoglu}, {and}
  \bibinfo{person}{Demis Hassabis}.} \bibinfo{year}{2016}\natexlab{}.
\newblock \showarticletitle{Hybrid computing using a neural network with
  dynamic external memory}.
\newblock \bibinfo{journal}{\emph{Nature}}  \bibinfo{volume}{538}
  (\bibinfo{date}{10} \bibinfo{year}{2016}), \bibinfo{pages}{471}.
\newblock


\bibitem[\protect\citeauthoryear{Green, Lumsdaine, Ross, Selinger, and
  Valiron}{Green et~al\mbox{.}}{2013}]%
        {Green2013}
\bibfield{author}{\bibinfo{person}{Alexander~S. Green},
  \bibinfo{person}{Peter~LeFanu Lumsdaine}, \bibinfo{person}{Neil~J. Ross},
  \bibinfo{person}{Peter Selinger}, {and} \bibinfo{person}{Beno\^{\i}t
  Valiron}.} \bibinfo{year}{2013}\natexlab{}.
\newblock \showarticletitle{Quipper: A Scalable Quantum Programming Language}.
  In \bibinfo{booktitle}{\emph{Proceedings of the 34th ACM SIGPLAN Conference
  on Programming Language Design and Implementation}} (Seattle, Washington,
  USA) \emph{(\bibinfo{series}{PLDI '13})}. \bibinfo{publisher}{Association for
  Computing Machinery}, \bibinfo{address}{New York, NY, USA},
  \bibinfo{pages}{333--342}.
\newblock
\showISBNx{9781450320146}
\urldef\tempurl%
\url{https://doi.org/10.1145/2491956.2462177}
\showDOI{\tempurl}


\bibitem[\protect\citeauthoryear{Grefenstette, Hermann, Suleyman, and
  Blunsom}{Grefenstette et~al\mbox{.}}{2015}]%
        {Grefenstette:2015:LTU:2969442.2969444}
\bibfield{author}{\bibinfo{person}{Edward Grefenstette},
  \bibinfo{person}{Karl~Moritz Hermann}, \bibinfo{person}{Mustafa Suleyman},
  {and} \bibinfo{person}{Phil Blunsom}.} \bibinfo{year}{2015}\natexlab{}.
\newblock \showarticletitle{Learning to Transduce with Unbounded Memory}. In
  \bibinfo{booktitle}{\emph{Proceedings of the 28th International Conference on
  Neural Information Processing Systems - Volume 2}} (Montreal, Canada)
  \emph{(\bibinfo{series}{NIPS'15})}. \bibinfo{publisher}{MIT Press},
  \bibinfo{address}{Cambridge, MA, USA}, \bibinfo{pages}{1828--1836}.
\newblock
\urldef\tempurl%
\url{http://dl.acm.org/citation.cfm?id=2969442.2969444}
\showURL{%
\tempurl}


\bibitem[\protect\citeauthoryear{Griewank}{Griewank}{2000}]%
        {Griewank:2000:EDP:335134}
\bibfield{author}{\bibinfo{person}{Andreas Griewank}.}
  \bibinfo{year}{2000}\natexlab{}.
\newblock \bibinfo{booktitle}{\emph{Evaluating Derivatives: Principles and
  Techniques of Algorithmic Differentiation}}.
\newblock \bibinfo{publisher}{Society for Industrial and Applied Mathematics},
  \bibinfo{address}{Philadelphia, PA, USA}.
\newblock
\showISBNx{0-89871-451-6}


\bibitem[\protect\citeauthoryear{Hung, Hietala, Zhu, Ying, Hicks, and Wu}{Hung
  et~al\mbox{.}}{2019}]%
        {hung19}
\bibfield{author}{\bibinfo{person}{Shih-Han Hung}, \bibinfo{person}{Kesha
  Hietala}, \bibinfo{person}{Shaopeng Zhu}, \bibinfo{person}{Mingsheng Ying},
  \bibinfo{person}{Michael Hicks}, {and} \bibinfo{person}{Xiaodi Wu}.}
  \bibinfo{year}{2019}\natexlab{}.
\newblock \showarticletitle{Quantitative Robustness Analysis of Quantum
  Programs}.
\newblock \bibinfo{journal}{\emph{Proc. ACM Program. Lang.}}
  \bibinfo{volume}{3}, \bibinfo{number}{POPL}, Article \bibinfo{articleno}{31}
  (\bibinfo{date}{Jan.} \bibinfo{year}{2019}), \bibinfo{numpages}{29}~pages.
\newblock
\urldef\tempurl%
\url{https://doi.org/10.1145/3290344}
\showDOI{\tempurl}


\bibitem[\protect\citeauthoryear{Kakutani}{Kakutani}{2009}]%
        {Kaku09}
\bibfield{author}{\bibinfo{person}{Yoshihiko Kakutani}.}
  \bibinfo{year}{2009}\natexlab{}.
\newblock \showarticletitle{A Logic for Formal Verification of Quantum
  Programs}. In \bibinfo{booktitle}{\emph{Proceedings of the 13th Asian
  Conference on Advances in Computer Science: Information Security and
  Privacy}} (Seoul, Korea) \emph{(\bibinfo{series}{ASIAN'09})}.
  \bibinfo{publisher}{Springer-Verlag}, \bibinfo{address}{Berlin, Heidelberg},
  \bibinfo{pages}{79--93}.
\newblock
\showISBNx{3642106218}
\urldef\tempurl%
\url{https://doi.org/10.1007/978-3-642-10622-4_7}
\showDOI{\tempurl}


\bibitem[\protect\citeauthoryear{Kedem}{Kedem}{1980}]%
        {Kedem:1980:ADC:355887.355890}
\bibfield{author}{\bibinfo{person}{Gershon Kedem}.}
  \bibinfo{year}{1980}\natexlab{}.
\newblock \showarticletitle{Automatic Differentiation of Computer Programs}.
\newblock \bibinfo{journal}{\emph{ACM Trans. Math. Softw.}}
  \bibinfo{volume}{6}, \bibinfo{number}{2} (\bibinfo{date}{June}
  \bibinfo{year}{1980}), \bibinfo{pages}{150--165}.
\newblock
\showISSN{0098-3500}
\urldef\tempurl%
\url{https://doi.org/10.1145/355887.355890}
\showDOI{\tempurl}


\bibitem[\protect\citeauthoryear{Liu and Wang}{Liu and Wang}{2018}]%
        {PhysRevA.98.062324}
\bibfield{author}{\bibinfo{person}{Jin-Guo Liu} {and} \bibinfo{person}{Lei
  Wang}.} \bibinfo{year}{2018}\natexlab{}.
\newblock \showarticletitle{Differentiable learning of quantum circuit Born
  machines}.
\newblock \bibinfo{journal}{\emph{Phys. Rev. A}}  \bibinfo{volume}{98}
  (\bibinfo{date}{Dec} \bibinfo{year}{2018}), \bibinfo{pages}{062324}.
\newblock
Issue 6.
\urldef\tempurl%
\url{https://doi.org/10.1103/PhysRevA.98.062324}
\showDOI{\tempurl}


\bibitem[\protect\citeauthoryear{Moll, Barkoutsos, Bishop, Chow, Cross, Egger,
  Filipp, Fuhrer, Gambetta, Ganzhorn, Kandala, Mezzacapo, Muller, Riess, Salis,
  Smolin, Tavernelli, and Temme}{Moll et~al\mbox{.}}{2018}]%
        {IBM-QE}
\bibfield{author}{\bibinfo{person}{Nikolaj Moll}, \bibinfo{person}{Panagiotis
  Barkoutsos}, \bibinfo{person}{Lev~S. Bishop}, \bibinfo{person}{Jerry~M.
  Chow}, \bibinfo{person}{Andrew Cross}, \bibinfo{person}{Daniel~J. Egger},
  \bibinfo{person}{Stefan Filipp}, \bibinfo{person}{Andreas Fuhrer},
  \bibinfo{person}{Jay~M. Gambetta}, \bibinfo{person}{Marc Ganzhorn},
  \bibinfo{person}{Abhinav Kandala}, \bibinfo{person}{Antonio Mezzacapo},
  \bibinfo{person}{Peter Muller}, \bibinfo{person}{Walter Riess},
  \bibinfo{person}{Gian Salis}, \bibinfo{person}{John Smolin},
  \bibinfo{person}{Ivano Tavernelli}, {and} \bibinfo{person}{Kristan Temme}.}
  \bibinfo{year}{2018}\natexlab{}.
\newblock \showarticletitle{Quantum optimization using variational algorithms
  on near-term quantum devices}.
\newblock \bibinfo{journal}{\emph{Quantum Science and Technology}}
  \bibinfo{volume}{3}, \bibinfo{number}{3} (\bibinfo{year}{2018}),
  \bibinfo{pages}{030503}.
\newblock
\showeprint{arXiv:1710.01022}


\bibitem[\protect\citeauthoryear{Nielsen and Chuang}{Nielsen and
  Chuang}{2000}]%
        {MI2002}
\bibfield{author}{\bibinfo{person}{Michael~A. Nielsen} {and}
  \bibinfo{person}{Isaac Chuang}.} \bibinfo{year}{2000}\natexlab{}.
\newblock \bibinfo{booktitle}{\emph{Quantum Computation and Quantum
  Information}}.
\newblock \bibinfo{publisher}{Cambridge University Press}.
\newblock


\bibitem[\protect\citeauthoryear{\"{O}mer}{\"{O}mer}{2003}]%
        {Om03}
\bibfield{author}{\bibinfo{person}{Bernhard \"{O}mer}.}
  \bibinfo{year}{2003}\natexlab{}.
\newblock \emph{\bibinfo{title}{Structured Quantum Programming}}.
\newblock \bibinfo{thesistype}{Ph.D. Dissertation}. \bibinfo{school}{Vienna
  University of Technology}.
\newblock


\bibitem[\protect\citeauthoryear{Paykin, Rand, and Zdancewic}{Paykin
  et~al\mbox{.}}{2017}]%
        {PRZ2017}
\bibfield{author}{\bibinfo{person}{Jennifer Paykin}, \bibinfo{person}{Robert
  Rand}, {and} \bibinfo{person}{Steve Zdancewic}.}
  \bibinfo{year}{2017}\natexlab{}.
\newblock \showarticletitle{QWIRE: A Core Language for Quantum Circuits}. In
  \bibinfo{booktitle}{\emph{Proceedings of the 44th ACM SIGPLAN Symposium on
  Principles of Programming Languages}} (Paris, France)
  \emph{(\bibinfo{series}{POPL 2017})}. \bibinfo{publisher}{Association for
  Computing Machinery}, \bibinfo{address}{New York, NY, USA},
  \bibinfo{pages}{846--858}.
\newblock
\showISBNx{9781450346603}
\urldef\tempurl%
\url{https://doi.org/10.1145/3009837.3009894}
\showDOI{\tempurl}


\bibitem[\protect\citeauthoryear{Pearlmutter and Siskind}{Pearlmutter and
  Siskind}{2008}]%
        {Pearlmutter:2008}
\bibfield{author}{\bibinfo{person}{Barak~A. Pearlmutter} {and}
  \bibinfo{person}{Jeffrey~Mark Siskind}.} \bibinfo{year}{2008}\natexlab{}.
\newblock \showarticletitle{Reverse-mode AD in a Functional Framework: Lambda
  the Ultimate Backpropagator}.
\newblock \bibinfo{journal}{\emph{ACM Trans. Program. Lang. Syst.}}
  \bibinfo{volume}{30}, \bibinfo{number}{2}, Article \bibinfo{articleno}{7}
  (\bibinfo{date}{March} \bibinfo{year}{2008}), \bibinfo{numpages}{36}~pages.
\newblock
\showISSN{0164-0925}
\urldef\tempurl%
\url{https://doi.org/10.1145/1330017.1330018}
\showDOI{\tempurl}


\bibitem[\protect\citeauthoryear{Peruzzo, McClean, Shadbolt, Yung, Zhou, Love,
  Aspuru-Guzik, and O'brien}{Peruzzo et~al\mbox{.}}{2014}]%
        {NC-VQE}
\bibfield{author}{\bibinfo{person}{Alberto Peruzzo}, \bibinfo{person}{Jarrod
  McClean}, \bibinfo{person}{Peter Shadbolt}, \bibinfo{person}{Man-Hong Yung},
  \bibinfo{person}{Xiao-Qi Zhou}, \bibinfo{person}{Peter~J. Love},
  \bibinfo{person}{Al{\'a}n Aspuru-Guzik}, {and} \bibinfo{person}{Jeremy~L.
  O'brien}.} \bibinfo{year}{2014}\natexlab{}.
\newblock \showarticletitle{A variational eigenvalue solver on a photonic
  quantum processor}.
\newblock \bibinfo{journal}{\emph{Nature Communications}}  \bibinfo{volume}{5}
  (\bibinfo{year}{2014}), \bibinfo{pages}{4213}.
\newblock


\bibitem[\protect\citeauthoryear{Plotkin}{Plotkin}{2018}]%
        {GP18}
\bibfield{author}{\bibinfo{person}{Gordon Plotkin}.}
  \bibinfo{year}{2018}\natexlab{}.
\newblock \showarticletitle{Some Principles of Differential Programming
  Languages}.
\newblock \bibinfo{journal}{\emph{POPL 2018}} (\bibinfo{year}{2018}).
\newblock


\bibitem[\protect\citeauthoryear{Preskill}{Preskill}{2018}]%
        {Preskill2018NISQ}
\bibfield{author}{\bibinfo{person}{John Preskill}.}
  \bibinfo{year}{2018}\natexlab{}.
\newblock \showarticletitle{Quantum computing in the {NISQ} era and beyond}.
\newblock \bibinfo{journal}{\emph{{Quantum}}}  \bibinfo{volume}{2}
  (\bibinfo{year}{2018}), \bibinfo{pages}{79}.
\newblock
\showISSN{2521-327X}
\showeprint{1801.00862}


\bibitem[\protect\citeauthoryear{Rumelhart, Hinton, and Williams}{Rumelhart
  et~al\mbox{.}}{1986}]%
        {backprop}
\bibfield{author}{\bibinfo{person}{David~E. Rumelhart},
  \bibinfo{person}{Geoffrey~E. Hinton}, {and} \bibinfo{person}{Ronald~J.
  Williams}.} \bibinfo{year}{1986}\natexlab{}.
\newblock \showarticletitle{Learning representations by back-propagating
  errors}.
\newblock \bibinfo{journal}{\emph{Nature}} \bibinfo{volume}{323},
  \bibinfo{number}{6088} (\bibinfo{year}{1986}), \bibinfo{pages}{533--536}.
\newblock


\bibitem[\protect\citeauthoryear{Sabry}{Sabry}{2003}]%
        {Sabry-Haskel}
\bibfield{author}{\bibinfo{person}{Amr Sabry}.}
  \bibinfo{year}{2003}\natexlab{}.
\newblock \showarticletitle{Modeling Quantum Computing in Haskell}. In
  \bibinfo{booktitle}{\emph{Proceedings of the 2003 ACM SIGPLAN Workshop on
  Haskell}} (Uppsala, Sweden) \emph{(\bibinfo{series}{Haskell '03})}.
  \bibinfo{publisher}{Association for Computing Machinery},
  \bibinfo{address}{New York, NY, USA}, \bibinfo{pages}{39 -- 49}.
\newblock
\showISBNx{1581137583}
\urldef\tempurl%
\url{https://doi.org/10.1145/871895.871900}
\showDOI{\tempurl}


\bibitem[\protect\citeauthoryear{Sanders and Zuliani}{Sanders and
  Zuliani}{2000}]%
        {SZ00}
\bibfield{author}{\bibinfo{person}{J.~W. Sanders} {and} \bibinfo{person}{P.
  Zuliani}.} \bibinfo{year}{2000}\natexlab{}.
\newblock \showarticletitle{Quantum Programming}. In
  \bibinfo{booktitle}{\emph{Proceedings of the 5th International Conference on
  Mathematics of Program Construction}} \emph{(\bibinfo{series}{MPC '00})}.
  \bibinfo{publisher}{Springer-Verlag}, \bibinfo{address}{Berlin, Heidelberg},
  \bibinfo{pages}{80 -- 99}.
\newblock
\showISBNx{3540677275}


\bibitem[\protect\citeauthoryear{Schuld, Bergholm, Gogolin, Izaac, and
  Killoran}{Schuld et~al\mbox{.}}{2019}]%
        {SBVCK18}
\bibfield{author}{\bibinfo{person}{Maria Schuld}, \bibinfo{person}{Ville
  Bergholm}, \bibinfo{person}{Christian Gogolin}, \bibinfo{person}{Josh Izaac},
  {and} \bibinfo{person}{Nathan Killoran}.} \bibinfo{year}{2019}\natexlab{}.
\newblock \showarticletitle{Evaluating analytic gradients on quantum hardware}.
\newblock \bibinfo{journal}{\emph{Phys. Rev. A}}  \bibinfo{volume}{99}
  (\bibinfo{date}{Mar} \bibinfo{year}{2019}), \bibinfo{pages}{032331}.
\newblock
Issue 3.
\urldef\tempurl%
\url{https://doi.org/10.1103/PhysRevA.99.032331}
\showDOI{\tempurl}


\bibitem[\protect\citeauthoryear{Schuld, Bocharov, Svore, and Wiebe}{Schuld
  et~al\mbox{.}}{2020}]%
        {MAKN18}
\bibfield{author}{\bibinfo{person}{Maria Schuld}, \bibinfo{person}{Alex
  Bocharov}, \bibinfo{person}{Krysta~M. Svore}, {and} \bibinfo{person}{Nathan
  Wiebe}.} \bibinfo{year}{2020}\natexlab{}.
\newblock \showarticletitle{Circuit-centric quantum classifiers}.
\newblock \bibinfo{journal}{\emph{Phys. Rev. A}}  \bibinfo{volume}{101}
  (\bibinfo{date}{Mar} \bibinfo{year}{2020}), \bibinfo{pages}{032308}.
\newblock
Issue 3.
\urldef\tempurl%
\url{https://doi.org/10.1103/PhysRevA.101.032308}
\showDOI{\tempurl}


\bibitem[\protect\citeauthoryear{Selinger}{Selinger}{2004a}]%
        {Selinger04}
\bibfield{author}{\bibinfo{person}{Peter Selinger}.}
  \bibinfo{year}{2004}\natexlab{a}.
\newblock \showarticletitle{A brief survey of quantum programming languages}.
  In \bibinfo{booktitle}{\emph{In Proceedings of the 7th International
  Symposium on Functional and Logic Programming}}.
  \bibinfo{publisher}{Springer}, \bibinfo{pages}{1--6}.
\newblock


\bibitem[\protect\citeauthoryear{Selinger}{Selinger}{2004b}]%
        {Se04}
\bibfield{author}{\bibinfo{person}{Peter Selinger}.}
  \bibinfo{year}{2004}\natexlab{b}.
\newblock \showarticletitle{Towards a Quantum Programming Language}.
\newblock \bibinfo{journal}{\emph{Mathematical Structures in Computer Science}}
  \bibinfo{volume}{14}, \bibinfo{number}{4} (\bibinfo{year}{2004}).
\newblock


\bibitem[\protect\citeauthoryear{Speelpenning}{Speelpenning}{1980}]%
        {Speelpenning:1980:CFP:909337}
\bibfield{author}{\bibinfo{person}{Bert Speelpenning}.}
  \bibinfo{year}{1980}\natexlab{}.
\newblock \emph{\bibinfo{title}{Compiling Fast Partial Derivatives of Functions
  Given by Algorithms}}.
\newblock \bibinfo{thesistype}{Ph.D. Dissertation}.
  \bibinfo{address}{Champaign, IL, USA}.
\newblock
\newblock
\shownote{AAI8017989.}


\bibitem[\protect\citeauthoryear{Svore, Geller, Troyer, Azariah, Granade, Heim,
  Kliuchnikov, Mykhailova, Paz, and Roetteler}{Svore et~al\mbox{.}}{2018}]%
        {Svore:2018}
\bibfield{author}{\bibinfo{person}{Krysta Svore}, \bibinfo{person}{Alan
  Geller}, \bibinfo{person}{Matthias Troyer}, \bibinfo{person}{John Azariah},
  \bibinfo{person}{Christopher Granade}, \bibinfo{person}{Bettina Heim},
  \bibinfo{person}{Vadym Kliuchnikov}, \bibinfo{person}{Mariia Mykhailova},
  \bibinfo{person}{Andres Paz}, {and} \bibinfo{person}{Martin Roetteler}.}
  \bibinfo{year}{2018}\natexlab{}.
\newblock \showarticletitle{Q\#: Enabling Scalable Quantum Computing and
  Development with a High-Level DSL}. In \bibinfo{booktitle}{\emph{Proceedings
  of the Real World Domain Specific Languages Workshop 2018}} (Vienna, Austria)
  \emph{(\bibinfo{series}{RWDSL2018})}. \bibinfo{publisher}{Association for
  Computing Machinery}, \bibinfo{address}{New York, NY, USA}, Article
  \bibinfo{articleno}{7}, \bibinfo{numpages}{10}~pages.
\newblock
\showISBNx{9781450363556}
\urldef\tempurl%
\url{https://doi.org/10.1145/3183895.3183901}
\showDOI{\tempurl}


\bibitem[\protect\citeauthoryear{Wang and Abdullah}{Wang and Abdullah}{2018}]%
        {wang2018introductionQAOA}
\bibfield{author}{\bibinfo{person}{Qingfeng Wang} {and} \bibinfo{person}{Tauqir
  Abdullah}.} \bibinfo{year}{2018}\natexlab{}.
\newblock \bibinfo{title}{An Introduction to Quantum Optimization Approximation
  Algorithm}.
\newblock
\newblock


\bibitem[\protect\citeauthoryear{Watrous}{Watrous}{2006}]%
        {Wat06}
\bibfield{author}{\bibinfo{person}{John Watrous}.}
  \bibinfo{year}{2006}\natexlab{}.
\newblock \bibinfo{title}{Introduction to Quantum Computation}.
\newblock
  \bibinfo{howpublished}{\url{https://cs.uwaterloo.ca/~watrous/LectureNotes/CPSC519.Winter2006/all.pdf}}.
\newblock
\newblock
\shownote{Course notes.}


\bibitem[\protect\citeauthoryear{Wecker and Svore}{Wecker and Svore}{2014}]%
        {WS2014}
\bibfield{author}{\bibinfo{person}{Dave Wecker} {and} \bibinfo{person}{Krysta
  Svore}.} \bibinfo{year}{2014}\natexlab{}.
\newblock \showarticletitle{{LIQUi$|\rangle$}: A Software Design Architecture
  and Domain-Specific Language for Quantum Computing}.
\newblock \bibinfo{journal}{\emph{CoRR}}  \bibinfo{volume}{abs/1402.4467}
  (\bibinfo{year}{2014}).
\newblock
\showeprint[arxiv]{1402.4467}


\bibitem[\protect\citeauthoryear{Wengert}{Wengert}{1964}]%
        {Wengert:1964:SAD:355586.364791}
\bibfield{author}{\bibinfo{person}{Robert~Edwin Wengert}.}
  \bibinfo{year}{1964}\natexlab{}.
\newblock \showarticletitle{A Simple Automatic Derivative Evaluation Program}.
\newblock \bibinfo{journal}{\emph{Commun. ACM}} \bibinfo{volume}{7},
  \bibinfo{number}{8} (\bibinfo{date}{Aug.} \bibinfo{year}{1964}),
  \bibinfo{pages}{463--464}.
\newblock
\showISSN{0001-0782}
\urldef\tempurl%
\url{https://doi.org/10.1145/355586.364791}
\showDOI{\tempurl}


\bibitem[\protect\citeauthoryear{Wootters and Zurek}{Wootters and
  Zurek}{1982}]%
        {no-cloning}
\bibfield{author}{\bibinfo{person}{William~K. Wootters} {and}
  \bibinfo{person}{Wojciech~H. Zurek}.} \bibinfo{year}{1982}\natexlab{}.
\newblock \showarticletitle{A single quantum cannot be cloned}.
\newblock \bibinfo{journal}{\emph{Nature}} \bibinfo{volume}{299},
  \bibinfo{number}{5886} (\bibinfo{year}{1982}), \bibinfo{pages}{802--803}.
\newblock


\bibitem[\protect\citeauthoryear{Ying}{Ying}{2011}]%
        {Yin11}
\bibfield{author}{\bibinfo{person}{Mingsheng Ying}.}
  \bibinfo{year}{2011}\natexlab{}.
\newblock \showarticletitle{Floyd--Hoare Logic for Quantum Programs}.
\newblock \bibinfo{journal}{\emph{ACM Transactions on Programming Languages and
  Systems}} \bibinfo{volume}{33}, \bibinfo{number}{6} (\bibinfo{year}{2011}).
\newblock


\bibitem[\protect\citeauthoryear{Ying}{Ying}{2016}]%
        {Ying16}
\bibfield{author}{\bibinfo{person}{Mingsheng Ying}.}
  \bibinfo{year}{2016}\natexlab{}.
\newblock \bibinfo{booktitle}{\emph{Foundations of Quantum Programming}}.
\newblock \bibinfo{publisher}{Morgan Kaufmann}.
\newblock


\bibitem[\protect\citeauthoryear{Ying, Ying, and Wu}{Ying
  et~al\mbox{.}}{2017}]%
        {YYW17}
\bibfield{author}{\bibinfo{person}{Mingsheng Ying}, \bibinfo{person}{Shenggang
  Ying}, {and} \bibinfo{person}{Xiaodi Wu}.} \bibinfo{year}{2017}\natexlab{}.
\newblock \showarticletitle{Invariants of Quantum Programs: Characterisations
  and Generation}. In \bibinfo{booktitle}{\emph{Proceedings of the 44th ACM
  SIGPLAN Symposium on Principles of Programming Languages}} (Paris, France)
  \emph{(\bibinfo{series}{POPL 2017})}. \bibinfo{publisher}{Association for
  Computing Machinery}, \bibinfo{address}{New York, NY, USA},
  \bibinfo{pages}{818?832}.
\newblock
\showISBNx{9781450346603}
\urldef\tempurl%
\url{https://doi.org/10.1145/3009837.3009840}
\showDOI{\tempurl}


\bibitem[\protect\citeauthoryear{Zhu, Linke, Benedetti, Landsman, Nguyen,
  Alderete, Perdomo-Ortiz, Korda, Garfoot, Brecque, Egan, Perdomo, and
  Monroe}{Zhu et~al\mbox{.}}{2019}]%
        {zhu2018training}
\bibfield{author}{\bibinfo{person}{D. Zhu}, \bibinfo{person}{N.~M. Linke},
  \bibinfo{person}{M. Benedetti}, \bibinfo{person}{K.~A. Landsman},
  \bibinfo{person}{N.~H. Nguyen}, \bibinfo{person}{C.~H. Alderete},
  \bibinfo{person}{A. Perdomo-Ortiz}, \bibinfo{person}{N. Korda},
  \bibinfo{person}{A. Garfoot}, \bibinfo{person}{C. Brecque},
  \bibinfo{person}{L. Egan}, \bibinfo{person}{O. Perdomo}, {and}
  \bibinfo{person}{C. Monroe}.} \bibinfo{year}{2019}\natexlab{}.
\newblock \showarticletitle{Training of quantum circuits on a hybrid quantum
  computer}.
\newblock \bibinfo{journal}{\emph{Science Advances}} \bibinfo{volume}{5},
  \bibinfo{number}{10} (\bibinfo{year}{2019}).
\newblock
\urldef\tempurl%
\url{https://doi.org/10.1126/sciadv.aaw9918}
\showDOI{\tempurl}
\showeprint{https://advances.sciencemag.org/content/5/10/eaaw9918.full.pdf}


\end{thebibliography}

\ifsubmit
{}
\else 
\clearpage 

\appendix
\section{Detailed Quantum Preliminary}\label{sec::qprel}

This is a more detailed treatment of \sec{prelim}. For a further extended background, we recommend the notes
by~\citet{Wat06} and the textbook by~\citet{MI2002}.    

\subsection{Preliminaries}

For any 
non-negative integer $n$, an $n$-dimensional Hilbert space $\H$
is essentially the space $\mathbb{C}^n$ of complex vectors.
We use Dirac's notation, $\ket{\psi}$, to denote a complex vector in $\mathbb{C}^n$. The inner product of two vectors $\ket{\psi}$ and $\ket{\phi}$ is denoted by $\langle\psi|\phi\rangle$,
which is the product of the Hermitian conjugate of $\ket{\psi}$, denoted by $\bra{\psi}$, and vector $\ket{\phi}$.
The norm of a vector $\ket{\psi}$ is denoted by $\nm{\ket{\psi}}=\sqrt{\langle\psi|\psi\rangle}$.

We define (linear) \emph{operators} as linear mappings between Hilbert spaces.
Operators between $n$-dimensional Hilbert spaces are represented by $n\times n$ matrices.
For example, the identity operator $I_\H$ can be identified by the identity matrix on $\H$.
The Hermitian conjugate of operator $A$ is denoted by $A^\dag$. Operator $A$ is \emph{Hermitian} if $A=A^\dag$.
The trace of an operator $A$ 
is the sum of the entries on the main diagonal, i.e., $\tr(A)=\sum_i A_{ii}$. 
We write $\bra{\psi}A\ket{\psi}$ to mean the inner product between
$\ket{\psi}$ and $A\ket{\psi}$.
A Hermitian operator $A$ is \emph{positive semidefinite} (resp.,
\emph{positive definite}) if for all vectors $\ket{\psi}\in\H$,
$\bra{\psi}A\ket{\psi}\geq 0$ (resp., $>0$).
This gives rise to the \emph{L\"owner order} $\sqsubseteq$ among operators:

$\begin{array}{rl}
  A\sqsubseteq B   & \text{ if } B-A \text{ is positive semidefinite, } \\
  A\sqsubset B   & \text{ if } B-A \text{ is positive definite. }
\end{array}$

\subsection{Quantum States}

The state space of a quantum system is a Hilbert space. 
The state space of a \emph{qubit}, or quantum bit, is a 2-dimensional Hilbert space.
One important orthonormal basis of a qubit system is the \emph{computational} basis with $\ket{0}=(1,0)^\dag$ and $\ket{1}=(0,1)^\dag$, which encode the classical bits 0 and 1 respectively.  
Another important basis, called the $\pm$ basis, consists of $\ket{+}=\frac{1}{\sqrt{2}}(\ket{0}+\ket{1})$ and $\ket{-}=\frac{1}{\sqrt{2}}(\ket{0}-\ket{1})$.
The state space of multiple qubits is the \emph{tensor product} of single qubit state spaces.
For example, classical 00 can be encoded by $\ket{0}\otimes\ket{0}$
(written $\ket{0}\ket{0}$ or even $\ket{00}$ for short) in the Hilbert space $\mathbb{C}^2\otimes\mathbb{C}^2$. An important 2-qubit state is the EPR state $\ket{\beta_{00}}=\frac{1}{\sqrt{2}}(\ket{00}+\ket{11})$.
The Hilbert space for an $m$-qubit system is $(\mathbb{C}^2)^{\otimes m} \cong \mathbb{C}^{2^m}$.

A \emph{pure} quantum state is represented by a unit vector, i.e., a vector $\ket{\psi}$ with $\nm{\ket{\psi}}=1$.
A \emph{mixed} state can be represented by a classical distribution over an ensemble of pure states $\{(p_i,\ket{\psi_i})\}_i$,
i.e., the system is in state $\ket{\psi_i}$ with probability $p_i$.
One can also use \emph{density operators} to represent both pure and mixed quantum states.
A density operator $\rho$ for a mixed state representing the ensemble $\{(p_i,\ket{\psi_i})\}_i$ is a positive semidefinite operator $\rho=\sum_i p_i\ket{\psi_i}\bra{\psi_i}$, where $\ket{\psi_i}\bra{\psi_i}$ is the outer-product of $\ket{\psi_i}$; in particular, a pure state $\ket{\psi}$ can be identified with the density operator $\rho=\ket{\psi}\bra{\psi}$.
Note that $\tr(\rho)=1$ holds for all density operators. A positive semidefinite operator $\rho$ on $\H$ is said to be a \emph{partial} density operator if $\tr(\rho)\leq 1$.
The set of partial density operators is denoted by $\D(\H)$.

\subsection{Quantum Operations} 

Operations on quantum systems can be characterized by unitary operators. Denoting the set of linear operators on $\H$ as $L(\H)$, an operator $U\in L(\H)$ is \emph{unitary} if its Hermitian conjugate is its own inverse, i.e., $U^\dag U=UU^\dag=I_\H$. For a pure state $\ket{\psi}$, a unitary operator describes an \emph{evolution} from $\ket{\psi}$ to $U\ket{\psi}$. For a density operator $\rho$, the corresponding evolution is $\rho \mapsto U\rho U^\dag$. Common 
unitary operators include
\begin{align*}
  H=
  \left[ {\begin{array}{cc}
   \frac{1}{\sqrt{2}}& \frac{1}{\sqrt{2}} \\
   \frac{1}{\sqrt{2}}& \frac{-1}{\sqrt{2}}\\
  \end{array} } \right],
  \
  X=
  \left[ {\begin{array}{cc}
   0 & 1 \\
   1 & 0 \\
  \end{array} } \right],
  \
  Z=
  \left[ {\begin{array}{cc}
   1 & 0 \\
   0 & -1 \\
  \end{array} } \right],\
  Y=-iXZ
\end{align*}

The \emph{Hadamard} operator $H$ transforms between the computational and the $\pm$ basis. For example, $H\ket{0}=\ket{+}$ and $H\ket{1}=\ket{-}$.
The \emph{Pauli $X$} operator is a bit flip, i.e., $X\ket{0}=\ket{1}$ and  $X\ket{1}=\ket{0}$. The \emph{Pauli $Z$} operator is a phase flip, i.e., $Z\ket{0}=\ket{0}$ and $Z\ket{1}=- \ket{1}$. \emph{Pauli $Y$} 
maps $\ket{0}$ to $i\ket{1}$ and $\ket{1}$ to $-i\ket{0}$. The \emph{CNOT} gate $C$ maps $\ket{00}\mapsto \ket{00},\ket{01}\mapsto \ket{01},\ket{10}\mapsto \ket{11},\ket{11}\mapsto \ket{10}$. 
One may obtain the EPR state $\ket{\beta_{00}}$ via $\ket{00}\stackrel{H_1}{\mapsto}\frac{1}{\sqrt{2}}(\ket{0}+\ket{1})\ket{0}\stackrel{C_{1,2}}{\mapsto}\frac{1}{\sqrt{2}}(\ket{00}+\ket{11})$.

More generally, the evolution of a quantum system can be characterized by an \emph{admissible superoperator} $\E$, which is a \emph{completely-positive} and \emph{trace-non-increasing} linear map from $\D(\H)$ to $\D(\H')$ for Hilbert spaces $\H, \H'$.
A superoperator is positive if it maps from $\D(\H)$ to $\D(\H')$ for Hilbert spaces $\H, \H'$.
A superoperator $\E$ is $k$-positive if for any $k$-dimensional Hilbert space $\A$, the superoperator $\E\otimes I_\A$ is a positive map on $\D(\H\otimes\A)$. 
A superoperator is said to be completely positive if it is $k$-positive for any positive integer $k$.
A superoperator $\E$ is trace-non-increasing if for any initial state $\rho\in\D(\H)$, the final state $\E(\rho)\in \D(\H')$ after applying $\E$ satisfies $\tr(\E(\rho))\leq\tr(\rho)$.

For every superoperator $\E : \D(\H)\to\D(\H')$, there exists a set of Kraus operators $\{E_k\}_k$ such that $\E(\rho)=\sum_k E_k\rho E_k^\dag$ for any input $\rho\in\D(\H)$.
Note that the set of Kraus operators is finite if the Hilbert space is finite-dimensional.
The \emph{Kraus form} of $\E$ is written as $\E=\sum_k E_k\circ E_k^\dag$.
A unitary evolution can be represented by the superoperator $\E=U\circ U^\dag$. An identity operation refers to the superoperator $\mathcal{I}_{\H} = I_{\H} \circ I_{\H}$.
The Schr\"odinger-Heisenberg \emph{dual} of a superoperator $\E=\sum_k E_k\circ E_k^\dag$, denoted by $\E^*$, is defined as follows: for every state $\rho\in\D(\H)$ and any operator $A$, $\tr(A\E(\rho))=\tr(\E^*(A)\rho)$. The Kraus form of $\E^*$ is $\sum_k E_k^\dag\circ E_k$. 

\subsection{Quantum Measurements and Observables} 

The way to extract information about a quantum system is called
a quantum \emph{measurement}. 
A quantum measurement on a system over Hilbert space $\H$ can be
described by a set of linear operators 
$\{M_m\}_m$ with $\sum_m M_m^\dag M_m=I_\H$. 
If we perform a measurement $\{M_m\}_m$ on a state $\rho$, the outcome $m$ is observed with probability $p_m=\tr(M_m\rho M_m^\dag)$ for each $m$.
A major difference between classical and quantum computation is that a
quantum measurement changes the state. In particular, after a
measurement yielding outcome $m$, the state collapses to $M_m\rho M_m^\dag/p_m$.
For example, a measurement in the computational basis is described by $M=\{M_0=\ket{0}\bra{0}, M_1=\ket{1}\bra{1}\}$.
If we perform the computational basis measurement $M$ on state $\rho=\ket{+}\bra{+}$, then with probability $\frac{1}{2}$ the outcome is $0$ and $\rho$ becomes $\ket{0}\bra{0}$.
With probability $\frac{1}{2}$ the outcome is $1$ and $\rho$ becomes $\ket{1}\bra{1}$.

\section{More on the definition of parameterized quantum programs}\label{sec::Appendix}

\subsection{Definition of qVar}\label{appd::qVarA}

Given $P(\vth)$ a $T$-bounded $k$-parameterized quantum \textbf{while}-program, 
let $\mathit{qVar}(P(\vth))$, read \emph{the set of quantum variables accesible to $P$, be recursively defined as follows}~\cite{Ying16}:
\begin{enumerate}
    \item If $P(\vth)\equiv \cskip[\qvec{q}],\cabort[\qvec{q}]$ or $U[\qvec{q}]$, then $\mathit{qVar}(P(\vth)) = \overline{q}$. 
    \item If $P(\vth)\equiv q:=\ket{{0}}$, then $\mathit{qVar}(P(\vth)) = {q}$.
    \item If $P(\vth)\equiv P_1(\vth);P_2(\vth)$, then $\mathit{qVar}(P(\vth)) =\mathit{qVar}(P_1(\vth))\cup \mathit{qVar}(P_2(\vth))$. When analyzing $P_1(\vth);P_2(\vth)$, we identify $P_1(\vth)$ with $\calI\otimes P_1(\vth)$, where $\calI\equiv I_{\mathit{qVar}(P_2(\vth))\setminus\mathit{qVar}(P_1(\vth))}\circ I_{\mathit{qVar}(P_2(\vth))\setminus\mathit{qVar}(P_1(\vth))}$, the identity operation on the variables where $P_1$ originally has no access to. 
    
    \item If $P(\vth)\equiv \qif{M[\qvec{q}]=\overline{m\to P_{m}(\vth)}}
              $, then $\mathit{qVar}(P(\vth)) = \qvec{q}\cup\bigcup_m \mathit{qVar}(P_i(\vth))$. We make the same 
              identification for $P_i(\vth)$'s as the above.
    \item If $P(\vth)\equiv \qwhileTParaStandard 
    $, then $\mathit{qVar}(P(\vth)) = \qvec{q}\cup\ \mathit{qVar}(P_1(\vth))$. We make the same identification for $P_1(\vth)$ 
    as the above.
\end{enumerate}

One defines $\mathit{qVar}(P)$ for unparameterized $P$ analogously. 
\section{Detailed Proof from Section \ref{sec::detQCHs}}

\subsection{Proof of Prop \ref{prop::WellDefined}: Non-deterministic Compilation Rules Well-Defined 
}\label{proof::WellDefined}
\begin{proof}Structural induction.
\begin{enumerate}
    \item (Atomic)
    : as operational semantics of atomic operations are inherited from parameterized quantum while programs, they do not induce non-determinism. For these operations, $\sem{\udd{P(\vth^*)}}\rho=\{|\sem{P(\vth^*)}\rho|\}$ as is the right hand side. 
    \item (Sequence) If one of $\compl{(\udd{P_b(\vth^*)})}$ is $\{\cabort\}$ the statement is immediately true, so we assume otherwise for below.
    \begin{enumerate}
        \item $\subseteq$: Let 
        $\rho'\in 
        \{|\rho'\neq\mathbf{0}:\lag \udd{P_1(\vth^*);P_2(\vth^*)},\rho\rag\to^*\lag\downarrow,\rho'\rag |\}$. By definition there exists $\rho_2\neq \mathbf{0}$ s.t. $\lag \udd{P_1(\vth^*)};$ $\udd{P_2(\vth^*)},\rho\rag\to^*\lag P_2(\vth^*),\rho_2\rag$ and $\lag \udd{P_2(\vth^*)},\rho_2\rag\to^*\lag\downarrow,\rho'\rag$. By inductive hypothesis, $\rho'\in 
        \coprod_{Q_j(\vth)\in\compl{(\udd{P_2(\vth)})}
        }$ $\{|\rho'\neq \mathbf{0}:\lag{Q_j(\vth^*)},\rho_2\rag\to^* \lag \downarrow, \rho'\rag |\}$.

        Since $\lag \udd{P_1(\vth^*);P_2(\vth^*)},\rho\rag\to^*\lag P_2(\vth^*),\rho_2\rag$, there must be some $\lag P_3(\vth^*), \rho_3\rag\to \lag \downarrow,\rho_2\rag$ triggering the last transition according to our operational semantics. Inductively apply such an argument, knowing that all computation paths are finitely long,\footnote{We don't count ``transitions'' like $\lag P(\vth^*),\rho\rag\to \lag P(\vth^*),\rho\rag$ when analyzing computation paths, as nothing evolves in these ``trivial transitions''.} 
        we conclude that $\rho_2\in \{|\rho'_2:\lag \udd{P_1(\vth)},\rho\rag\to^* \lag \downarrow,\rho'_2\rag|\}$. 
        By the inductive hypothesis  
        we have $\rho_2\in \coprod_{R_i(\vth)\in\compl{(\udd{P_1(\vth)})}
        }$  $\{|\rho'_2\neq 0:\lag{R_i(\vth^*)},\rho\rag\to^* \lag \downarrow, \rho'_2\rag |\}
        $. Thus
        \begin{eqnarray}
        \rho'&\in&  \coprod_{R_i(\vth)\in \compl{(\udd{P_1(\vth)})}, Q_j(\vth)\in \compl{(\udd{P_2(\vth)})}
        }\\
        &&\{|\rho'\neq 0:
        \lag{R_i(\vth^*)},\rho\rag\to^* \lag \downarrow, \rho_2\rag\\&&\qquad\qquad\qquad\qquad\qquad\qquad\qquad \\
        &&\bigwedge \lag{Q_j(\vth^*)},\rho_2\rag\to^* \lag \downarrow, \rho'\rag|\}
        \nonumber
        \\ &=& \coprod_{R_i(\vth)\in \compl{(\udd{P_1(\vth)})}, Q_j(\vth)\in \compl{(\udd{P_2(\vth)})}
        }\\
        &&\{|\rho'\neq 0:\lag{R_i(\vth^*);Q_j(\vth^*)},\rho\rag\to^* \lag\downarrow, \rho'\rag |\}\nonumber\\&& \label{eqn::A15}
        \end{eqnarray}
        as desired.  
        \item $\supseteq$:  
        Let $\rho'$ be a member of the multiset in right hand side (``RHS'') of $\ref{eqn::A15}$ 
        for some $ R_i(\vth)\in \compl{(\udd{P_1(\vth)})},$ $ Q_j(\vth)\in \compl{(\udd{P_2(\vth)})}$. Then by inductive hypothesis, $\lag \udd{P_2(\vth)},$ $\sem{R_i(\vth^*)}\rho\rag$ $\to^* \lag\downarrow, \rho'\rag$ ($\rho'\neq\mathbf{0}$), and $\lag \udd{P_1(\vth)},\rho\rag$ $\to^* \lag\downarrow, \sem{R_i(\vth^*)}\rho\rag$ ($\sem{R_i(\vth^*)}\rho\neq \mathbf{0}$). One may start with $\lag \udd{P_2(\vth^*)},\sem{R_i(\vth^*)}\rho\rag$ $\to^* \lag\downarrow, \rho'\rag$ then repeatedly apply the Sequential rule in the operational semantics, tracing back the path $\lag \udd{P_1(\vth)},\rho\rag$ $\to^* \lag\downarrow, \sem{R_i(\vth^*)}\rho\rag$, 
        and conclude that $\lag \udd{P_1(\vth^*)};\udd{P_2(\vth^*)},\rho\rag\to^*\lag \downarrow, \rho'\rag$ ($\rho'\neq 0$). 
    \end{enumerate}
    The above said the two multisets have the same set of distinct elements, and furthermore each copy of $\rho'\in \sem{\udd{P(\vth^*)}}\rho$ induces at least (by $\subseteq$, or by the deterministic nature of $R_i(\vth^*);Q_j(\vth^*)$) as well as at most (by $\supseteq$
    ) one copy of the same 
    element in RHS$(\ref{eqn::A15})$. This says the two multisets are equal.

    \item (Case) Applying the operational transition rule for $1$ step one may observe that (writing ``IH'' the inductive hypothesis) 
    \begin{eqnarray}
       && \{| \mathbf{0}\neq \rho'\in \sem{\uqifParaStandard}\rho|\}\nonumber\\ &=& \coprod_{m^*}\{|\rho'\neq 0: \lag P_m(\vth^*),M_m\rho M\dagg_m\rag\to \lag\downarrow, \rho'\rag|\}\nonumber\\
        &\stackrel{IH}{=}&\coprod_{m^*}\coprod_{Q_{m^*,i_{m^*}}(\vth)
        \in\compl{(\udd{P_{m^*}(\vth)})}}\\&&\{|\rho'\neq 0:\lag\sem{Q_{m^*,i_{m^*}}(\vth^*)},M_{m^*}\rho M\dagg_{m^*}\rag\\
        && \qquad\quad\ \qquad\qquad\qquad\qquad\qquad\to^*\lag \downarrow, \rho'\rag|\}\nonumber
    \end{eqnarray}
    
    On the other hand we have 
\begin{eqnarray}
&& 
\coprod_{Q(\vth)\in\compl{(\udd{\uqifParaStandard})}
 }\nonumber\\&& \{|\rho'\neq \mathbf{0}:\lag{Q(\vth^*)},\rho\rag\to^* \lag \downarrow, \rho'\rag |\}\nonumber\\ 
         &\stackrel{\textrm{CP,Case}}{=}&
         \coprod_{Q(\vth)\in\Facebook{\udd{\uqifParaStandard}}
 }\nonumber\\&& \{|\rho'\neq \mathbf{0}:\lag{Q(\vth^*)},\rho\rag\to^* \lag \downarrow, \rho'\rag |\}\nonumber\\
         &\stackrel{(**)}{=}&\coprod_{m^*}\coprod_{Q_{m^*,i_{m^*}}(\vth)
        \in\compl{(\udd{P_{m^*}(\vth)})}}\{|\rho'\neq \mathbf{0}:\lag\sem{Q_{m^*,i_{m^*}}(\vth^*)},\nonumber\\&&M_{m^*}\rho M\dagg_{m^*}\rag
        \to^*\lag \downarrow, \rho'\rag|\}\label{eqn::A20}
\end{eqnarray}
as 
desired.  Note that the last step (**) is due to the behavior of the 
superoperators in $\Facebook{\uqifParaStandard}$: when evolved by one transition, they either go to some \emph{non-}\emph{(essentially}\emph{-aborting)} \newline $ \lag\sem{Q_{m^*,i_{m^*}}(\vth^*)},$  $M_{m^*}\rho M\dagg_{m^*}\rag$ for some $Q_{m^*,i_{m^*}}(\vth)\in\newline  \compl{(P_{m^*}(\vth))}$ for some $m^*$, or go to $\lag\cabort,M_{m^*}\rho M\dagg_{m^*} \rag$. Besides, each copy of final state of non-(essentially-aborting) 

$\lag\sem{Q_{m^*,i_{m^*}}(\vth^*)},$ $M_{m^*}\rho M\dagg_{m^*}\rag$ appears exactly once in the multiset, Due to our construction of $\Facebook{\uqifParaStandard}$. On the other hand, computational paths starting with \emph{essentially aborting} $\lag Q_{m^*,i_{m^*}}(\vth^*),$  $M_{m^*}\rho M\dagg_{m^*}\rag$ always terminate in $\lag\downarrow, \mathbf{0}\rag$, 
therefore not counted on either side of $(**)$.
    \item (While$^{(T)}$) By definition, While$^{(T)}$ reduces to (Case) and (Sequence).
    \item (Sum Components) If one of $\compl{(\udd{P_b(\vth^*)})}$ is $\{\cabort\}$ the statement is immediately true, so we assume otherwise for below. Like in the $\ccase$ case, 
    applying the operational transition rule for $1$ step one may observe that (writing ``IH'' the inductive hypothesis)
    \begin{eqnarray}
     &&   \mathbf{0}\neq\rho'\in \sem{\udd{P_1(\vth^*)}\msquare\udd{P_2(\vth^*)}}\rho \\&=& \coprod_{m^*=1,2}\{|\rho'\neq \mathbf{0}: \lag P_m(\vth^*),\rho \rag\to \lag\downarrow, \rho'\rag|\}\nonumber\\
        &\stackrel{IH}{=}&\coprod_{m^*=1,2}\{|\rho'\neq \mathbf{0}:\lag\sem{Q_{m^*,i_{m^*}}(\vth^*)},\rho\rag \to^*\nonumber\\
        &&\lag\downarrow,\rho'\rag, Q_{m^*,i_{m^*}}(\vth)\in\compl{(\udd{P_{m^*}(\vth)})}|\}\nonumber\\
       &\stackrel{\textrm{CP,Sum Component}}{=}&\coprod_{Q(\vth)\in\compl{(\udd{P_1(\vth)}\msquare \udd{P_2(\vth)})}}
      \nonumber \\&&\{|\rho'\neq 0: \lag Q(\vth^*),\rho\rag\to^*\lag \downarrow,\rho'\rag|\} ,
    \end{eqnarray}
   as desired!

\end{enumerate}

\end{proof}

\section{Detailed Proofs from Section \ref{sec::CTRules}} \label{app::CTRules}

Throughout, we use LHS, RHS to denote ``left hand side'' and ``right hand side'' resp., and ``IH'' to denote ``Inductive Hypothesis''. Wherever applicable, we adopt the overloading convention explained in the main text (See Eqn \ref{eqn::OverLoadingRefByAppendix}). 

Before giving proofs, we record the full details for code transformation rule for $\textrm{while}^{(T)}$ for your interest. Let us denote $\parthetagenNotRelation{\uqwhileTParaStandardS} \equiv $ $\mathbf{Seq}^{T}$, with: 

\begin{equation*}
        {\left. \begin{array}{rl}\quad \mathbf{Seq}^{(1)}\ \equiv\ \udd{\mathbf{case}}\ M[\overline{q}]=
        0\to  & \enskip \udd{\cabort}, \\
        1\to  &\enskip \Big{(}{\parthetajgenNotRelation{\udd{P_1(\vth)}}};
        \udd{\cabort}\Big{)}\msquare\\
        &\Big{(}{{\udd{P_1(\vth)}}};
        \udd{\cabort}\Big{),}
     \end{array}  \right. }
    \end{equation*}

and $\mathbf{Seq}^{(T\geq 2)}$ recursively defined as: 

   \begin{equation*}
        {\left. \begin{array}{rl}\mathbf{Seq}^{(T\geq 2)}\ \equiv\ \udd{\mathbf{case}}\ M[\overline{q}]=
        0\to  & \enskip \udd{\cabort}, \\
        1\to  &\enskip \Big{(}{\parthetajgenNotRelation{\udd{P_1}(\vth)}};\rwhileTminus\Big{)}\msquare\\
        &\Big{(}{{\udd{P_1}(\vth)}}; \mathbf{Seq}^{(T-1)}
        \Big{)}\\
     \end{array}  \right. }
    \end{equation*}
Note that $\mathbf{Seq}^{(T-1)}$ essentially aborts. Let us 
now introduce some helper lemmas for the soundness proof:
\subsection{Technical Lemmas}\label{sec:TechLem}
\begin{lemma}\label{lem:UnitaryDerive}
$U(\theta)\in \{R_\sigma(\theta),R_{\sigma\otimes\sigma}(\theta)\}_{\sigma\in\{X,Y,Z\}}$. 
Let $\entrideri U(\theta)$ denote the entry-wise derivative of $U(\theta)$. Then,
    \begin{eqnarray}
    \entrideri U(\theta)&=&\frac{1}{2}U(\theta+\pi);\label{lemmaEquation:entriwise}\\
   && \parthetagenNotRelation{\ObsSemMacro{O}{{\udd{U(\theta)}}}{\rho} }\nonumber\\ &=& \tr(O\cdot U(\theta)\cdot \rho\cdot (\entrideri U(\theta))\dagg) +\nonumber\\&&\tr(O\cdot \entrideri U(\theta)\cdot \rho\cdot U\dagg(\theta)) \label{LemmaEquation:uudag}\\
    &\stackrel{\ref{lemmaEquation:entriwise}}{=}&\frac{1}{2}\tr\bigg{(}O\Big{(}U(\theta)\rho U\dagg(\theta+\pi)+\nonumber\\&&U(\theta+\pi)\rho U\dagg(\theta)\Big{)}\bigg{)}.\label{LemmaEqn:UUDagContinued}
    \end{eqnarray}
\end{lemma}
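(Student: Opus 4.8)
The plan is to establish the three displayed equalities of the lemma in order, treating the entry-wise identity \eqref{lemmaEquation:entriwise} as the crux and deriving the remaining two from it together with the ordinary product rule.

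First I would prove \eqref{lemmaEquation:entriwise}. The key observation is that each generator squares to the identity: $\sigma^2 = I$ for $\sigma\in\{X,Y,Z\}$, and likewise $(\sigma\otimes\sigma)^2 = \sigma^2\otimes\sigma^2 = I$. Writing $G$ for the generator, so that $U(\theta) = \exp(-i\theta G/2)$ with $G^2 = I$, the matrix exponential collapses to the closed trigonometric form $U(\theta) = \cos(\theta/2)\,I - i\sin(\theta/2)\,G$. Differentiating entry-wise gives $\frac{d}{d\theta}U(\theta) = \tfrac12\bigl(-\sin(\theta/2)\,I - i\cos(\theta/2)\,G\bigr)$. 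On the other hand, evaluating the closed form at $\theta+\pi$ and applying $\cos(\theta/2+\pi/2) = -\sin(\theta/2)$ and $\sin(\theta/2+\pi/2) = \cos(\theta/2)$ yields $U(\theta+\pi) = -\sin(\theta/2)\,I - i\cos(\theta/2)\,G$, so $\tfrac12 U(\theta+\pi)$ coincides with the derivative just computed. This half-angle phase shift by $+\pi$, which converts a derivative into a scaled shifted unitary, is the one genuinely quantum-specific computation and underlies the Rot-Couple rule; everything else is bookkeeping.

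Next I would establish the product-rule identity \eqref{LemmaEquation:uudag}. Unfolding the observable semantics via Definition~\ref{defn::ObsSemInput} and the denotational semantics of a single unitary gate gives $\ObsSemMacro{O}{\udd{U(\theta)}}{\rho}(\theta) = \tr(O\,U(\theta)\rho\,U\dagg(\theta))$. Since $O$ and $\rho$ do not depend on $\theta$ and the trace is linear, I would move $\frac{\partial}{\partial\theta}$ inside the trace and apply the usual product rule to $U(\theta)\rho\,U\dagg(\theta)$, using the elementary fact that Hermitian conjugation commutes with entry-wise differentiation, i.e. $\frac{d}{d\theta}U\dagg(\theta) = \bigl(\frac{d}{d\theta}U(\theta)\bigr)\dagg$. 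This yields exactly the two trace terms of \eqref{LemmaEquation:uudag}.

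Finally, \eqref{LemmaEqn:UUDagContinued} follows by substituting \eqref{lemmaEquation:entriwise} into \eqref{LemmaEquation:uudag}: replacing $\frac{d}{d\theta}U(\theta)$ by $\tfrac12 U(\theta+\pi)$ and its conjugate by $\tfrac12 U\dagg(\theta+\pi)$, factoring out $\tfrac12$, and recombining the two terms under one trace by linearity. The argument is overall routine; the only step needing genuine care is the half-angle trigonometric identity in the first part, which is what pins down $+\pi$ (rather than any other offset) as the correct shift.
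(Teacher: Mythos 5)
Your proposal is correct and follows essentially the same route as the paper: the closed form $U(\theta)=\cos(\theta/2)I - i\sin(\theta/2)\boldgreek{\sigma}$ with $\boldgreek{\sigma}^2=I$, the half-angle shift identities giving $\frac{d}{d\theta}U(\theta)=\frac12 U(\theta+\pi)$, and differentiation under the trace. The only cosmetic difference is that you invoke the matrix product rule abstractly for \eqref{LemmaEquation:uudag}, whereas the paper verifies the same identity by expanding both sides into trigonometric terms and matching them.
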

\begin{proof}
Let $U(\theta)\in \{R_\sigma(\theta),R_{\sigma\otimes\sigma}(\theta)\}_{\sigma\in \{X,Y,Z\}}$. Then $U(\theta)=\cos{\frac{\theta}{2}}I_{\bullet}-i\sin(\frac{\theta}{2})\boldgreek{\sigma}$, where $\boldgreek{\sigma}\in\{X,Y,Z,X\otimes X,Y\otimes Y, Z\otimes Z\}$. In the cases where $\boldgreek{\sigma}$ is $1$-qubit gate, $I_\bullet$ denotes identity on that one qubit; likewise for 2-qubit cases. Correctness of Eqns \ref{lemmaEquation:entriwise}, \ref{LemmaEquation:uudag} basically follows from straightforward computation.

\begin{enumerate}
    \item  Equation \ref{lemmaEquation:entriwise}:
    \begin{eqnarray}
 && \entrideri(U(\theta))  \nonumber\\&=&\frac{1}{2}\BPp{(-\sin(\frac{\theta}{2}))I_\bullet - i\cos(\frac{\theta}{2})\boldgreek{\sigma}}\\
  &=&\frac{1}{2}\BPp{\cos(\frac{\theta}{2}+\frac{\pi}{2})I_\bullet-i\sin(\frac{\theta}{2}+\frac{\pi}{2})\boldgreek{\sigma}}\\
  &=&\frac{1}{2}\BPp{\cos(\frac{(\theta+\pi)}{2})I_\bullet -i\sin(\frac{(\theta+\pi)}{2})\boldgreek{\sigma}}\\
  &=&\frac{1}{2}U(\theta+\pi)
    \end{eqnarray}
    \item Equation \ref{LemmaEquation:uudag}:
    \begin{eqnarray}
   && \parthetagenNotRelation{\ObsSemMacro{O}{{\udd{U(\theta)}}}{\rho}}\nonumber\\ &=& \parthetagenNotRelation{\tr(OU(\theta) \rho U\dagg(\theta))}\\
    &=&\parthetagenNotRelation{\tr\Big{(}O\cdot (\cos{\frac{\theta}{2}}I_{\bullet}-i\sin(\frac{\theta}{2})\boldgreek{\sigma})\cdot \rho \cdot\nonumber\\&& (\cos{\frac{\theta}{2}}I_{\bullet}+i\sin(\frac{\theta}{2})\boldgreek{\sigma}\dagg)\Big{)}}\\
    &=&\frac{\partial}{\partial\theta}\Big{(}\tr(\cos^2(\frac{\theta}{2})O\rho + i\cos(\frac{\theta}{2})\sin(\frac{\theta}{2})O\rho\boldgreek{\sigma}\dagg\\
    &&-i\sin(\frac{\theta}{2})\cos(\frac{\theta}{2})O\boldgreek{\sigma}\rho +\sin^2(\frac{\theta}{2})\boldgreek{\sigma}\rho\boldgreek{\sigma}\dagg)\Big{)}\\
    &=&\tr(2\cos(\frac{\theta}{2})\cdot (\cos(\frac{\theta}{2}))'O\rho) \\
    &&+ \tr(i[\cos(\frac{\theta}{2})(\sin(\frac{\theta}{2}))'+(\cos(\frac{\theta}{2}))'\sin(\frac{\theta}{2})]O(\rho\boldgreek{\sigma}\dagg\nonumber-\boldgreek{\sigma}\rho))\\ &&+\tr(2\sin(\frac{\theta}{2})\cdot(\sin(\frac{\theta}{2}))'O\boldgreek{\sigma}\rho\boldgreek{\sigma}\dagg),
    \end{eqnarray}
On the other hand,
\begin{eqnarray}
&&\tr(O\cdot U(\theta)\cdot \rho\cdot (\entrideri U(\theta))\dagg) +\nonumber\\&& \tr(O\cdot \entrideri U(\theta)\cdot \rho\cdot U\dagg(\theta))\\
&=&\tr\Big{(}O\cdot (\cos{\frac{\theta}{2}}I_{\bullet}-i\sin(\frac{\theta}{2})\boldgreek{\sigma})\cdot \rho\cdot \nonumber\\
&&((\cos{\frac{\theta}{2}})'I_{\bullet}+i(\sin(\frac{\theta}{2}))'\boldgreek{\sigma}\dagg)\Big{)}+\nonumber\\
&&\tr\Big{(}O\cdot ((\cos{\frac{\theta}{2}})'I_{\bullet}-i(\sin(\frac{\theta}{2}))'\boldgreek{\sigma})\cdot \rho\cdot\nonumber\\&& (\cos{\frac{\theta}{2}}I_{\bullet}+i\sin(\frac{\theta}{2})\boldgreek{\sigma}\dagg)\Big{)}\\
&=&\tr(2\cos(\frac{\theta}{2})\cdot (\cos(\frac{\theta}{2}))'O\rho) \\
    &&+ \tr(i[\cos(\frac{\theta}{2})(\sin(\frac{\theta}{2}))'+(\cos(\frac{\theta}{2}))'\sin(\frac{\theta}{2})]O(\rho\boldgreek{\sigma}\dagg\nonumber-\boldgreek{\sigma}\rho))\\ &&+\tr(2\sin(\frac{\theta}{2})\cdot(\sin(\frac{\theta}{2}))'O\boldgreek{\sigma}\rho\boldgreek{\sigma}\dagg),
\end{eqnarray}
as desired. 
\end{enumerate}

\end{proof}

\begin{lemma}\label{lem::propertyAncillaObservableSemantics} Let $P(\vth)\in\augweirdCPV, P
{'}(\vth)\in$\newline  $\augweirdCPVa$. Then for arbitrary 
$\vth^*\in\RR^k$, $\rho\in\weirdS$, $O\in\weMV$,
\begin{enumerate}
    \item $\ObsSemMacro{O}{P(\vth^*);P
    {'}(\vth^*)}{\rho}=\ObsSemMacro{O}{P
    {'}(\vth^*)}{\sem{P(\vth^*)}\rho}$.
    \item $\ObsSemMacro{O}{P
    {'}(\vth^*);P(\vth^*)}{\rho}=\ObsSemMacro{\sem{P(\vth^*)}^*(O)}{P
    {'}(\vth^*)}{\rho}$, with $\sem{P(\vth^*)}^*$ 
    the dual 
    of $\sem{P(\vth^*)}.$
\end{enumerate}
\end{lemma}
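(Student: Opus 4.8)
The plan is to unfold the ancilla-augmented observable semantics (Definition~\ref{defn::ObsSemAncilla}) under the overloading convention \eqref{eqn::OverLoadingRefByAppendix}, then push everything through the denotational semantics of sequencing and exploit that $P(\vth)\in\augweirdCPV$ touches only the variables in $\qvec{v}$, hence acts as the identity on the ancilla $A$ (the lifting $\sem{P(\vth^*)}=\calI_A\otimes\sem{P(\vth^*)}$ recorded in Appendix~\ref{appd::qVarA}). Both parts are then one-line manipulations; the only real content is careful tensor bookkeeping on the ancilla slot.

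For item (1) I would start from
\[
\ObsSemMacro{O}{P(\vth^*);P'(\vth^*)}{\rho} = \tr\Big((Z_A\otimes O)\,\sem{P(\vth^*);P'(\vth^*)}\big(\ket{0}_A\bra{0}\otimes\rho\big)\Big),
\]
apply the sequencing rule $\sem{P(\vth^*);P'(\vth^*)}=\sem{P'(\vth^*)}\circ\sem{P(\vth^*)}$, and invoke the lifting $\sem{P(\vth^*)}=\calI_A\otimes\sem{P(\vth^*)}$. Since this superoperator is trivial on $A$, it leaves the ancilla factor untouched, giving $\sem{P(\vth^*)}(\ket{0}_A\bra{0}\otimes\rho)=\ket{0}_A\bra{0}\otimes\sem{P(\vth^*)}(\rho)$; substituting this back into the definition reads off exactly $\ObsSemMacro{O}{P'(\vth^*)}{\sem{P(\vth^*)}\rho}$.

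For item (2) the composition order is reversed, so after the sequencing rule the trailing superoperator $\calI_A\otimes\sem{P(\vth^*)}$ sits \emph{outside} $\sem{P'(\vth^*)}$. The key move is to transfer it onto the observable via the Schr\"odinger--Heisenberg dual: with $\tr(B\,\E(\tau))=\tr(\E^*(B)\,\tau)$ for $\E=\calI_A\otimes\sem{P(\vth^*)}$, together with $(\calI_A\otimes\sem{P(\vth^*)})^*=\calI_A\otimes\sem{P(\vth^*)}^*$ and the product structure $(\calI_A\otimes\sem{P(\vth^*)}^*)(Z_A\otimes O)=Z_A\otimes\sem{P(\vth^*)}^*(O)$, the observable $Z_A\otimes O$ turns into $Z_A\otimes\sem{P(\vth^*)}^*(O)$ while the input to $\sem{P'(\vth^*)}$ remains $\ket{0}_A\bra{0}\otimes\rho$. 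Reading this off the definition yields $\ObsSemMacro{\sem{P(\vth^*)}^*(O)}{P'(\vth^*)}{\rho}$.

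The main obstacle---indeed the only place care is needed---is the tensor-product bookkeeping on the ancilla: one must justify that $P(\vth)$ lifts to $\calI_A\otimes\sem{P(\vth^*)}$, that the dual distributes over the tensor factor, and that the chosen ancilla observable $Z_A$ factors out cleanly so that $\sem{P(\vth^*)}^*$ affects only the $O$-slot. This last point is precisely why the design fixed the product form $O_A\otimes O$ in Definition~\ref{defn::ObsSemAncilla}, and it is exactly the fact that later powers the Sequence rule of the differentiation logic.
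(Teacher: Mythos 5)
Your proposal is correct and follows essentially the same route as the paper's proof: unfold the ancilla-augmented semantics, apply the sequencing rule with the lifting $\calI_A\otimes\sem{P(\vth^*)}$, observe that the ancilla factor is untouched for item (1), and transfer the trailing superoperator onto the observable via the Schr\"odinger--Heisenberg dual for item (2). Your write-up is in fact slightly more explicit than the paper's about why the dual distributes over the tensor factor and lands only on the $O$-slot, but the argument is the same.
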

\begin{proof}
Observe that both $P(\vth^*);P
{'}(\vth^*)$ and $P
{'}(\vth^*);P(\vth^*)$ lives in $\augweirdCPVa$, so we identify the ``smaller'' program $P(\vth^*)$ with $\calI_A\otimes P(\vth^*)$, 
where $\calI\equiv I_A\circ I_A$ denotes the identity operation on Ancilla.
\begin{enumerate}
    \item Unfolding definition \ref{defn::ObsSemAncilla},
    \begin{eqnarray*}
      & \ObsSemMacro{O}{P(\vth^*);P
      {'}(\vth^*)}{\rho}\\=&\tr((Z_A\otimes O)\sem{P(\vth^*);P
      {'}(\vth^*)}((\ket{0}_A\bra{0})\otimes\rho))\\
       \stackrel{\textrm{Fig}\ref{fig:desem:pardet},\textrm{ Seq.}}{=}& \tr((Z_A\otimes O)\sem{P
       {'}(\vth^*)}\sem {\calI_A\otimes P(\vth^*)}((\ket{0}_A\bra{0})\otimes\rho))\nonumber\\
       {=}&\tr((Z_A\otimes O)\sem{P
       {'}(\vth^*)}((\ket{0}_A\bra{0})\otimes(\sem {P(\vth^*)}\rho)))\\
       =&\ObsSemMacro{O}{P
       {'}(\vth^*)}{\sem{P(\vth^*)}\rho},\textrm{ as desired. }
    \end{eqnarray*}
    \item Observe that:
    \begin{eqnarray}
       &&\ObsSemMacro{O}{P
       {'}(\vth^*);P(\vth^*)}{\rho} \nonumber\\&=&\tr((Z_A\otimes O)\sem{P
       {'}(\vth^*);P(\vth^*)}\cdot\nonumber\\
       &&\Big{(}(\ket{0}_A\bra{0})\otimes\rho)\Big{)}\nonumber\\
       &\stackrel{\textrm{Fig}\ref{fig:desem:pardet},\textrm{ Sequence}}{=}& \tr((Z_A\otimes O)\sem{\calI_A\otimes P(\vth^*)}\sem{P
       {'}(\vth^*)}\cdot\nonumber\\&&\Big{(}(\ket{0}_A\bra{0})\otimes\rho)\Big{)}\nonumber\\
       &\stackrel{\zeta\equiv \sem{P
       {'}(\vth^*)}((\ket{0}_A\bra{0})\otimes\rho)}{=}&\tr((Z_A\otimes O)\sem{\calI_A\otimes P(\vth^*)}\zeta)\label{eqn::RHSdual1},
       \end{eqnarray}
       {while}
       \begin{eqnarray}
       &&\ObsSemMacro{\sem{P(\vth^*)}^*(O)}{P
       {'}(\vth^*)}{\rho}\\&=& \tr((Z_A\otimes \sem{P(\vth^*)}^*(O))\sem{P
       {'}(\vth^*)}((\ket{0}_A\bra{0})\otimes\rho))\nonumber\\
      &=&\tr((Z_A\otimes \sem{P(\vth^*)}^*(O))\zeta).\label{eqn::RHSdual2}
    \end{eqnarray}
    Now RHS (Eqn$\ref{eqn::RHSdual1})$ equals RHS$(\textrm{Eqn(\ref{eqn::RHSdual2})})$ via duality, as $\calI_A\otimes P(\vth^*)$ does nothing on the ancilla, while $\sem{P(\vth^*)}^*(O)\in\weMV$ and $Z_A
    $ are compatible observables. 
\end{enumerate}
\end{proof}
\begin{lemma}\label{lem::constructingCompl}Let $\udd{S_m(\vth)}\in\detaugweirdCPV$ ($\forall m\in [0,w],$ where $ w\geq 1$). Denote:\footnote{As a reminder: we shall frequently refer to Notations in Eqns \ref{eqn::C26} and \ref{eqn::realC27} for the rest of the section.} 
\begin{eqnarray}
   &&\compl{(\udd{S_m(\vth)})}\nonumber\\ &\equiv & \{|Q_{m,0}(\vth), \cdots, Q_{m,
   {s_m}}(\vth)|\},\label{eqn::C26}\\
   &&\compl{(\parthetagenNotRelation{\udd{S_m(\vth)}})}\nonumber\\ &\equiv& \{|P_{m,0}(\vth), \cdots, P_{m,
   {t_m}}(\vth)|\}.\label{eqn::realC27}
\end{eqnarray} 
Then for arbitrary $\rho\in\weirdS$, $O\in\weMV$, we have the following computational properties regarding the observable semantics:
 \begin{eqnarray}
    &&\ObsSemMacro{O}{\parthetagenNotRelation{\udd{S_0(\vth);S_1(\vth)}}}{\rho}\nonumber\\&=&\ObsSemMacro{O}{\parthetagenNotRelation{\udd{S_0(\vth)}};\udd{S_1(\vth)}}{\rho}\nonumber\\
    &&+\ObsSemMacro{O}{\udd{S_0(\vth)};\parthetagenNotRelation{\udd{S_1(\vth)}}}{\rho},\label{eqn::C27}\\
    && \ObsSemMacro{O} {\parthetagenNotRelation{\uqifParaStandardS}}{\rho}\nonumber\\ &=& \sum_{m}\sum_{i_m }\ObsSemMacro{O}{
     P_{m,i_m}(\vth)}{M_m\rho M\dagg_m},
    \nonumber \\&&\textrm{ where }i_m\textrm{ runs through }[0,t_m]; \label{eqn::C28}
    \end{eqnarray}
    
     \begin{eqnarray}
      && \ObsSemMacro{O}{ \BPp{ \parthetagenNotRelation{\udd{S_0(\vth)\msquare S_1(\vth)}}}}{\rho}\nonumber \\ &=&\ObsSemMacro{O}{\parthetagenNotRelation{\udd{S_0(\vth)}}}{\rho} +\\
      && \ObsSemMacro{O}{\parthetagenNotRelation{\udd{S_1(\vth)}}}{\rho}\label{eqn::C29}
     \end{eqnarray}
     \begin{eqnarray}
        && {\ObsSemMacro{O}{{ \udd{S_0(\vth)\msquare S_1(\vth)}}}{\rho}}\nonumber\\&=&{\ObsSemMacro{O}{{ \udd{S_0(\vth)}}}{\rho}} +{\ObsSemMacro{O}{ { \udd{ S_1(\vth)}}}{\rho}},\label{eqn::C30}
     \end{eqnarray}
  and 
    \begin{eqnarray}
      &&\ObsSemMacro{O}{ {\qif{M[\overline{q}]=\overline{m\to \udd{S_m(\vth)}}}}}{\rho}\nonumber  \\&=&\sum_m\sum_{j_m\in[0,s_m]}\ObsSemMacro{O}{{Q_{m,j_m}(\vth)}}{\E_m\rho} \label{eqn::C31}
    \end{eqnarray}
    
\end{lemma}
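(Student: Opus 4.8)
The plan is to reduce all five identities to two atomic facts together with the linearity of the trace. Fact (i): by Definition~\ref{defn::ObsSemInput}, the observable semantics of an \Macroadditive program is the \emph{sum}, over the normal programs in its compilation, of their observable semantics. Fact (ii): an aborting branch contributes nothing, since $\sem{\cabort[\VE]}$ sends every state to $\mathbf{0}$, whence $\ObsSemMacro{O}{\cabort[\VE]}{\rho}=\tr(O\,\mathbf{0})=0$. With these in hand I would prove \eqref{eqn::C30} and \eqref{eqn::C31} directly from the compilation rules of Figure~\ref{fig::compileReal}, and then obtain \eqref{eqn::C29}, \eqref{eqn::C27}, and \eqref{eqn::C28} as one-line consequences of the code-transformation rules of Figure~\ref{fig:CT}.

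For \eqref{eqn::C30} I unfold the left side as $\sum_{P\in\compl{(\udd{S_0(\vth)\msquare S_1(\vth)})}}\ObsSemMacro{O}{P}{\rho}$ and case-split on the (Sum) rule. When neither component compiles to $\{|\cabort|\}$, the compilation is the multiset union $\compl{(\udd{S_0(\vth)})}\coprod\compl{(\udd{S_1(\vth)})}$, so the sum splits into the two desired pieces by Fact (i); when, say, $\compl{(\udd{S_1(\vth)})}=\{|\cabort|\}$, only $\compl{(\udd{S_0(\vth)})}$ survives, but the dropped side equals $\ObsSemMacro{O}{\cabort}{\rho}=0$ by Fact (ii), so the identity still holds, and the both-abort case is identical. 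From here, \eqref{eqn::C29} is immediate: the (S-C) rule gives $\parthetagenNotRelation{\udd{S_0(\vth)}\msquare\udd{S_1(\vth)}}\equiv\parthetagenNotRelation{\udd{S_0(\vth)}}\msquare\parthetagenNotRelation{\udd{S_1(\vth)}}$, again a box of two \Macroadditive programs, so \eqref{eqn::C30} applies verbatim. Similarly \eqref{eqn::C27} follows from the (Sequence) rule $\parthetagenNotRelation{\udd{S_0(\vth);S_1(\vth)}}\equiv(\udd{S_0(\vth)};\parthetagenNotRelation{\udd{S_1(\vth)}})\msquare(\parthetagenNotRelation{\udd{S_0(\vth)}};\udd{S_1(\vth)})$ followed by one application of \eqref{eqn::C30}.

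The substantive step is \eqref{eqn::C31}. Using the (Case~$m$) compilation rule I write the left side as $\sum_{P\in\Facebook{\udd{\ccase}}}\ObsSemMacro{O}{P}{\rho}$, where by the ``fill and break'' construction of Figure~\ref{RTAT} the set $\Facebook{\udd{\ccase}}$ is $\{|\qif{M[\qvec{q}]=\overline{m\to Q_{m,j^*}}}|\}_{j^*}$ for $0\le j^*\le |C_0|-1$, each branch $C_m$ having been padded with $\cabort$ up to length $|C_0|$. For a fixed $j^*$ the denotational semantics of the case statement (Figure~\ref{fig:desem:pardet}) gives $\sem{\qif{M[\qvec{q}]=\overline{m\to Q_{m,j^*}}}}\rho=\sum_m\sem{Q_{m,j^*}}\E_m(\rho)$, so linearity of the trace yields $\ObsSemMacro{O}{\qif{M[\qvec{q}]=\overline{m\to Q_{m,j^*}}}}{\rho}=\sum_m\ObsSemMacro{O}{Q_{m,j^*}}{\E_m\rho}$. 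Summing over $j^*$ and exchanging the two finite sums gives $\sum_m\sum_{j^*=0}^{|C_0|-1}\ObsSemMacro{O}{Q_{m,j^*}}{\E_m\rho}$; the padding entries with $j^*>s_m$ are $\cabort$ and vanish by Fact (ii), leaving exactly $\sum_m\sum_{j_m\in[0,s_m]}\ObsSemMacro{O}{Q_{m,j_m}}{\E_m\rho}$.

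Finally, \eqref{eqn::C28} is just \eqref{eqn::C31} applied to the \Macroadditive case produced by the (Case) rule of Figure~\ref{fig:CT}, i.e.\ with each branch $\udd{S_m(\vth)}$ replaced by $\parthetagenNotRelation{\udd{S_m(\vth)}}$, whose compilation is $\{|P_{m,0},\dots,P_{m,t_m}|\}$ of \eqref{eqn::realC27}. The main obstacle I anticipate is the bookkeeping in \eqref{eqn::C31}: I must check that the padding-with-$\cabort$ and the WLOG ordering $|C_0|\ge\cdots\ge|C_w|$ of the fill-and-break procedure align with the index ranges $[0,s_m]$ on the right-hand side, and that the degenerate cases (some $\compl{(\udd{S_m(\vth)})}=\{|\cabort|\}$, or all $C_m$ empty so that $\Facebook{\udd{\ccase}}=\{|\cabort[\VE]|\}$) are absorbed by the vanishing of aborting terms. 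Everything else is definitional once Fact (ii) and linearity of the trace are in place.
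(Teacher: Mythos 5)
Your proposal is correct and follows essentially the same route as the paper: unfold everything through the code-transformation and compilation rules, then use linearity of the trace and the vanishing of aborting branches (your deriving \eqref{eqn::C27} and \eqref{eqn::C29} as corollaries of \eqref{eqn::C30} is a minor streamlining of the paper's direct unfolding, not a different method). The bookkeeping worry you flag for \eqref{eqn::C31} is resolved by the observation in the caption of Figure~\ref{fig::compileReal} that each $\compl{(\udd{S_m(\vth)})}$ is either exactly $\{|\cabort|\}$ or contains no essentially aborting programs, so $C_m$ coincides with $\compl{(\udd{S_m(\vth)})}$ in the non-degenerate case and the index ranges $[0,s_m]$ align as you need.
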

\begin{proof}

We first 
make some direct application of the code transformation (hereinafter ``CT'', Fig \ref{fig:CT}) and compilation (hereinafter ``CP'', Fig \ref{fig::compileReal}) rules; some results will immediately follow from these application.
\begin{eqnarray}
&&\compl{\Big{(}\parthetagenNotRelation{\udd{S_0(\vth);S_1(\vth)}}\Big{)}}\nonumber\\ &\stackrel{\textrm{CT,Sequence}}{=}&\texttt{Compile}{(}({\udd{S_0(\vth)};\parthetagenNotRelation{\udd{S_1(\vth)}}})\ \msquare \nonumber\\&& (\parthetagenNotRelation{\udd{S_0(\vth)}};\udd{S_1(\vth)}){)}\nonumber\\
&\stackrel{\textrm{CP,Sum Components}}{=}&\compl{\Big{(}{\udd{S_0(\vth)};\parthetagenNotRelation{\udd{S_1(\vth)}}}\Big{)}}\\
&&\coprod \compl{\Big{(} {\parthetagenNotRelation{\udd{S_0(\vth)}};\udd{S_1(\vth)}}\Big{)}}\nonumber\\
&\stackrel{\textrm{CP,Sequence}}{=}&\{|Q_{0,g}(\vth);P_{1,h}(\vth)|\}_{g\in [s_0],h\in  [t_1]}\label{eqn::CPSeq1}
     \\
       &&\coprod\nonumber\\&& \{|P_{0,i}(\vth);Q_{1,j}(\vth)|\}_{i\in [t_0],j\in  [s_1]};\label{eqn::CPSeq2}
\end{eqnarray}
, where $[s_0],[s_1],[t_0],[t_1]$ stand for e.g. $[0,s_0],[0,s_1],[0,t_0],[0,t_1]$ respectively; and 
\begin{eqnarray}
&& \compl{\Big{(}\parthetagenNotRelation{\uqifParaStandardS}\Big{)}}\nonumber \\ &\stackrel{\textrm{CT,Case}}{=}& \compl{\Big{(}\udd{\ccase}\ M[
         \qvec{q}] =  \overline{\ m \to \parthetagenNotRelation{\udd{S_m(\vth)}}}~\mathbf{end}\Big{)}} \nonumber\\
          &\stackrel{\textrm{CP,Case}}{=}& \FacebookBPp{\udd{\ccase}\ M[
         \qvec{q}] =  \overline{\ m \to \parthetagenNotRelation{\udd{S_m(\vth)}}}~\mathbf{end}};
   \nonumber
\end{eqnarray}
also,
\begin{eqnarray}
&&\compl{\BPp{ \parthetagenNotRelation{S_0(\vth)\msquare S_1(\vth)}}}\\&\stackrel{\textrm{CT,Sum Component}}{=}&\compl{\BPp{\parthetagenNotRelation{S_0(\vth)}\msquare \parthetagenNotRelation{S_1(\vth)} }}\nonumber\\&&\label{eqn::C4000} \\
&\stackrel{\textrm{CP,Sum Component}}{=}&\{|P_{0,0}(\vth),\cdots, P_{0,t_0}|\}
 \\
      &&
      \coprod  \{|P_{1,0}(\vth),\cdots, P_{1,t_1}|\}\emph{. }\label{eqn::C4010}
\end{eqnarray}

As promised, Eqns \ref{eqn::C27} and \ref{eqn::C29} immediately follows from the results above together with the corresponding definitions. So is Eqn \ref{eqn::C30}, following the same lines of unfolding by CP-(ND component) laid out in Eqns \ref{eqn::C4000}$\sim$ \ref{eqn::C4010} then a direct pattern matching by definition. To obtain \ref{eqn::C28}, observe that 
\begin{eqnarray}
&&\ObsSemMacro{O} {\parthetagenNotRelation{\uqifParaStandardS}}{\rho}\nonumber \\
&=&\sum_{Q(\vth)\in\Facebook{\udd{\ccase}\ M[
         \qvec{q}] =  \overline{\ m \to \parthetagenNotRelation{\udd{S_m(\vth)}}}~\mathbf{end} } }\\&&
         \ObsSemMacro{O} {Q(\vth)}{\rho}\\
         &=&\sum_{Q(\vth)\in\Facebook{\udd{\ccase}\ M[
         \qvec{q}] =  \overline{\ m \to \parthetagenNotRelation{\udd{S_m(\vth)}}}~\mathbf{end} } }\\&&\tr((Z_A\otimes O)\sem{Q(\vth)}(\ket{0}_A\bra{0}\otimes\rho) )\nonumber\\
\\&\stackrel{A\notin \qvec{q}\subseteq\VE
}{=}&\sum_{m^*}\sum_{i_{m^*}\in[0,t_{m^*}]}\tr\Big{(}(Z_A\otimes O)\cdot \nonumber\\
&&( \sem{P_{m^*,i_{m^*}}(\vth)}((\ket{0}_A\bra{0})\otimes M_{m^*}\rho M\dagg_{m^*}))\Big{)}\label{eqn::C40}\\
&=&\sum_{m^*}\sum_{i_{m^*}}\ObsSemMacro{O}{
     P_{m^*,i_{m^*}}(\vth)}{M_{m^*}\rho M\dagg_{m^*}}\textrm{ as desired. }
     \nonumber
\end{eqnarray}

A few more words on Eqn \ref{eqn::C40}. Since $A$ is disjoint from $\VE\supseteq \qvec{q}$, for each $Q(\vth)\in \Facebook{\udd{\ccase}\ M[
         \qvec{q}] =  \overline{\ m \to \parthetagenNotRelation{\udd{S_m(\vth)}}}~\mathbf{end}}$, evolving one step per the operational semantics will lead to the configuration $\lag \sem{P_{m^*,i_{m^*}}(\vth)},((\ket{0}_A\bra{0})\otimes M_{m^*}\rho M\dagg_{m^*})\rag$ for some non-(essentially-aborting) $P_{m^*,i_{m^*}}(\vth)$. Besides, each such configuration appears exactly once 
         affording one summand, due to construction of $\Facebook{\bullet}$. On the other hand, if $P_{m^*,i_{m^*}}(\vth)$ essentially aborts, the trace computed as a summand vanishes, making no contribution to the sum, thereby maintaining the equation.   
         
Eqn \ref{eqn::C31} is proved using the same lines of logic:
\begin{eqnarray}
&&\ObsSemMacro{O}{ {\qif{M[\overline{q}]=\overline{m\to \udd{S_m(\vth)}}}}}{\rho}
\\ &=&\sum_{Q(\vth)\in\Facebook{\udd{\ccase}\ M[
         \qvec{q}] =  \overline{\ m \to {\udd{S_m(\vth)}}}~\mathbf{end} } }
         \ObsSemMacro{O}{Q(\vth)}{\rho}\nonumber\\
         &=&\sum_{Q(\vth)\in\Facebook{\udd{\ccase}\ M[
         \qvec{q}] =  \overline{\ m \to {\udd{S_m(\vth)}}}~\mathbf{end} } }\tr( O\sem{Q(\vth)}\rho )\nonumber\\
\\&\stackrel{A\notin\qvec{q}\subseteq\VE
}{=}&\sum_{m^*}\sum_{j_{m^*}\in[0,s_{m^*}]}\tr(O\sem{Q_{m^*,j_{m^*}}(\vth)} M_{m^*}\rho M\dagg_{m^*})\nonumber\\
&=&\sum_{m^*}\sum_{j_{m^*}\in[0,s_{m^*}]}\ObsSemMacro{O}{Q_{m^*,j_{m^*}}(\vth)}{\E_{m^*}\rho}
     \nonumber
\end{eqnarray}

\textrm{ as desired. }
\end{proof}

\begin{lemma}\label{lem::MoreHelperSequential} Keeping the notations in Lemma \ref{lem::constructingCompl}, we have $\forall\vth^*\in\RR^k, O\in\weMV, \rho\in\weirdS$, 
\begin{eqnarray}
    &&\left.\Big{(}\parthetagenNotRelation{\ObsSemMacro{O}{\udd{S_0(\vth);S_1(\vth)}}{\rho}}\Big{)}\right\vert_{\vth^*}\nonumber\\&=& \sum_{g\in[0,s_0]} \left.\Big{(}\parthetagenNotRelation{\ObsSemMacro {O}{\udd{S_1(\vth)}}{\sem{Q_{0,g}(\vth^*)}\rho}}\Big{)}\right\vert_{\vth^*}\label{eqn::620}\\
&&+\sum_{j\in[0,s_1]} \left.\Big{(}\parthetagenNotRelation{\ObsSemMacro{\sem{Q_{1,j}(\vth^*)}^*(O)}{\udd{S_0(\vth)}}{\rho}}\Big{)}\right\vert_{\vth^*}\nonumber\\&&\label{eqn::621}
\end{eqnarray}

\end{lemma}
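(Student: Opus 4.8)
The plan is to reduce the claim to the ordinary product rule of calculus applied to each compiled pair $Q_{0,g}(\vth);Q_{1,j}(\vth)$, followed by a regrouping that uses the additivity of the observable semantics (\eq{InputSpaceSemNonDet}) together with Schr\"odinger--Heisenberg duality. First I would expand the left-hand side. By the compilation Sequence rule (Fig.~\ref{fig::compileReal}), in the generic non-aborting case $\compl{(\udd{S_0(\vth);S_1(\vth)})} = \{|Q_{0,g}(\vth);Q_{1,j}(\vth)|\}_{g\in[0,s_0],\,j\in[0,s_1]}$ in the notation of \eq{C26}, so by \eq{InputSpaceSemNonDet},
\begin{equation*}
\ObsSemMacro{O}{\udd{S_0(\vth);S_1(\vth)}}{\rho}(\vth) = \sum_{g,j} \tr\big( O\, \sem{Q_{1,j}(\vth)}\,\sem{Q_{0,g}(\vth)}\,\rho \big).
\end{equation*}
Because each $\sem{Q(\vth)}$ is entry-wise smooth in $\vth$ (the denotational semantics is assembled from finite products and sums of smooth unitary entries), $\parthetagenNotRelationNopar$ commutes with the finite sum and the product rule applies term by term.

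Next I would differentiate $F_{g,j}(\vth):=\tr\big(O\,\sem{Q_{1,j}(\vth)}\,\sem{Q_{0,g}(\vth)}\,\rho\big)$. The product rule splits $\left.\parthetagenNotRelationNopar F_{g,j}\right|_{\vth^*}$ into a term $(II)$ in which $\parthetagenNotRelationNopar$ hits $\sem{Q_{1,j}}$ and a term $(I)$ in which it hits $\sem{Q_{0,g}}$. In $(II)$ the state $\sigma_g:=\sem{Q_{0,g}(\vth^*)}\rho$ is constant, so $(II)=\left.\parthetagenNotRelationNopar \tr\big(O\,\sem{Q_{1,j}(\vth)}\,\sigma_g\big)\right|_{\vth^*}$. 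Summing $(II)$ first over $j$ reassembles $\ObsSemMacro{O}{\udd{S_1(\vth)}}{\sigma_g}$ by \eq{InputSpaceSemNonDet}, and then summing over $g$ gives exactly the first sum \eq{620} of the claim, with the frozen input states $\sem{Q_{0,g}(\vth^*)}\rho$.

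For $(I)$ set $\tau_g:=\left.\parthetagenNotRelationNopar\big(\sem{Q_{0,g}(\vth)}\,\rho\big)\right|_{\vth^*}$, a fixed (and in general non-positive) operator, so that $(I)=\tr\big(O\,\sem{Q_{1,j}(\vth^*)}\,\tau_g\big)$. Here I would invoke the defining identity of the dual superoperator, $\tr\big(A\,\E(\xi)\big)=\tr\big(\E^*(A)\,\xi\big)$, to move $\sem{Q_{1,j}(\vth^*)}$ onto $O$, obtaining $(I)=\tr\big(\sem{Q_{1,j}(\vth^*)}^*(O)\,\tau_g\big)$. With $O_j:=\sem{Q_{1,j}(\vth^*)}^*(O)$ now a constant observable, this equals $\left.\parthetagenNotRelationNopar \tr\big(O_j\,\sem{Q_{0,g}(\vth)}\,\rho\big)\right|_{\vth^*}$; summing over $g$ reassembles $\ObsSemMacro{O_j}{\udd{S_0(\vth)}}{\rho}$ and summing over $j$ produces the second sum \eq{621}. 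Adding the two groups of terms recovers $\left.\parthetagenNotRelationNopar \ObsSemMacro{O}{\udd{S_0(\vth);S_1(\vth)}}{\rho}\right|_{\vth^*}$, completing the argument.

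The main difficulty is bookkeeping rather than conceptual: one must track carefully which quantities are frozen at $\vth^*$ and which still vary with $\vth$, so that the single product rule yields precisely the two half-frozen contributions on the right-hand side (the input state in \eq{620} and the dual-transformed observable in \eq{621}). The one genuinely delicate point is the extension of the duality $\tr(\E^*(A)\,\xi)=\tr(A\,\E(\xi))$ to the non-density operator $\tau_g$; this is legitimate because the identity is linear in $\xi$ and the Kraus form $\E^*=\sum_k E_k^\dagger\circ E_k$ gives the equality for every operator, positive or not. The smoothness that licenses interchanging $\parthetagenNotRelationNopar$ with the finite sums and applying the product rule is guaranteed by the entry-wise smoothness assumption on parameterized unitaries.
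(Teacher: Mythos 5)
Your proposal is correct and follows essentially the same route as the paper's own proof: expand the left-hand side over the compiled pairs $Q_{0,g}(\vth);Q_{1,j}(\vth)$ via the Sequence compilation rule, apply the product rule to each trace to get one term with the input state frozen at $\sem{Q_{0,g}(\vth^*)}\rho$ and one term with $\sem{Q_{1,j}(\vth^*)}$ moved onto $O$ by Schr\"odinger--Heisenberg duality, then regroup the double sum using the additive observable semantics. The only difference is one of granularity: the paper justifies the per-pair product rule by writing out explicit parameterized Kraus decompositions of $Q_{0,g}$ and $Q_{1,j}$ and differentiating entry-wise, whereas you assert smoothness at the superoperator level and note the linearity argument that extends duality to the non-positive operator $\tau_g$ --- both of which the paper's footnoted Kraus-decomposition argument makes precise.
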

\begin{proof} 
Observe:
\begin{eqnarray}
&&\textrm{LHS}(\ref{eqn::620})\nonumber\\&\stackrel{\textrm{CP},\textrm{Sequence}}{=}& \sum_{g\in[0,s_0],j\in [0,s_1]}\nonumber\\
&&\left.\Big{(}\parthetagenNotRelation{\ObsSemMacro{O}{Q_{0,g}(\vth);Q_{1,j}(\vth)}{\rho}}\Big{)}\right\vert_{\vth^*}\nonumber\label{eqn::622}\\
&\stackrel{(***), \textrm{See Below 
}}{=}&\sum_{g\in[0,s_0],j\in [0,s_1]}\nonumber\\
&&\left.\Big{(}\parthetagenNotRelation{\ObsSemMacro{O}{Q_{1,j}(\vth)}{\sem{Q_{0,g}(\vth^*)}\rho}}\Big{)}\right\vert_{\vth^*}\nonumber\\&&\label{eqn::623}\\
&&+\sum_{g\in[0,s_0],j\in [0,s_1]}\nonumber\\
&&\left.\Big{(}\parthetagenNotRelation{\ObsSemMacro{\sem{Q_{1,j}(\vth)}^*(O)}{Q_{0,g}(\vth)}{\rho}}\Big{)}\right\vert_{\vth^*}\nonumber\\&&\label{eqn::624}\\
&{=}&\sum_{g\in[0,s_0]}\nonumber\\
&&\left.\Big{(}\parthetagenNotRelation{\ObsSemMacro {O}{\udd{S_1(\vth)}}{\sem{Q_{0,g}(\vth^*)}\rho}}\Big{)}\right\vert_{\vth^*}\label{eqn::6200}\\
&&+\sum_{j\in[0,s_1]}\nonumber\\&& \left.\Big{(}\parthetagenNotRelation{\ObsSemMacro{\sem{Q_{1,j}(\vth^*)}^*(O)}{\udd{S_0(\vth)}}{\rho}}\Big{)}\right\vert_{\vth^*}\nonumber\\&&\label{eqn::62005}
\end{eqnarray}

Where the last step is direct application of definition of input-space observable semantics. For the step (***) 
claiming RHS(\ref{eqn::622})$=$ RHS(\ref{eqn::623})$+$ RHS(\ref{eqn::624}), 
observe that it boils down to the following: for arbitrary $(g^*,j^*)\in [0,s_0]\times [0,s_1]$, 
\begin{eqnarray}
\left.\Big{(}\parthetagenNotRelation{\ObsSemMacro{O}{Q_{0,g^*}(\vth);Q_{1,j^*}(\vth)}{\rho}}\Big{)}\right\vert_{\vth^*}\label{appd::622}
\end{eqnarray}
\begin{eqnarray}
&{=}&\left.\Big{(}\parthetagenNotRelation{\ObsSemMacro{O}{Q_{1,j^*}(\vth)}{\sem{Q_{0,g^*}(\vth^*)}\rho}}\Big{)}\right\vert_{\vth^*}\label{appd::623}
\end{eqnarray}
\begin{eqnarray}
&&+ \left.\Big{(}\parthetagenNotRelation{\ObsSemMacro{\sem{Q_{1,j^*}(\vth)}^*(O)}{Q_{0,g^*}(\vth)}{\rho}}\Big{)}\right\vert_{\vth^*}\label{appd::624}
\end{eqnarray}
We argue the correctness of \ref{appd::622}$\sim$\ref{appd::624} below.  
Simplifying notations, write $Q_{0,g^*}(\vth),Q_{1,j^*}(\vth)$ as $Q_0(\vth),Q_1(\vth)$ respectively, then let
 
\begin{eqnarray}
Q_0(\vth)&\equiv&\sum_{x} K_x(\vth)\circ K_x\dagg(\vth),\\
Q_1(\vth)&\equiv &\sum_{y} J_y(\vth)\circ J_y\dagg(\vth)
\end{eqnarray}
denote the kraus operator decomposition of $Q_0(\vth),Q_1(\vth)$, each with finitely many summands.\footnote{This is doable because for each $\vth^*\in\RR^k$ we have 
\begin{eqnarray}
Q_0(\vth^*)&=&\sum_{k=1} K_x(\vth^*)\circ K_x\dagg(\vth^*),\\
Q_1(\vth^*)&=&\sum_{q=1} J_y(\vth^*)\circ J_y\dagg(\vth^*).
\end{eqnarray}
Since parameterized unitaries have a parameterized-matrix representation, 
the superoperator $\calU(\vth)=U(\vth))\circ U\dagg(\vth)$ for any parameterized unitaries also have a parameterized operator representation. In our parameterized quantum program syntax only unitaries 
are parameterized, thus the kraus operators $K_x(\vth),J_y(\vth)$'s can be chosen uniformly. If one 
insists one may perform a standard structural induction to show the existence of such a parameterized Kraus operator decomposition for any $Q(\vth)\in\augweirdCPV$. For example, assuming each $S_m(\vth)$ decomposes to $\sum_{t_{m,j}=t_{m,1}}^{t_{m,j_m}}K_{t_{m,j}}(\vth)\circ K\dagg_{t_{m,j}}(\vth)$, then $\ccase\ M[\qvec{q}] =  \ m \to S_{m}
(\vth)$ decomposes into $$\sum_m  (\sum_{t_{m,j}}K_{t_{m,j}}(\vth)\circ K\dagg_{t_{m,j}}(\vth))\circ \mathcal{E}_m.$$ 
\\As in the unitary case, 
the linear Kraus operators are 
entry-wisesmooth, ensured by the entry-wise smoothness of 
matrix representations of parameterized unitaries.}

Then,
\begin{eqnarray}
&&\left.\Big{(}\parthetagenNotRelation{\ObsSemMacro{O}{Q_{g}(\vth);Q_{j}(\vth)}{\rho}}\Big{)}\right\vert_{\vth^*}\\&=&\left.\parthetaj{\tr(O\sem{Q_0(\vth);Q_1(\vth)}\rho)}\right\vert_{\vth^*}
\end{eqnarray}
\begin{eqnarray}
&=&\left.\parthetaj{\tr(O \sum_y J_y(\vth)(\sum_x K_x(\vth)\rho K\dagg_x(\vth))J\dagg_y(\vth))}\right\vert_{\vth^*}
\end{eqnarray}
\begin{eqnarray}
&=&\sum_{x,y}\left.\parthetaj{\tr(O J_y(\vth)K_x(\vth)\rho K\dagg_x(\vth)J\dagg_y(\vth))}\right\vert_{\vth^*}
\end{eqnarray}
\begin{eqnarray}
&\stackrel{(a)}{=}&\sum_y\left.\Big{[}(\tr(O \fparthetaj{J_y(\vth)}(\sum_x K_x(\vth^*)\rho K\dagg_x(\vth^*))J\dagg_y(\vth^*))\right.\nonumber
\end{eqnarray}
\begin{eqnarray}
&&\left.+ \tr(O {J_y(\vth^*)}(\sum_xK_x(\vth^*)\rho K\dagg_x(\vth^*))\fparthetaj{J\dagg_y(\vth)}))\Big{]}\right\vert_{\vth^*}\nonumber
\end{eqnarray}
\begin{eqnarray}
&&+\sum_y\Big{[}(\tr(O {J_y(\vth^*)}(\sum_x\fparthetaj{K_x(\vth)}\rho K\dagg_x(\vth^*))J\dagg_y(\vth^*))\nonumber
\end{eqnarray}
\begin{eqnarray}
&&+ \left.\tr(O {J_y(\vth^*)}(\sum_xK_x(\vth^*)\rho \fparthetaj{K\dagg_x(\vth)}){J\dagg_y(\vth^*)}))\Big{]}\right\vert_{\vth^*}\nonumber
\end{eqnarray}
\begin{eqnarray}
&\stackrel{(b)}{=}&\left.\parthetaj{\tr(O \sem{Q_1(\vth)}(
{\sem{Q_0(\vth^*)}\rho}))}\right\vert_{\vth^*}
\end{eqnarray}
\begin{eqnarray}
&& +\left. \parthetaj{\tr( {O\sem{Q_1(\vth^*)}}\sem{Q_0(\vth)}\rho)}\right\vert_{\vth^*}
\end{eqnarray}

\begin{eqnarray}
&\stackrel{\textrm{S-H Dual}}{=}&\left.\Big{(}\parthetajgenNotRelation{\ObsSemMacro{O}{Q_{1}(\vth)}{\sem{Q_{0}(\vth^*)}\rho}}\Big{)}\right\vert_{\vth^*}\label{eqn::62?3}
\end{eqnarray}
\begin{eqnarray}
&&+ \left.\Big{(}\parthetajgenNotRelation{\ObsSemMacro{ 
{\sem{Q_{1}(\vth^*)}^*(O)}}{Q_{0}(\vth)}{\rho}}\Big{)}\right\vert_{\vth^*}
\end{eqnarray}

where $\fparthetaj{\bullet}$ again denotes the entry-wise derivative, and ${(a),(b)}$ is obtained by unfolding the definition of trace, noticing that all entries of parameterized unitaries are smooth. S-H dual here is short for Schr\"{o}dinger-Heisenberg dual, as is everywhere else in this section.

\end{proof}
\subsection{Proof Details of Theorem \ref{thm::sound}: Soundness of Differentiation Logic}
\begin{proof} As stated in the main text, throughout we fix $O_A\equiv Z_A$. Unfolding by definition, it suffices to show 
\begin{eqnarray}
  && \forall O\in\weMV,\rho\in\weirdS, \ObsSemMacro{(O, Z_A)}{\parthetagenNotRelation{{\udd{S(\vth)}}}}{\rho}\nonumber\\ &= &
    \parthetagenNotRelation{\ObsSemMacro{O}
    {{\udd{S(\vth)}}}{\rho}}.\label{eqn::appd::s_diff_s}
\end{eqnarray}

Let $O\in\weMV$, $\rho\in\weirdS$ be arbitrary. We  consider each rule in \fig{reliability-sem} possibly used as the final step of the derivation.
\begin{enumerate}
    \item (Abort)$\sim$(Trivial Unitary): for $\udd{S(\vth)}\equiv \udd{\cabort[\VE]}$, $\udd{\cskip[\VE]}$ or $\udd{q:=\ket{0}}$, $\ObsSemMacro{O}{\udd{S(\vth)}}{\rho}$ is a constant, so $\parthetagenNotRelation{\ObsSemMacro{O}{\udd{S(\vth)}}{\rho}}=0$. On the other hand, $\parthetagenNotRelation{\udd{S(\vth)}}$ $\stackrel{\textrm{CT,trivialize}}{\equiv}$  $ \udd{\cabort[\VA]}$, meaning $\ObsSemMacro{O}{\parthetagenNotRelation{\udd{S(\vth)}}}{\rho}=\tr((Z_A\otimes O)\cdot\mathbf{0})=0$.
    
    In Trivial-Unitary, $\ObsSemMacro{O}{\udd{U(\vth)}}{\rho}$ doesn't depend on $\theta_j$ since $\theta_j\notin \vth$; hence $\parthetajgenNotRelation{\ObsSemMacro{O}{\udd{U(\vth)}}{\rho}}=0$, as faithfully 
    represented by the code transformation rule.
    \item (Unitary): Assume $\udd{S(\vth)}\equiv\udd{U(\theta)}$ with $U(\theta)\in \{R_\sigma(\theta),$  $R_{\sigma\otimes\sigma}(\theta)\}_{\sigma\in\{X,Y,Z\}}$. Let $\boldgreek{\sigma}$ range over $\{\sigma,\sigma\otimes\sigma\}_{\sigma\in X,Y,Z}$ (proof below works for all $6$ cases where $\boldgreek{\sigma}^2=I_{\bullet}$). Let $\entrideri U(\theta)$ denote the entry-wise derivative of $U(\theta)$, then recall from Lemma \ref{lem:UnitaryDerive} that 
    \begin{eqnarray}
    \entrideri U(\theta)&=&\frac{1}{2}U(\theta+\pi);\textrm{ and }\label{eqn::entriwise}\\
 && \parthetagenNotRelation{\ObsSemMacro{O}{{\udd{U(\theta)}}}{\rho} }\nonumber \\&=& \tr(O\cdot U(\theta)\cdot \rho\cdot (\entrideri U(\theta))\dagg)\nonumber\\&& +\tr(O\cdot \entrideri U(\theta)\cdot \rho\cdot U\dagg(\theta)). \label{eqn::uudag}\\
    &\stackrel{\ref{eqn::entriwise}}{=}&\frac{1}{2}\tr\bigg{(}O\Big{(}U(\theta)\rho U\dagg(\theta+\pi)\nonumber\\&&+U(\theta+\pi)\rho U\dagg(\theta)\Big{)}\bigg{)}\nonumber
    \end{eqnarray}
    Meanwhile,
    \begin{eqnarray}
        &&\ObsSemMacro{O}{\parthetagenNotRelation{\udd{U(\theta)}}}{\rho}\nonumber\\&\stackrel{\textrm{CT,1-qb,2-qb}}{=}&\tr((Z_A\otimes O)\sem{R\boldgreek{'}_{\boldgreek{\sigma}}(\theta)}\cdot  \nonumber\\&&((\ket{0}_A\bra{0})\otimes\rho))\label{eqn::C71}
    \end{eqnarray}
    Let us unfold the definition of $R\boldgreek{'}_{\boldgreek{\sigma}}(\theta)$:
    \begin{eqnarray}
  && \textrm{RHS}(\ref{eqn::C71}) \nonumber\\ &{=}&\frac{1}{2}\sum_{x,y\in\{0,1\}}\tr((Z_A\otimes O)\sem{C\_R_{\boldgreek\sigma}(\theta);H_A}\cdot\nonumber\\&&((\ket{x}_A\bra{y})\otimes\rho))\nonumber\\
    &{=}&\frac{1}{2}\tr\bigg{(}(Z_A\otimes O)\sem{H_A}\Big{(}\ket{0}_A\bra{0}\otimes U(\theta)\rho U\dagg(\theta) \nonumber\\
    &&+\ket{0}_A\bra{1}\otimes U(\theta)\rho U\dagg(\theta+\pi)\\
    &&+\ket{1}_A\bra{0}\otimes U(\theta+\pi)\rho U\dagg(\theta)\\
    &&+\ket{1}_A\bra{1}\otimes U(\theta+\pi)\rho U\dagg(\theta+\pi)\Big{)}\bigg{)}\\
    &=&\frac{1}{2}\tr\bigg{(}(Z_A\otimes O)\Big{(}\frac{\sum_{x,y}\ket{x}_A\bra{y}}{2}\otimes U(\theta)\rho U\dagg(\theta)\label{eqn::C75}\\
    && + \frac{\sum_x\ket{x}_A\bra{0}-\sum_x\ket{x}_A\bra{1}}{2}U(\theta)\rho U\dagg(\theta+\pi)\\
    && + \frac{\sum_x\ket{0}_A\bra{x}-\sum_x\ket{1}_A\bra{x}}{2}U(\theta+\pi)\rho U\dagg(\theta)\\
    &&+\frac{\sum_x\ket{x}_A\bra{x}-(\ket{1}_A\bra{0}+\ket{0}_A\bra{1})}{2}\cdot\nonumber\\&&U(\theta+\pi)\rho U\dagg(\theta+\pi)\Big{)}\bigg{)}\label{eqn::C78}
    \end{eqnarray}
    Let's regroup the terms of \ref{eqn::C75} $\sim$ \ref{eqn::C78}, bearing in mind that when 
    observing $Z_A\otimes O$ on the final state, the ``off-diagonal'' part is killed by $Z_A$.\footnote{
    Namely, the trace computation can be decomposed into a finite linear combination of the form $\tr((Z_A\otimes O)(\ket{b}_A\bra{b'}\otimes \bullet))$ ($b,b'\in\{0,1\}$), and this term vanishes whenever $b\neq b'$. } Hence, 
    \begin{eqnarray}
   &&\textrm{RHS}(\ref{eqn::C75} \sim \ref{eqn::C78})\\ &{=}&\frac{1}{4}\tr\Bigg{(}(Z_A\otimes O)\bigg{(}\Big{(}\ket{1}_A\bra{1}+\ket{0}_A\bra{0}\Big{)}\label{eqn::C79}\\
    &&
    \otimes \Big{(}U(\theta)\rho U\dagg(\theta)+U(\theta+\pi)\rho U\dagg(\theta+\pi)\Big{)}\\
    &&
    +\Big{(}\ket{0}_A\bra{0}-\ket{1}_A\bra{1}\Big{)}\\
    &&
    \otimes \Big{(}U(\theta)\rho U\dagg(\theta+\pi)+U(\theta+\pi)\rho U\dagg(\theta)\Big{)}\bigg{)}\Bigg{)}
    \label{eqn::C80}
    \end{eqnarray}
    Lastly, the behavior of $Z_A\otimes O$ ensures 
    \begin{equation}
       \textrm{RHS}(\ref{eqn::C79}\sim\ref{eqn::C80})  {=}\frac{1}{2}\tr\bigg{(}O\Big{(}U(\theta)\rho U\dagg(\theta+\pi)+U(\theta+\pi)\rho U\dagg(\pi)\Big{)}\bigg{)},\nonumber
    \end{equation}

   \textrm{as desired.} From this point, we again adopt the notations 
   from Lemma \ref{lem::constructingCompl} Eqns \ref{eqn::C26}, \ref{eqn::realC27}.
    \item (Sum Component)
\begin{eqnarray}
&& \ObsSemMacro{O}{ \BPp{ \parthetagenNotRelation{\udd{S_0(\vth)\msquare S_1(\vth)}}}}{\rho}\nonumber\\&\stackrel{\textrm{Lem }\ref{lem::constructingCompl}}{=}&\ObsSemMacro{O}{\parthetagenNotRelation{\udd{S_0(\vth)}}}{\rho}\\&& +\ObsSemMacro{O}{\parthetagenNotRelation{\udd{S_1(\vth)}}}{\rho}\\
 &\stackrel{\textrm{IH}}{=}& \parthetagenNotRelation{\ObsSemMacro{O}{\udd{S_0(\vth)}}{\rho} \\&&+ \ObsSemMacro{O}{\udd{S_1(\vth)}}{\rho}}\\
 &\stackrel{\textrm{Lem }\ref{lem::constructingCompl}}{=}&\parthetagenNotRelation{\ObsSemMacro{O}{\udd{S_0(\vth)}\msquare\udd{S_1(\vth)} }{\rho}},
\end{eqnarray}

\textrm{ as desired.}
\item (Sequence) Assume $\udd{S(\vth)}\equiv\udd{S_0(\vth)};\udd{S_1(\vth)} $; It suffices to show that $\forall\vth^*$,
\begin{equation}
    \ObsSemMacro{O}{\parthetagenNotRelation{\udd{S_0(\vth^*)};\udd{S_1(\vth^*)}}}{\rho} =\left.\Big{(}\parthetagenNotRelation{ \ObsSemMacro{O}{\udd{S_0(\vth);S_1(\vth)}}{\rho}}\Big{)}\right\vert_{\vth^*}\label{eqn::CompoToShow}
\end{equation}

Let us first manipulate the equation using some computational properties developped from the helper lemmas:
\begin{eqnarray} &&\textrm{LHS}(\ref{eqn::CompoToShow})\\&\stackrel{\textrm{Lemma }\ref{lem::constructingCompl}, \textrm{Eqn }\ref{eqn::C27}}{=}&\ObsSemMacro{O}{\parthetagenNotRelation{\udd{S_0(\vth^*)}};S_1(\vth^*)}{\rho}\nonumber\\
&&+\ObsSemMacro{O}{\udd{S_0(\vth^*)};\parthetagenNotRelation{S_1(\vth^*)}}{\rho}\nonumber
\end{eqnarray}
This equals, by \textrm{Lem }\ref{lem::constructingCompl}\textrm{ Eqns }\ref{eqn::CPSeq1},\ref{eqn::CPSeq2}, 
\begin{eqnarray}
&
&\sum_{g\in [0,s_0],h\in  [0,t_1]}\ObsSemMacro{O}{Q_{0,g}(\vth^*);P_{1,h}(\vth^*)}{\rho} \\&&+\sum_{{i\in [0,t_0],j\in  [0,s_1]}} \ObsSemMacro{O}{P_{0,i}(\vth^*);Q_{1,j}(\vth^*)}{\rho}
\end{eqnarray}

Via \textrm{Lem }\ref{lem::propertyAncillaObservableSemantics}, this translates to 
\begin{eqnarray}
&&\sum_{g\in [0,s_0],h\in  [0,t_1]}\ObsSemMacro{O}{P_{1,h}(\vth^*)}{\sem{Q_{0,g}(\vth^*)}\rho}\\
&&+ \sum_{{i\in [0,t_0],j\in  [0,s_1]}} \ObsSemMacro{\sem{Q_{1,j}(\vth^*)}^*(O)}{P_{0,i}(\vth^*)}{\rho}\nonumber\\
&=&\sum_{g\in s_0}\ObsSemMacro{O}{\parthetagenNotRelation{\udd{S_1(\vth^*)}}}{\sem{Q_{0,g}(\vth^*)}\rho}\label{eqn::C91}\\
&&+\sum_{j\in s_1}\ObsSemMacro{\sem{Q_{1,j}(\vth^*)}^*(O)}{\parthetagenNotRelation{\udd{S_0(\vth^*)}}}{\rho},\label{eqn::C92}
\end{eqnarray}
where the last step comes from definition of observable semantics with ancilla. Now apply the inductive hypothesis on $\udd{S_1(\vth)}$, a universal statement for all $(\rho',O')\in \weirdS\times \weMV$, to the instances $(\sem{Q_{0,g}(\vth^*)}\rho, O))$ $\ (\forall g\in [0,s_0])$, we have the first summand
\begin{eqnarray}
&&\textrm{RHS}(\ref{eqn::C91})\nonumber\\&{=}&\sum_{g} \left.\Big{(}\parthetagenNotRelation{\ObsSemMacro{O}{\udd{S_1(\vth)}}{\sem{Q_{0,g}(\vth^*)}\rho}}\Big{)}\right\vert_{\vth^*}\label{eqn::C93}
\end{eqnarray}
Likewise, apply the inductive hypothesis on $\udd{S_0(\vth)}$ 
to the instances $(\rho, \sem{Q_{1,j}(\vth^*)}^*O))\ (\forall j\in [0,s_1])$, we have the second summand
\begin{eqnarray}
&&\textrm{RHS}(\ref{eqn::C92})\nonumber\\
&{=}&\sum_{j} \left.\Big{(}\parthetagenNotRelation{\ObsSemMacro{ \sem{Q_{1,j}(\vth^*)}^*(O)}{\udd{S_0(\vth)}}{\rho}}\Big{)}\right\vert_{\vth^*}\label{eqn::C94}
\end{eqnarray}
Collecting everything, using again a techical lemma from above,
\begin{eqnarray}
&&\textrm{RHS}(\ref{eqn::C91})+\textrm{RHS}(\ref{eqn::C92})\nonumber\\&=&\textrm{RHS}(\ref{eqn::C93})+\textrm{RHS}(\ref{eqn::C94})\\
&\stackrel{\textrm{Lem }\ref{lem::MoreHelperSequential}}{=}&\textrm{RHS}(\ref{eqn::CompoToShow}),\textrm{ as desired.}
\end{eqnarray}
\item (Case) This follows again from the computational lemmas above, and applications of all inductive hypothesis. For conciseness, let us denote by $\textrm{IH}(m)\mapsto (M_m\rho M\dagg_m,O)$ the application of the $m$-th ($m\in [0,w]$) inductive hypothesis to the instance $(M_m\rho M\dagg_m,O)\in \weirdS\times \weMV$. Then, 
\begin{eqnarray}
&&\ObsSemMacro{O} {\parthetagenNotRelation{\uqifParaStandardS}}{\rho} \nonumber\\&\stackrel{\textrm{Lem }\ref{lem::constructingCompl}}{=}&\sum_{m}\sum_{i_m\in [0,t_m]}\ObsSemMacro{O}{
     P_{m,i_m}(\vth)}{M_m\rho M\dagg_m} 
     \\
     &=&\sum_m \ObsSemMacro{O}{\udd{\parthetagenNotRelation{S_m(\vth)}}}{M_m\rho M\dagg_m}
     \end{eqnarray}
     Applying the inductive hypothesis $\textrm{IH}(m)\mapsto (M_m\rho M\dagg_m,O),$ $\forall m$, we get that the above equals
     \begin{eqnarray}
     &&\sum_m\parthetagenNotRelation{\ObsSemMacro{O}{\udd{S_m(\vth)}}{(M_m\rho M\dagg_m)\rag}}\nonumber\\
     &=&\parthetagenNotRelation{\sum_m\sum_{j_m\in[0,s_m]}\ObsSemMacro{O}{Q_{m,j_m}(\vth)}{\E_m\rho\rag}}\nonumber\\
     &\stackrel{\textrm{Lem }\ref{lem::constructingCompl}}{=}&\parthetagenNotRelation{\ObsSemMacro{O}{\qif{M[\overline{q}]=\overline{m\to \udd{S_m(\vth)}}}}{\rho}}\nonumber
\end{eqnarray}

\textrm{, as desired.}
\item (While$^{(T)}$) This is verified by successive application of (Case) rule and (Sequence) rule $T$ times.

\end{enumerate}
\end{proof}
\section{Detailed Proofs from Section \ref{sec::resource}}
\subsection{Proof for Proposition \ref{prop::NOjisSize}: Upper Bound of $\NumNonAbort{\parthetajgenNotRelation{P(\vth)}}$ }\label{appd::NOjisSize}
\begin{proof}
\begin{enumerate}
    \item If $P(\vth)\equiv\cabort[\VE]|\cskip[\VE] | q:=\ket{0}$ ($q\in\VE$), then $\ResCount{j}{P(\vth)}=0=\NumNonAbort{\parthetajgenNotRelation{P(\vth)}}$ by definition. Similarly, if $P(\vth)\equiv U(\vth)$ trivially used $\theta_j$, no non-aborting programs exists in $\compl{(\parthetagenNotRelation{P(\vth)})}$, yielding $\NumNonAbort{\parthetajgenNotRelation{P(\vth)}}=0$; otherwise, $\NumNonAbort{\parthetajgenNotRelation{P(\vth)}}$ consists of a single $R'_{\boldgreek{\sigma}}(\vth)$, so both numbers are $1$.
    \item Assume $\NumNonAbort{\parthetajgenNotRelation{P_b(\vth)}} \leq \ResCount{j}{P_b(\vth)}$ for each $b\in\{1,2\}$ and $P(\vth)\equiv P_1(\vth);P_2(\vth)$. Then (denoting as above Inductive Hypothesis as ``IH''):
    \begin{eqnarray}
&&    \NumNonAbort{\parthetajgenNotRelation{P(\vth)}}\nonumber\\&\stackrel{(*)}{=}&\NumNonAbort{({P_1(\vth)};\parthetajgenNotRelation{P_2(\vth)})} \nonumber\\
&&+ \NumNonAbort{(\parthetajgenNotRelation{P_1(\vth)};P_2(\vth))}\\
    &\leq&\NumNonAbort{(\parthetajgenNotRelation{P_2(\vth)})}\nonumber\\&& + \NumNonAbort{(\parthetajgenNotRelation{P_1(\vth)})}\label{eqn::72}\\
    &\stackrel{\textrm{IH}}{\leq}& \ResCount{j}{P_1(\vth)} + \ResCount{j}{P_2(\vth)}\\
    &=&\ResCount{j}{P(\vth)}.
    \end{eqnarray}
    
    where step $(*)$ is via rules \textrm{CT,CP-(ND-Component)}.
   \item  Assume $\NumNonAbort{\parthetajgenNotRelation{P_m(\vth)}} \leq \ResCount{j}{P_m(\vth)}$ for each $m\in [0,w]$ and $P(\vth)\equiv \qifStandardPara$. Then by the CT and CP rules of $\ccase$, $\parthetajgenNotRelation{P(\vth)}$ compiles to 
   $\max_m 
   \NumNonAbort{\parthetajgenNotRelation{P_{m}(\vth)}}$ 
   programs. Assume $m^*\in [0,w]$ is s.t. $\NumNonAbort{\parthetajgenNotRelation{P_{m^*}(\vth)}}$ attains that maximum. Then, 
    \begin{eqnarray}
    \NumNonAbort{\parthetajgenNotRelation{P(\vth)}}&=&\NumNonAbort{\parthetajgenNotRelation{P_{m^*}(\vth)}}
    \\
    &\stackrel{\textrm{IH}}{\leq}& \ResCount{m^*}{P_{m^*}(\vth)}\\
    &\leq &\max_{m}\ResCount{m}{P_{m}(\vth)}\\
    &=& \ResCount{j}{P(\vth)}
    \end{eqnarray}
    
    ,\textrm{ as desired. }
  \item If $P(\vth)\equiv \qwhileTParaStandard$, it suffices to show $\NumNonAbort{\parthetagenNotRelation{P(\vth)}}\leq T\cdot \NumNonAbort{\parthetagenNotRelation{P_1(\vth)}}$. This is true for $T=1$ as $\parthetagenNotRelation{\rwhile^{(1)}}$ essentially aborts. 
  By induction if this is true for $T$, then for $T+1$ we have 
  \begin{eqnarray}
     &&\NumNonAbort{ \parthetagenNotRelation{\mathbf{while}^{(T+1)}~M[\overline{q}]=1~\mathbf{do}~P_1(\vth)~\mathbf{done}}}\\&\leq&\NumNonAbort{\parthetagenNotRelation{P_1(\vth);\mathbf{while}^{(T)}~M[\overline{q}]=1~\mathbf{do}~P_1(\vth)~\mathbf{done}}}\nonumber\\
     &\stackrel{\textrm{Eqn}\ \ref{eqn::72}}{\leq}&\NumNonAbort{\parthetagenNotRelation{P_1(\vth)}}\nonumber\\&&+\NumNonAbort{\parthetagenNotRelation{\mathbf{while}^{(T)}~M[\overline{q}]=1~\mathbf{do}~P_1(\vth)~\mathbf{done}}}\nonumber\\
     &\leq& (T+1)\NumNonAbort{\parthetagenNotRelation{P_1(\vth)}},\textrm{ as desired. }
  \end{eqnarray} 
\end{enumerate}
\end{proof}

\section{Evaluation Details}\label{appd::evalPen}
\subsection{Training VQC instances with controls}

For any parameter $\alpha$, we exhibit the final results of $\texttt{Compile} $  ${\newline ( \dfrac{\partial P_1(\Theta,\Phi)}{\partial\alpha })}$ and $\compl{( \dfrac{\partial P_2(\Theta,\Phi, \Psi)}{\partial\alpha })}$, with $P_1(\Theta,\Phi)$ and $P_2(\Theta,\Phi,\Psi)$ defined in the main text: 

\begin{enumerate}
  \item If $\alpha \in \Theta$: without loss of generality assume $\alpha = \theta_1$ (other situations are completely analogous). We denote:
  \begin{equation*}
      Q'(\Theta)\equiv R^{\boldgreek{'}}_X(\theta_1)\ket{A,q_1};\cdots;\\ R_Z(\theta_{12})\ket{q_{12}}
  \end{equation*}
  
  then 
  \begin{equation*}
    {\left. \begin{array}{rl} \compl{\newline ( \dfrac{\partial P_1(\Theta,\Phi)}{\partial\alpha })}
    \equiv &\{|\ Q'(\Theta); Q(\Phi)\ |\},\\ 
   \compl{ ( \dfrac{\partial P_2(\Theta,\Phi,\Psi)}{\partial\alpha })} &\equiv \{|\  Q'(\Theta);\\
     {\mathbf{case}}\ M[{q_1}]=
    0\to  & \enskip Q(\Phi)  
    \\
    1\to  &\enskip Q(\Psi) 
    \ |\}.
  \end{array}  \right. }
  \end{equation*}
  \item If $\alpha \in \Phi$: without loss of generality assume $\alpha = \phi_1$ (other situations are completely analogous). We denote:
  
  \begin{equation*}
      Q'(\Phi)\equiv R^{\boldgreek{'}}_X(\phi_1)\ket{A,q_1};\cdots;\\ R_Z(\phi_{12})\ket{q_{12}}
  \end{equation*}
  \begin{equation*}
    {\left. \begin{array}{rl} \compl{\newline ( \dfrac{\partial P_1(\Theta,\Phi)}{\partial\alpha })}
    \equiv &\{|\ Q(\Theta);Q'(\Phi)\ |\},\\ 
   \compl{ ( \dfrac{\partial P_2(\Theta,\Phi,\Psi)}{\partial\alpha })} &\equiv \{|\  
   Q(\Theta);\\
     {\mathbf{case}}\ M[{q_1}]=
    0\to  & \enskip Q'(\Phi) 
    \\
    1\to  &\enskip Q(\Psi) 
    \ |\}.
  \end{array}  \right. }
  \end{equation*}
  
 \item If $\alpha \in \Psi$: without loss of generality assume $\alpha = \psi_1$ (other situations are completely analogous). We denote:
  
  \begin{equation*}
      Q'(\Psi)\equiv R^{\boldgreek{'}}_X(\psi_1)\ket{A,q_1};\cdots;\\ R_Z(\psi_{12})\ket{q_{12}}
  \end{equation*}
  \begin{equation*}
    {\left. \begin{array}{rl} \compl{\newline ( \dfrac{\partial P_1(\Theta,\Phi)}{\partial\alpha })}
    \equiv &\{|\ 
    \cabort[A,q_1,\cdots, q_{12}]
    \ |\},\\ 
   \compl{ ( \dfrac{\partial P_2(\Theta,\Phi,\Psi)}{\partial\alpha })} &\equiv \{|\  
   Q(\Theta);\\
     {\mathbf{case}}\ M[{q_1}]=
    0\to  & \enskip Q(\Phi)
    \\
    1\to  &\enskip Q'(\Psi) 
    \ |\}.
  \end{array}  \right. }
  \end{equation*}
\end{enumerate}

\subsection{Benchmark testing on representative VQCs}

\begin{table}[t]
    \centering
\begin{tabular}{ |p{1.2cm}|%p{1.5cm}|
p{0.81cm}|p{0.91cm}|p{0.84cm}|p{0.87cm}|p{0.87cm}|p{0.87cm}|  }
 \hline
 $P(\vth)$     &
 $\ResCount{}{\cdot}$ &
 $\NumNonAbort{\parthetagenNotRelation{}}$
 &   $\numgate$ 
 & 
 $\numline$ &
 $\numlayer$
 &$\numqb$ \\
\hline
$\QNN_{S,b}$ &1	&        1	&20	&24	&	1&	4\\
\hline
$\QNN_{S,s}$ &5	 &       5	&20	&24	&	1&	4\\
\hline
$\QNN_{S,i}$ &10	&	10&	60&	67	&	2&	4\\
\hline
$\QNN_{S,w}$& 15	&	10&	60&	66	&	3&	4\\
\hline
$\QNN_{M,i}$& 24&	24	&165	&189	&	3	&18\\
\hline
$\QNN_{M,w}$ &56&	24&	231	&121	&	5&	18\\
\hline
$\QNN_{L,i}$&48 &	48&	363	&414	&	6&	36\\
\hline
$\QNN_{L,w}$& 504& 	48	& 2079& 	244	& 	33	&36\\
\hline 
$\VQE_{S,b}$ & 1&	        1	& 14 &	16 &		1&	2\\
\hline 
$\VQE_{S,s}$ & 2 &		2& 	14	& 16 & 		1& 	2\\
\hline 
$\VQE_{S,i}$& 4&		4&	28 &	38	&	2&	2\\
\hline 
$\VQE_{S,w}$& 6	&	4& 	42& 	32	&	3&	2\\
\hline 
$\VQE_{M,i}$ & 15 & 	15	& 224	& 241 &		3	& 12\\
\hline 
$\VQE_{M,w}$ & 35 & 	15 &	224 &	112	&	5	& 12\\
\hline 
$\VQE_{L,i}$ & 40	 & 40	& 576&	628	&	5	&40\\
\hline 
$\VQE_{L,w}$ & 248	& 40 & 1984	& 368	&	17	& 40\\
\hline 
$\QAOA_{S,b}$ & 1	&	1	& 12	& 15	&	1	& 3\\
\hline 
$\QAOA_{S,s}$ & 3	&	3&	12&	15	&	1	& 3	\\
\hline 
$\QAOA_{S,i}$ & 6	&	6	& 36	& 41	&	2	& 3\\
\hline 
$\QAOA_{S,w}$ & 9	&	6	& 36 &	29	&	3 &      3\\
\hline 
$\QAOA_{M,i}$ & 18& 	18& 	120&	142	&	3&	18\\
\hline 
$\QAOA_{M,w}$ & 42	& 18& 	168	& 94	&	5	& 18\\
\hline 
$\QAOA_{L,i}$ & 36	& 36 &	264	 & 315	& 	6 &	36\\
\hline 
$\QAOA_{L,w}$&  378	 & 36	& 1512	& 190	&	33 &	36\\
\hline 
\end{tabular}
\caption{Output on Test Examples. Note that: (1) $\{S,M,L\}$ stands for ``small, medium, large''; $\{b,s,i,w\}$ stands for ``basic, shared, if, while''; $\numline$ is the number of lines for input programs. (2) gate count and layer count for $T$-bounded while is done by multiplying the corresponding count for loop body with $T$. (3) for $*_{*, w}$ we have $\NumNonAbort{\parthetagenNotRelation{P(\vth)}}<\ResCount{}{P(\vth)}$ because differentiating the unrolled bounded while generates essentially aborting programs which are optimized out of the final multiset.}
    \label{tab:my_tabel}
\end{table}

In this section we introduce three examples from real-world quantum machine learning 
and quantum approximation algorithms, all of which are very promising candidates for 
implementation on near-term quantum devices. Without loss of generality, throughout this section $\theta$ denotes $\theta_1$.

Our first case, $\texttt{QNN}_{*}$, starts from a slightly simplified case from an actual quantum neural-network that has been implemented on ion-trap quantum machines~\cite{zhu2018training}. $\QAOA_{*}$ is a quantum algorithm that produces approximate solutions for combinatorial optimization problems, which is regarded as one of the most promising candidates for demonstrating quantum supremacy~\cite{QAOA,wang2018introductionQAOA}. Quantum eigensolvers, crucial to quantum phase estimation algorithms, usually requires fully coherent evolution. In \cite{NC-VQE} Peruzzo et al. introduced an alternative approach that relaxes the fully coherent evolution requirement combined with a
new approach to state preparation based on ans\"{atze} and classical optimization~\cite{NC-VQE,IBM-QE,PG18}, namely our third example $\VQE_{*}$.  The control flow creating multi-layer structures for the three families of examples are very similar, and these algorithms mainly differ from each other in their basic ``rotation-entanglement'' blocks. Therefore I will introduce the basic blocks for all three families, then use $\QNN_{*}$ as an example to illustrate how the control flow (if, bounded while) works.

The basic ``rotate-entangle'' building block for 
 $\QNN$ consists of a rotation stage and an entanglement stage -- we consider these two stages together a single layer:
in the rotation stage, one performs parameterized $Z$ rotations followed by parameterized $X$ rotations and then again parameterized $Z$ rotations on the first $4$ (small scale) or $6$ qubits (medium or large scale); in the entanglement stage, one performs the parameterized $X\otimes X$ rotation on all pairs from the first $4$ or $6$ qubits. See Figure \ref{fig::chennianlaodaimaaa4}. 

Basic block for $\VQE$ consists of three stages: the first stage is parameterized $X$ followed by parameterized $Z$; the second stage uses $H$ and CNOT to entangle; the third stage performs parameterized $Z,X,Z$ in that order. Basic block for $\QAOA$, on the other hand, entangles using H and CNOT in the first stage, and then performs parameterized $X$ rotations on the second stage. 

The more interesting part lies in building multiple layers 
using control flow, which we will 
explain using $\QNN_{*}$. The small scale if-controlled $\QNN$ (denoted as $\QNN_{S,i}$) is two-layered. Let $B$ denote the basic rotate-entangle block for $\QNN$ explained as above; $B', B''$ be two similarly-structured rotate-entangle blocks using different parameter-qubit combinations. 
$\QNN_{S,i}$ performs $B$ as the first layer, then measures on the first qubit, and performs $B'$ or $B''$ as the second layer dependent on the measurement output. For medium and large scale if-controlled $\QNN$ (i.e. $\QNN_{M,i}$ and $\QNN_{L,i}$) we have larger rotate-entangle blocks,  
various parameter-qubit combinations (hence more of the $B',B''$'s involving different parameter-qubit combinations), and more layers of measurement-based control.

(Bounded) while-controled $\QNN$ works similarly: take $\QNN_{S,w}$ as an example, it performs the rotate-entangle block $B$ as the first layer, then measure the first qubit; if it outputs $0$ we halt; otherwise we perform some $B'$, measure qubit $q1$ again, halts if outputs $0$, performs $B'$ the third time and aborts otherwise
. Note that this is just a verbose way of saying ``we wrapped $B'$ with a $2$-bounded while-loop''! And similarly, we build more layers on larger systems ($\QNN_{M,w}$, $\QNN_{L,w}$) using larger rotate-entangle blocks and more layers of bounded-while loops. One should note how bounded while is a succinct way to represent the circuits: as shown in Table \ref{tab:my_tabel}, in $\QNN_{L,w}$ we managed to represent $2079$ unitary gates in just $244$ lines of code. 
\begin{figure}[t]
\includegraphics[trim=0 0 0 0,clip,height=0.12
\textheight,width=
0.45\textwidth]{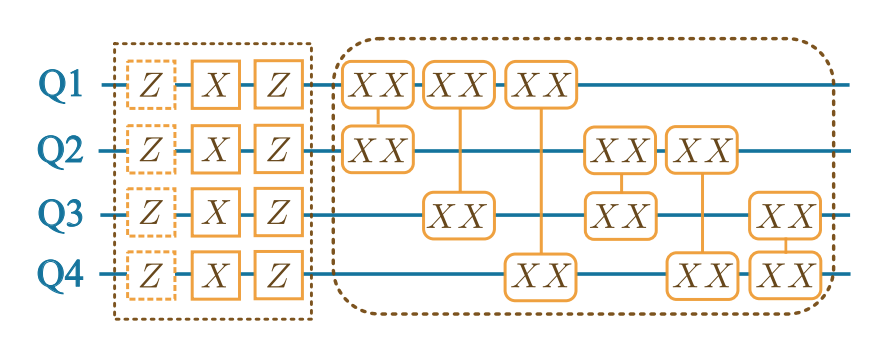}\caption{Circuit 
representation of the basic building block of QNN. Figure credit: ~\cite{zhu2018training}}\label{fig::chennianlaodaimaaa4}
\end{figure}

We auto-differentiated the three families of quantum programs using our code transformation (hereinafter ``CT'') and code compilation (hereinafter "CP") rules. 
As shown in Table \ref{tab:my_tabel}, the computation outputs the desired multi-set of derivative programs, and 
the number of non-aborting programs ($\NumNonAbort{\parthetagenNotRelation{P(\vth)}}$) agrees with the upper bound described in Proposition $\ref{prop::NOjisSize}$. It should be noted that Table \ref{tab:my_tabel} indicates our auto-differentiation scheme works well for variously sized input programs, be the size measured by code length, gate count, layer count or qubit count, to name a few.

\fi

\end{document}